\documentclass[english]{article}
\usepackage[T1]{fontenc}
\usepackage[latin9]{inputenc}
\usepackage{geometry}
\geometry{verbose,tmargin=3.5cm,bmargin=3.5cm,lmargin=2.5cm,rmargin=2.5cm}
\usepackage{color}
\usepackage{array}
\usepackage{float}
\usepackage{mathtools}
\usepackage{bm}
\usepackage{multirow}
\usepackage{amsmath}
\usepackage{amsthm}
\usepackage{amssymb}
\usepackage{tablefootnote}
\PassOptionsToPackage{normalem}{ulem}
\usepackage{ulem}

\makeatletter

\newcommand{\noun}[1]{\textsc{#1}}
\providecommand{\tabularnewline}{\\}
\floatstyle{ruled}
\newfloat{algorithm}{tbp}{loa}
\providecommand{\algorithmname}{Algorithm}
\floatname{algorithm}{\protect\algorithmname}

\theoremstyle{plain}
\newtheorem{thm}{\protect\theoremname}
\theoremstyle{plain}
\newtheorem{prop}[thm]{\protect\propositionname}
\theoremstyle{remark}
\newtheorem{rem}[thm]{\protect\remarkname}
\theoremstyle{definition}
\newtheorem{defn}[thm]{\protect\definitionname}
\theoremstyle{plain}
\newtheorem{lem}[thm]{\protect\lemmaname}
\theoremstyle{definition}
\newtheorem{example}[thm]{\protect\examplename}
\theoremstyle{plain}
\newtheorem{fact}[thm]{\protect\factname}

\usepackage{algorithmic}
\usepackage{qcircuit}
\usepackage{array}
\usepackage{tkz-graph}
\usepackage{placeins}

\makeatother

\usepackage{babel}
\providecommand{\definitionname}{Definition}
\providecommand{\examplename}{Example}
\providecommand{\factname}{Fact}
\providecommand{\lemmaname}{Lemma}
\providecommand{\propositionname}{Proposition}
\providecommand{\remarkname}{Remark}
\providecommand{\theoremname}{Theorem}

\begin{document}
\global\long\def\R{\mathbb{R}}%

\global\long\def\C{\mathbb{C}}%

\global\long\def\N{\mathbb{N}}%

\global\long\def\e{{\mathbf{e}}}%

\global\long\def\et#1{{\e(#1)}}%

\global\long\def\ef{{\mathbf{\et{\cdot}}}}%

\global\long\def\a{{\mathbf{a}}}%

\global\long\def\x{{\mathbf{x}}}%

\global\long\def\xt#1{{\x(#1)}}%

\global\long\def\xf{{\mathbf{\xt{\cdot}}}}%

\global\long\def\d{{\mathbf{d}}}%

\global\long\def\w{{\mathbf{w}}}%

\global\long\def\b{{\mathbf{b}}}%

\global\long\def\u{{\mathbf{u}}}%

\global\long\def\y{{\mathbf{y}}}%

\global\long\def\k{{\mathbf{k}}}%

\global\long\def\yt#1{{\y(#1)}}%

\global\long\def\yf{{\mathbf{\yt{\cdot}}}}%

\global\long\def\z{{\mathbf{z}}}%

\global\long\def\v{{\mathbf{v}}}%

\global\long\def\h{{\mathbf{h}}}%

\global\long\def\s{{\mathbf{s}}}%

\global\long\def\c{{\mathbf{c}}}%

\global\long\def\p{{\mathbf{p}}}%

\global\long\def\f{{\mathbf{f}}}%

\global\long\def\t{{\mathbf{t}}}%

\global\long\def\rb{{\mathbf{r}}}%

\global\long\def\rt#1{{\rb(#1)}}%

\global\long\def\rf{{\mathbf{\rt{\cdot}}}}%

\global\long\def\mat#1{{\ensuremath{\bm{\mathrm{#1}}}}}%

\global\long\def\matN{\ensuremath{{\bm{\mathrm{N}}}}}%

\global\long\def\matX{\ensuremath{{\bm{\mathrm{X}}}}}%

\global\long\def\matK{\ensuremath{{\bm{\mathrm{K}}}}}%

\global\long\def\matA{\ensuremath{{\bm{\mathrm{A}}}}}%

\global\long\def\matB{\ensuremath{{\bm{\mathrm{B}}}}}%

\global\long\def\matC{\ensuremath{{\bm{\mathrm{C}}}}}%

\global\long\def\matD{\ensuremath{{\bm{\mathrm{D}}}}}%

\global\long\def\matE{\ensuremath{{\bm{\mathrm{E}}}}}%

\global\long\def\matF{\ensuremath{{\bm{\mathrm{F}}}}}%

\global\long\def\matO{\ensuremath{{\bm{\mathrm{O}}}}}%

\global\long\def\matQ{\ensuremath{{\bm{\mathrm{Q}}}}}%

\global\long\def\matP{\ensuremath{{\bm{\mathrm{P}}}}}%

\global\long\def\matU{\ensuremath{{\bm{\mathrm{U}}}}}%

\global\long\def\matV{\ensuremath{{\bm{\mathrm{V}}}}}%

\global\long\def\matM{\ensuremath{{\bm{\mathrm{M}}}}}%

\global\long\def\matG{\ensuremath{{\bm{\mathrm{G}}}}}%

\global\long\def\calH{{\cal H}}%

\global\long\def\calS{{\cal S}}%

\global\long\def\calT{{\cal T}}%

\global\long\def\matR{\mat R}%

\global\long\def\matS{\mat S}%

\global\long\def\matO{\mat O}%

\global\long\def\matT{\mat T}%

\global\long\def\matY{\mat Y}%

\global\long\def\matI{\mat I}%

\global\long\def\matJ{\mat J}%

\global\long\def\matZ{\mat Z}%

\global\long\def\matW{\mat W}%

\global\long\def\tmatK{\widetilde{\matK}}%

\global\long\def\matL{\mat L}%

\global\long\def\matZero{\mat 0}%

\global\long\def\S#1{{\mathbb{S}_{N}[#1]}}%

\global\long\def\IS#1{{\mathbb{S}_{N}^{-1}[#1]}}%

\global\long\def\PN{\mathbb{P}_{N}}%

\global\long\def\Norm#1{\|#1\|}%

\global\long\def\NormS#1{\|#1\|^{2}}%

\global\long\def\ONorm#1{\|#1\|_{op}}%

\global\long\def\ONormS#1{\|#1\|_{op}^{2}}%

\global\long\def\OneNorm#1{\|#1\|_{1}}%

\global\long\def\OneNormS#1{\|#1\|_{1}^{2}}%

\global\long\def\TNormS#1{\|#1\|_{2}^{2}}%

\global\long\def\TNorm#1{\|#1\|_{2}}%

\global\long\def\InfNorm#1{\|#1\|_{\infty}}%

\global\long\def\InfNormS#1{\|#1\|_{\infty}^{2}}%

\global\long\def\FNorm#1{\|#1\|_{F}}%

\global\long\def\FNormK#1#2{\|#1\|_{F}^{#2}}%

\global\long\def\FNormS#1{\|#1\|_{F}^{2}}%

\global\long\def\UNorm#1{\|#1\|_{\matU}}%

\global\long\def\NucNorm#1{\|#1\|_{\star}}%

\global\long\def\UNormS#1{\|#1\|_{\matU}^{2}}%

\global\long\def\UINormS#1{\|#1\|_{\matU^{-1}}^{2}}%

\global\long\def\ANorm#1{\|#1\|_{\matA}}%

\global\long\def\BNorm#1{\|#1\|_{\mat B}}%

\global\long\def\ANormS#1{\|#1\|_{\matA}^{2}}%

\global\long\def\AINormS#1{\|#1\|_{\matA^{-1}}^{2}}%

\global\long\def\T{\textsc{T}}%

\global\long\def\conj{\textsc{\ensuremath{*}}}%

\global\long\def\pinv{\textsc{+}}%

\global\long\def\Var#1{{\mathbb{V}}\left[#1\right]}%

\global\long\def\Expect#1{{\mathbb{E}}\left[#1\right]}%

\global\long\def\ExpectC#1#2{{\mathbb{E}}_{#1}\left[#2\right]}%

\global\long\def\dotprod#1#2#3{(#1,#2)_{#3}}%

\global\long\def\dotprodN#1#2{(#1,#2)_{{\cal N}}}%

\global\long\def\dotprodH#1#2{(#1,#2)_{{\cal {\cal H}}}}%

\global\long\def\dotprodsqr#1#2#3{(#1,#2)_{#3}^{2}}%

\global\long\def\Trace#1{{\bf Tr}\left(#1\right)}%

\global\long\def\STrace#1{{\bf Tr}(#1)}%

\global\long\def\MTrace#1#2{{\bf MTr}_{#2}\left(#1\right)}%

\global\long\def\SMTrace#1#2{{\bf MTr}_{#2}(#1)}%

\global\long\def\nnz#1{{\bf nnz}\left(#1\right)}%

\global\long\def\vec#1{{\bf vec}\left(#1\right)}%

\global\long\def\Mat#1{{\bf mat}\left(#1\right)}%

\global\long\def\rank#1{{\bf rank}\left(#1\right)}%

\global\long\def\vol#1{{\bf vol}\left(#1\right)}%

\global\long\def\cirmat#1{\matM\left(#1\right)}%

\global\long\def\range#1{{\bf range}\left(#1\right)}%

\global\long\def\sr#1{{\bf sr}\left(#1\right)}%

\global\long\def\poly#1{{\bf poly}\left(#1\right)}%

\global\long\def\amplitude#1#2{{\bf Amplitude}_{#2}\left(#1\right)}%

\global\long\def\gap#1#2{{\bf gap}_{#2}\left(#1\right)}%

\global\long\def\gapS#1#2{{\bf gap}_{#2}^{2}\left(#1\right)}%

\global\long\def\diag#1{{\bf diag}\left(#1\right)}%

\global\long\def\Re#1{{\bf Re}\left(#1\right)}%

\global\long\def\Img#1{{\bf Im}\left(#1\right)}%

\global\long\def\Gr#1#2{{\bf Gr}(#1,#2)}%

\global\long\def\Ket#1{\left\vert #1\right\rangle }%

\global\long\def\Bra#1{\left\langle #1\right\vert }%

\global\long\def\kket#1{\left\vert \left\vert #1\right\rangle \right\rangle }%

\global\long\def\bbra#1{\left\langle \left\langle #1\right\vert \right\vert }%

\global\long\def\Braket#1#2{\left\langle #1|#2\right\rangle }%

\global\long\def\Ptrace#1#2{{\bf {\bf Tr}}_{#2}\left(#1\right)}%

\global\long\def\SPtrace#1#2{{\bf {\bf Tr}}_{#2}(#1)}%

\global\long\def\vtheta{\mat{\theta}}%

\global\long\def\valpha{\mat{\alpha}}%

\global\long\def\vbeta{\mat{\beta}}%

\global\long\def\vpsi{\mat{\psi}}%

\global\long\def\vphi{\mat{\phi}}%

\global\long\def\vxi{\mat{\xi}}%

\global\long\def\abs#1{\lvert#1\rvert}%

\global\long\def\HT{\operatorname{HT}}%

\global\long\def\ST{\operatorname{ST}}%

\title{Multivariate trace estimation using quantum state space linear algebra}
\author{Liron Mor Yosef\thanks{Tel Aviv University, lm2@mail.tau.ac.il} \\
 \and Shashanka Ubaru\thanks{IBM T.J. Watson Research Center, Shashanka.Ubaru@ibm.com}
\\
 \and Lior Horesh\thanks{IBM T.J. Watson Research Center, lhoresh@us.ibm.com}
\\
 \and Haim Avron\thanks{Tel Aviv University, haimav@tauex.tau.ac.il}
\\
}
\maketitle
\begin{abstract}
In this paper, we present a quantum algorithm for approximating multivariate
traces, i.e. the traces of matrix products. Our research is motivated
by the extensive utility of multivariate traces in elucidating spectral
characteristics of matrices, as well as by recent advancements in
leveraging quantum computing for faster numerical linear algebra.
Central to our approach is a direct translation of a multivariate
trace formula into a quantum circuit, achieved through a sequence
of low-level circuit construction operations. To facilitate this translation,
we introduce \emph{quantum Matrix States Linear Algebra} (qMSLA),
a framework tailored for the efficient generation of state preparation
circuits via primitive matrix algebra operations. Our algorithm relies
on sets of state preparation circuits for input matrices as its primary
inputs and yields two state preparation circuits encoding the multivariate
trace as output. These circuits are constructed utilizing qMSLA operations,
which enact the aforementioned multivariate trace formula. We emphasize
that our algorithm's inputs consist solely of state preparation circuits,
eschewing harder to synthesize constructs such as Block Encodings.
Furthermore, our approach operates independently of the availability
of specialized hardware like QRAM, underscoring its versatility and
practicality.
\end{abstract}

\section{Introduction}

In this paper we propose a quantum algorithm for approximating multivariate
traces, i.e. traces of the products of matrices. Formally, the multivariate
trace of matrices $\matA_{1},\matA_{2},\dots,\matA_{k}$ of compatible
dimensions is defined as:
\begin{equation}
\MTrace{\matA_{1},\dots,\matA_{k}}k\coloneqq\Trace{\matA_{1}\matA_{2}\cdots\matA_{k}}.\label{eq:multivariate-trace}
\end{equation}
Approximating multivariate traces is motivated as follows. For a square
$\matA\in\R^{n\times n}$, its matrix moments $\STrace{\matA^{k}}=\MTrace{\matA,\dots,\matA}k$
for $k=1,2,\dots$, which reveal useful spectral properties of $\matA$
with applications in scientific computing~\cite{golub2009matrices}
and other fields, are multivariate traces. Furthermore, by introducing
a set of shifts we have $\MTrace{\matA-\alpha_{1}\matI,\dots,\matA-\alpha_{k}\matI}k=\Trace{p(\matA)}$
for some polynomial $p(x)$ of degree $k$ and leading coefficient
1. The right hand side $\Trace{p(\matA)}$ is equal to $\sum_{i=1}^{n}p(\lambda_{i})$
where $\lambda_{1},\dots,\lambda_{n}$ are the eigenvalues of $\matA$,
which is a spectral sum. Since many smooth functions $f(x)$ can be
approximated well using polynomials~\cite{trefethen2013approximation},
it is a common practice to approximate spectral sums of the general
form $\sum_{i=1}^{n}f(\lambda_{i})$ using restricted forms $\Trace{p(\matA)}$.
A well known example is $\log\det\matA=\sum_{i=1}^{n}\log(\lambda_{i}$).
Many machine learning techniques estimate spectral properties of various
matrices by approximating appropriate spectral sums via polynomial
spectral sums $\Trace{p(\matA)}$, e.g. Gaussian Processes~\cite{Rasmussen2006},
kernel learning~\cite{davis2007information}, Bayesian learning~\cite{mackay2003information},
matrix completion~\cite{candes2009exact}, differential privacy problems~\cite{hardt2012simple},
graph analysis~\cite{estrada2000characterization}, Hessian and neural
network property analysis~\cite{ramesh2018backpropagation,ghorbani2019investigation}
and many more~\cite{ubaru2018applications,chen2021analysis}. Other
applications of multivariate traces appear in computational physics,
e.g. entanglement estimation~\cite{horodecki2002method,leifer2004measuring,brydges2019probing,feldman-2022-entan-estim,quek2024multivariate},
quantum error mitigation~\cite{liang2023unified}, quantum distinguishability
measures~\cite{buhrman2001quantum}.

Due to the ubiquity of matrix moments and spectral sums (and more
generally multivariate traces), there is an extensive literature on
efficient algorithms for spectral sum approximation. Example of classical
(conventional) algorithms for spectral sum approximation\footnote{In the context of this paper, whenever we talk about ``classical
algorithms'' we mean algorithms that use only classical computing,
as opposed to ``quantum algorithm'' that also use quantum computing
(but may have a classical component).} are~\cite{han2017approximating,ubaru-2017-fast-estim}. Many of
these classical methods are based on (a) estimating multivariate traces
using randomized trace estimation techniques (Hutchinson's method
\cite{hutchinson1990stochastic} and related methods \cite{bai1996some,avron2011randomized,roosta2015improved,meyer2021hutchpp,cortinovis2022randomized}),
and (b) approximating the function or quadratic form using polynomial
approximations, Lanczos Gaussian quadrature, histogram or some other
rational approximation. While these methods are powerful and useful,
with the constant increase in the size of matrices encountered in
applications, and ever increasing accuracy requirements (which lead
to larger degrees in approximating polynomials), developing novel
algorithms for trace estimation is an active research topic.

Quantum computers can efficiently perform certain linear-algebraic
operations on large computational spaces (exponential in the number
of qubits), and offer the potential to achieve significant speedups
over classical computations. In recent years, a variety of quantum
numerical linear algebra (qNLA)~\cite{prakash2014quantum,luongo2020quantum,gilyen2019quantum,cade2017quantum}
and quantum machine learning (QML)~\cite{schuld2019quantum,schuld2015introduction,biamonte2017quantum}
methods have been proposed to harness the computational power of quantum
computing to achieve polynomial to exponential speedups over the best-known
classical methods. In this paper, we consider the problem of estimating
multivariate traces.

\subsection{Contributions}

Most of the previous literature on qNLA and QML propose techniques
that operate under stringent quantum input models, e.g. ones that
assume availability of speculative components such as QRAM~\cite{giovannetti2008quantum}.
However, in practice, availability of such components is non trivial,
e.g. efficient QRAMs are not currently available (and are not expected
to be available the near-term future). Moreover, recent results on
\emph{classical dequantization }(aka \emph{quantum-inspired classical
algorithms})~\cite{chepurko2022quantum,chia2020sampling,tang2019quantum}
have shown that under analogous (to QRAM) classical data-structure
assumptions, ``dequantized'' classical algorithms can achieve similar
runtimes as their quantum counterparts. Therefore, we advocate the
development of qNLA algorithms that operate under only lax\textcolor{red}{{}
}quantum input models. In particular, algorithms that make no quantum
access assumptions, i.e. start with a classical description of input
data, are especially attractive. Thus, in this work, we develop a
quantum algorithm for multivariate trace estimation that uses an input
model where both matrices and vectors are presented as quantum state
preparation circuits, rather than relying on more restrictive input
models.

In designing our quantum algorithm for trace estimation we take an
approach that differs from the aforementioned qNLA and QML works.
Our algorithm for multivariate trace estimation is based on \textit{Quantum
Matrix State Linear Algebra} (qMSLA), a novel framework for performing
numerical linear algebra computations on quantum computers. qMSLA
utilizes matrix state preparation circuits as a practical and efficient
construct for representing and manipulating matrices and vectors in
quantum algorithms. The key concept behind qMSLA is the development
of a set of basic linear algebra operations that can be executed directly
on circuits that represent matrices in probability amplitudes of quantum
states. For example, given matrix state preparation circuits for matrices
$\matA$ and $\matB$, qMSLA operation $\text{\textrm{qmsla.kronecker}}$
outputs a matrix state preparation circuit for $\matA\otimes\matB$.
These operations enable the execution of various matrix algebra tasks
entirely within the quantum domain. In Section~\ref{sec:qmsla} we
propose a limited yet useful set of qMSLA operations, and show how
they can be implemented efficiently.

Although there are foundational quantum circuits and algorithms, QLA
and otherwise, and common patterns for designing quantum circuits,
these represent high level operations and do not provide general purpose
matrix algebra subroutines, which are necessary building blocks for
computational linear algebra. qMSLA allows practitioners to overcome
this limitation by offering a matrix-level framework for doing linear
algebra on quantum computers, and thus designing quantum circuits
by directly synthesizing quantum circuits that implement matrix equations.
We believe that the idea of constructing quantum circuits of target
matrix computations (such as the trace) using a series of primitive
matrix algebra operations at the circuit level might be of independent
interest, as a new paradigm for doing matrix computations in the quantum
domain. Indeed, our approach is very much inspired by BLAS, and we
view qMSLA as a (as of yet limited) BLAS for QLA.

We leverage qMSLA to design a novel algorithm for solving multivariate
trace problems. Along with fulfilling the main goal of this study,
this also showcases the potential of the qMSLA approach. The algorithm
seamlessly executes a sequences of high-level qMSLA operation, eliminating
the need for a separate, custom-built quantum circuit. The algorithm's
input are pre-designed circuits, denoted as ${\cal U}_{\text{\ensuremath{\matA}}_{i}}$
for matrices $\matA_{i}$ ($i$ from $1$ to $2k$)\footnote{In this study, we have develop an algorithm specifically tailored
for the multiplication of an even number of matrices. Odd number of
matrices can be dealt with by introducing the identity matrix as one
of the input matrices.}. These circuits encode the matrices into quantum states, that is,
${\cal U}_{\text{\ensuremath{\matA}}_{i}}$ implements a amplitude
encoding procedure for $\matA_{i}$~\cite[eq 3.60, p.g 115]{schuld2021machine}.
Our algorithm outputs two circuits, ${\cal U}_{\vpsi}$ and ${\cal U}_{\vphi}$,
that prepare two quantum states $\Ket{\vpsi}$ and $\Ket{\vphi}$,
such that their dot product (overlap in quantum computing parlance)\textcolor{red}{{}
}is equal to the multivariate trace. This overlap can then be estimated
using various existing quantum algorithms (such as Hadamard Test and
Swap Test). Alternatively, the circuits ${\cal U}_{\vpsi}$ and ${\cal U}_{\vphi}$
can be used in the context of a larger quantum algorithm. We emphasize
that our approach is able to handle matrices that are not necessarily
Hermitian or even square. We also discuss how our algorithm can be
used to approximate multivariate traces of matrices given in classical
memory, showing that even in this case, under certain conditions,
our algorithm entertains a small computational gain over state-of-the-art
classical trace estimators. We also discuss how our algorithm can
be used to approximate spectral sums.

The proposed algorithm makes only mild assumptions on the input matrices,
namely that it has access to matrix state preparation circuits for
each input matrix. Construction of such state preparation circuits
for some matrices can be inexpensive (possibly polylogarithmic in
matrix size)~\cite[Section 4.2.2]{gleinig2021efficient,iaconis2024quantum,schuld2021machine}.
At the baseline, for a general matrix available in classical memory,
it is possible to construct a state preparation circuit for it in
time that is linear in the number of entries in the matrix and has
similar depth~\cite{shende2005synthesis}, and this is the basis
of a end-to-end classical-to-classical-via-quantum use of our algorithm.

\subsection{Related work}

For the classical setting, there is scarce literature on computing
multivariate traces per se, but there is extensive literature on stochastic
estimation of the trace of implicit matrices. The latter can be used
to estimate multivariate traces. Virtually all methods for stochastic
trace estimation trace their origin to Hutchinon's work~\cite{hutchinson1990stochastic},
and the well-known Hutchinson's estimator, though Hutchinson actually
cites Girard \cite{girard1987algorithme} in his abstract. Extensive
followup, such as \cite{bai1996some,avron2011randomized,roosta2015improved,cortinovis2022randomized},
improved on Hutchinson's results. The state-of-the-art estimator is
Hutch++~~\cite{meyer2021hutchpp}, which demonstrates improved convergence
rates through randomized low-rank approximations. Classical stochastic
trace estimators are often combined with polynomial approximations
to design stochastic estimators for spectral sums~\cite{han2017approximating,ubaru-2017-fast-estim}.

Quantum computing offers a unique opportunity for faster linear algebra
computations due to its ability to represent and manipulate vectors
in a superposition state. This, combined with the power of entanglement,
opens doors for potential speedups compared to classical matrix algorithms.
Recent advancements in this field have explored various quantum techniques
for fundamental linear algebra operations, including prinicipal component
analysis~\cite{biamonte2017quantum}, matrix multiplication~\cite{shao2018quantum},
solving linear systems of equations~\cite{harrow2009quantum}\textcolor{black}{,
and others~\cite{childs2017quantum,gilyen2019quantum,subacsi2019quantum,an2022quantum,lin2020optimal}.}
Using quantum algorithms for trace estimation and spectral sum approximation
has been considered as well.

Early work on quantum trace estimation trace their origin to the one
clean qubit model of computations~\cite{shepherd-2006-comput-with}.
Given a circuit that implements a unitary operator $\matA$, access
to a maximally entangled state, and one clean pure ancilla qubit,
they show how to estimate $\frac{1}{N}\Trace{\matA}$, where $N$
is the size of the matrix, as a state overlap measured via Hadamard
Test. However, due to the $N^{-1}$ coefficient, the statistical error
is much larger than the one commonly observed by Hutchinson's method.
Another early work on quantum trace estimation is \cite{ekert2002direct},
which introduced a direct method for multivariate trace estimation
of multiple density matrices using a controlled-SWAP gate. However
their circuit design is impractical for near-term devices. More recently,
three preprints suggested algorithms related to multivariate trace
estimation or spectral sum estimation. Luongo and Shao suggested several
quantum algorithms for spectral sum estimation~\cite{luongo2020quantum}.
Their algorithms are based on stringent quantum access assumptions,
such as block-encoding and QRAM. Such assumptions pose significant
challenges~\cite{di2020fault,phalak2023quantum}, e.g. implementing
QRAMs might require a vast number of qubits (trillions for 8GB of
RAM), and might even be infeasible. Shen et al. propose a quantum
algorithm that essentially implements Hutchison's estimator in the
quantum domain~\cite{shen2024efficient}.\textcolor{red}{{} }However,
since creating random vector samples from the Radamacher distribution
is not efficient in the quantum domain, they show how to construct
a random state which they call ``quantum Hutchinson state''. The
distribution of state vector of the quantum Hutchinson state entertains
statistical properties on par with the Radamacher vector when it comes
to stochastic trace estimation. Bravyi et al. consider approximating
partition functions, i.e. $\Trace{e^{-\beta\mat H}}$ for some inverse
temperature $\beta$ and Hamiltonian $\mat H$~\cite{bravyi-2022-quant-hamil}.
They consider both classical and quantum algorithms for local and
non-local Hamiltonians. Finally, Quek et al. suggested a constant
quantum depth circuit to estimate multivariate traces of density operators~\cite{quek2024multivariate}.
Their method assume that inputs are presented as copies of quantum
states, and it computes the multivariate trace of the density operators
associated with those states.

As explained in the previous section, our algorithm is based on a
framework (qMSLA) for operating in quantum domain on matrices via
a set of primitive operations. This approach is inspired by the well-known
BLAS library in numerical linear algebra~\cite{lawson1979basic}.
Other efforts in this vain have been proposed in the quantum domain.
Although not labeled as ``BLAS'', Gilyen et al. show how to do basic
matrix arithmetic, including matrix addition, subtraction and multiplication,
using block encodings~\cite{gilyen2019quantum}, and use these techniques
to propose their seminal QSVT algorithm. In a review paper by Biamonte
et al., the authors catalogued functionalities crucial for quantum
machine learning algorithms under the umbrella term ``qBLAS''~\cite{biamonte2017quantum}.
These functionalities, which include fast Fourier transforms and eigenvalue/eigenvector
computations, are actually present (in the classical domain) in LAPACK
and not BLAS implementations. Another effort is a Q\#-specific library
called QBLAS~\cite{cuiqblas}. QBLAS assumes the presence of QRAM
hardware and implements a simulated version. QBLAS offers functionalities
like vector inner products, HHL linear solver, matrix eigenvalue decomposition,
and quantum phase estimation. Despite the name referencing BLAS, the
QBLAS library, like qBLAS, focuses on higher-order linear algebra
operations typically found in LAPACK and not in BLAS implementations.

\section{Preliminaries}

In this section, we present the notations, definitions, and background
information necessary for describing our algorithms and their analyses.

\subsection{Linear algebra notation}

We denote scalars using Greek letters or using $x,y,\dots$. Vectors
are denoted by $\x,\y,\dots$ and matrices by $\matA,\mat B,\dots$.
The $s\times s$ identity matrix is denoted $\matI_{s}$. We assume
$0$-based indexing for vectors and matrices. This is less common
in the numerical linear algebra literature, but is more convenient
in the context of quantum computing. We use the convention that vectors
are column vectors, unless otherwise stated. For a vector $\x$ or
a matrix $\mat A$, the notation $\x^{\conj}$ or $\mat A^{\conj}$
denotes the Hermitian conjugate. We say that a matrix $\matA$ is
normalized if $\FNorm{\matA}=1$. The vectorization of $\matA\in\C^{m\times n}$,
denoted by $\vec{\matA}\in\C^{mn}$, is a column vector obtained by
stacking the columns of the matrix A on top of one another (i.e. column-major
vectorization): $\vec{\matA}=[a_{0,0},\dots,a_{m-1,0},a_{01},\dots,a_{m-1,1},\dots,a_{m-1,n-1}]^{\T}$.

Given matrices $\matA\in\C^{m\times n}$ and $\matB\in\C^{p\times q}$
, their direct sum, denoted by $\matA\oplus\matB$, is a $(m+p)\times(n+q)$
matrix obtained by putting $\matA$ and $\matB$ on the diagonal.
That is,
\[
\matA\oplus\matB\coloneqq\left[\begin{array}{cc}
\matA & 0_{m\times q}\\
0_{p\times n} & \matB
\end{array}\right]
\]
Allowing for matrices in which one of the dimensions has size zero,
we can use the $\oplus$ notation to also denote padding by columns
or rows:
\[
\matA\oplus\matZero_{p\times0}\coloneqq\left[\begin{array}{c}
\matA\\
\matZero_{p\times n}
\end{array}\right]\quad\quad\matA\oplus\matZero_{0\times q}\coloneqq\left[\begin{array}{cc}
\matA & \matZero_{m\times q}\end{array}\right]
\]
Although we can view vectors as matrices in which either the column
dimension or the row dimension is $1$, we introduce specialized $\oplus$
notations for vectors:
\[
\x\oplus\y\coloneqq\left[\begin{array}{c}
\x\\
\y
\end{array}\right]\quad\quad\x^{\T}\oplus\y^{\T}\coloneqq\left[\begin{array}{cc}
\x^{\T} & \y^{\T}\end{array}\right]
\]
We rely on context to resolve any ambiguity in use of $\oplus$.

Given matrices $\matA\in\C^{m\times n}$ and $\matB\in\C^{p\times q}$,
their Kronecker product $\matA\otimes\matB$ is a $(mp)\times(nq)$
matrix with elements defined by $a_{i,j}b_{k,l}$. A useful representation
is:
\begin{align*}
\matA\otimes\matB & =\sum_{i=0}^{m-1}\sum_{j=0}^{n-1}\sum_{k=0}^{p-1}\sum_{l=0}^{q-1}a_{i,j}b_{k,l}\mat E_{pi+k,qj+l}^{mp\times nq}
\end{align*}
where $\mat E_{i,j}^{m\times n}$ denotes $m\times n$ matrix with
1 in the $(i,j)$th entry and 0 otherwise.

\subsection{\label{subsec:quantum}Matrices and quantum states: Basic notations
and definitions}

We use the state vector based formulation of quantum computing, i.e.
the system's state is represented by a unit vector in an Hilbert space.
We assume without loss of generality that the Hilbert space is $\C^{n}$,
where $n=2^{q}$ for $q$ qubits. We let $\Ket 0_{q},\Ket 1_{q},\Ket 2_{q},\dots$
denote the computational basis in a $q$-qubit system, and $\Ket{\phi}_{q},\Ket{\psi}_{q},\dots$
some abstract states in a $q$-qubit system. Given two states $\Ket{\phi}_{q}$
and $\Ket{\psi}_{p}$ on $q$ and $p$ qubits (respectively), we denote
by $\Ket{\psi}_{q}\Ket{\phi}_{p}$ the $q+p$ qubit state obtained
by tensoring the two states to obtain a state in $\C^{mn}$ where
$n=2^{q}$ and $m=2^{p}$. Similar to many physics textbooks, we assume
that the MSB is in the lowest index in binary expansions, e.g. $\Ket i_{q}=\Ket{b_{0}}_{1}\Ket{b_{1}}_{1}\cdots\Ket{b_{q-1}}_{1}\eqqcolon\Ket{b_{0}\cdots b_{q-1}}_{q}$
where $b_{0}\cdots b_{q-1}$ is the binary expansion of $i$.

Given an amplitude vector $\valpha\in\C^{n}$ , where $n$ is a power
of 2, we use the ket $\Ket{\valpha}$ to denote the $\log_{2}n$-qubit
system's state whose amplitudes are given by $\valpha$ after normalization.
That is, $\Ket{\valpha}\coloneqq\frac{1}{\TNorm{\valpha}}\sum_{i=0}^{n-1}\alpha_{i}\Ket i_{\log_{2}n}$
where $\alpha_{i}$ is the $i$th entry of $\valpha$. Note that under
these conventions we have $\Ket i_{\log_{2}n}=\Ket{\e_{i}^{n}}$ where
$\e_{0}^{n},\e_{1}^{n},\dots$ are the $n$-dimensional identity vectors
(when the dimension of the vectors is clear from the context we omit
it). We also define $\Ket{\overline{\valpha}}\coloneqq\frac{1}{\TNorm{\overline{\valpha}}}\sum_{i}\overline{\alpha_{i}}\Ket i_{\log_{2}n}$,
where $\overline{\alpha_{i}}$ is the complex conjugation of $\alpha_{i}\in\C$.
Given two amplitude vectors $\valpha\in\C^{n}$ and $\vbeta\in\C^{m}$,
we have $\Ket{\valpha}\Ket{\vbeta}=\Ket{\valpha\otimes\vbeta}$ and
this extends naturally to higher number of multiplicands.  

We use calligraphic letters to denote both quantum circuits and operators
on the state's Hilbert space, e.g. ${\cal U}$, ${\cal S}$ and $\calT$,
also using the same letter for a circuit and the operator it induces.
For a circuit ${\cal U}$ on $q$ qubits, we use $\matM({\cal U})\in\C^{n\times n}$
($n=2^{q}$), to denote the unique unitary matrix such that for every
$\valpha\in\C^{n}$ applying ${\cal U}$ on the state $\Ket{\valpha}$
results in the state $\Ket{\matM({\cal U})\valpha}$. We say that
a circuit ${\cal U}^{\conj}$ is the inverse (or adjoint) of circuit
${\cal U}$ if $\matM({\cal U})=\matM({\cal U})^{\conj}$. Given
a quantum system, we denote the application of circuits sequentially
with $\cdot$ or omit totally.

A quantum register is defined as contiguous subset of qubits of a
multi-qubit system. The size of the register is determined by the
number of qubits it encompasses. Given a system with several quantum
registers, we use the standard tensor product $\otimes$ operator
to denote the application of each circuit on the corresponding register,
but also use $\cdotp_{i}$ to denote the action of a circuit ${\cal U}$
to only register $i$ of the system, e.g. ${\cal U}\cdotp_{i}\Ket{\valpha}$
(where the operation on the other register is the identity). We use
$0$-based indexing for register numbering within a system.

Given a system with two quantum registers, with qubit sizes $q^{(0)}=\log_{2}n$
and $q^{(1)}=\log_{2}m$, and a matrix $\matA\in\C^{m\times n}$,
we use $\kket{\matA}$ to denote the state whose amplitudes vector
are $\frac{\vec{\matA}}{\FNorm{\matA}}$. That is,
\[
\kket{\matA}\coloneqq\frac{1}{\FNorm{\matA}}\sum_{j=0}^{n-1}\sum_{i=0}^{m-1}a_{i,j}\Ket{mj+i}_{q^{(0)}+q^{(1)}}=\frac{1}{\FNorm{\matA}}\sum_{i=0}^{m-1}\sum_{j=0}^{n-1}a_{i,j}\kket{\mat E_{i,j}^{m\times n}}
\]
where $a_{ij}$ denotes the $(i,j)$th entry of $\matA$. Since $\mat E_{i,j}^{m\times n}=\e_{i}^{m}(\e_{j}^{n})^{\T}$
we have $\kket{\mat E_{i,j}^{m\times n}}\coloneqq\Ket{\e_{j}^{n}}\Ket{\e_{i}^{m}}=\Ket j_{q^{(0)}}\Ket i_{q^{(1)}}$.
We use the notation $\kket{\matA}$ (instead of $\Ket{\vec{\matA}}$)
to help the reader distinguish state descriptions that are based on
matrices as opposed to vectors. We remark that the use of double kets
to denote matrix states can be traced to \cite{d2001quantum}.

For a matrix $\matA\in\C^{m\times n}$ we denote by $q_{\matA}$ the
number of qubits we need to hold the state $\kket{\matA}$. That is,
$q_{\matA}=\log_{2}mn$. For a circuit ${\cal U}$, we denote $g_{{\cal U}}$
the number of gates in ${\cal U},$ and by $d_{{\cal U}}$ the depth
(critical path) of ${\cal U}$, and by $q(\mathcal{U})$ the number
of qubits of $\mathcal{U}$.

\subsection{Overlap between two quantum states}

Two foundational quantum algorithms for measuring the overlap (i.e.,
dot product or similarity) between two quantum states are the Hadamard
and Swap Tests. For completeness, we now recall both.

For our purposes, it is useful to describe the Hadamard Test as an
algorithm that accepts a description of a circuit ${\cal U}$ and
construct a new circuit that implements the test.
\begin{prop}
[Hadamard Test] Given a classical description of the circuit ${\cal U}$
on $q$ qubit, there exist an algorithm that constructs a circuit
$\HT_{{\cal U}}$ on $q+1$ qubits (see Figure~\ref{fig:hadamard-test-circuit}
for the circuit) such that for any $q$-qubit state $\Ket{\nu}_{q}$
we have 
\begin{equation}
\HT_{{\cal U}}\left(\Ket 0_{1}\Ket{\nu}_{q}\right)=\frac{1}{2}\Ket 0_{1}(\Ket{\nu}_{q}+{\cal U}\Ket{\nu}_{q})+\frac{1}{2}\Ket 1_{1}(\Ket{\nu}_{q}-{\cal U}\Ket{\nu}_{q})\label{eq:HT}
\end{equation}
 The circuit $\HT_{{\cal U}}$ serves as an implementation of the
Hadamard Test. Upon applying the Hadamard Test to $\Ket 0_{1}\Ket{\nu}_{q}$,
the probability of obtaining $0$ after measurement of the first qubit
is 
\[
\frac{1}{2}(1+\Re{\Bra{\nu}{\cal U}\Ket{\nu}}.
\]
The cost of constructing $\HT_{{\cal U}}$ is $O(g_{{\cal U}})$ and
the depth is $d_{{\cal U}}+2$. To obtain the imaginary part, the
phase gate $\calS=\left[\begin{array}{cc}
1 & 0\\
0 & i
\end{array}\right]$ can be employed. In this complex version, we simply apply $\calS^{\dagger}$
after the initial Hadamard gate, and the probability of obtaining
$0$ after measurement of the first qubit is 
\[
\frac{1}{2}\left(1+\Img{\Bra{\nu}{\cal U}\Ket{\nu}}\right)
\]
\end{prop}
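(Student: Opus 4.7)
The plan is to construct $\HT_{{\cal U}}$ via the standard three-stage template: apply a Hadamard to the ancilla, then apply a controlled version of $\mathcal{U}$ with the ancilla as control (acting on the remaining $q$ qubits), and finally apply another Hadamard to the ancilla. The classical description of $\mathcal{U}$ is used to produce a controlled version by replacing each gate with its singly-controlled analogue, which for a standard universal gate set incurs only constant overhead per gate, giving a gate count of $O(g_{\mathcal{U}})$ and a critical path of $d_{\mathcal{U}}+2$ (the extra $2$ being the two Hadamards).

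To verify equation~\eqref{eq:HT}, I would propagate $\Ket{0}_{1}\Ket{\nu}_{q}$ through the three stages. The first Hadamard yields $\frac{1}{\sqrt{2}}(\Ket{0}_{1}+\Ket{1}_{1})\Ket{\nu}_{q}$; the controlled-$\mathcal{U}$ gives $\frac{1}{\sqrt{2}}(\Ket{0}_{1}\Ket{\nu}_{q}+\Ket{1}_{1}\mathcal{U}\Ket{\nu}_{q})$; and the second Hadamard, by linearity and the identities $H\Ket{0}=\tfrac{1}{\sqrt{2}}(\Ket{0}+\Ket{1})$, $H\Ket{1}=\tfrac{1}{\sqrt{2}}(\Ket{0}-\Ket{1})$, collects terms into $\frac{1}{2}\Ket{0}_{1}(\Ket{\nu}_{q}+\mathcal{U}\Ket{\nu}_{q})+\frac{1}{2}\Ket{1}_{1}(\Ket{\nu}_{q}-\mathcal{U}\Ket{\nu}_{q})$, matching \eqref{eq:HT}.

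For the measurement probability I would compute the squared norm of the $\Ket{0}_{1}$ component, namely $\tfrac{1}{4}\|\Ket{\nu}+\mathcal{U}\Ket{\nu}\|_{2}^{2}$. Expanding the inner product and using $\Bra{\nu}\mathcal{U}^{\conj}\Ket{\nu}=\overline{\Bra{\nu}\mathcal{U}\Ket{\nu}}$ gives
\[
\tfrac{1}{4}\bigl(2+\Bra{\nu}\mathcal{U}\Ket{\nu}+\overline{\Bra{\nu}\mathcal{U}\Ket{\nu}}\bigr)=\tfrac{1}{2}\bigl(1+\mathrm{Re}\,\Bra{\nu}\mathcal{U}\Ket{\nu}\bigr),
\]
as claimed. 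For the imaginary variant, inserting $\calS^{\dagger}$ after the first Hadamard turns $\frac{1}{\sqrt{2}}(\Ket{0}+\Ket{1})$ into $\frac{1}{\sqrt{2}}(\Ket{0}-i\Ket{1})$, so after the controlled-$\mathcal{U}$ and the final Hadamard the $\Ket{0}_{1}$ branch carries amplitude $\tfrac{1}{2}(\Ket{\nu}-i\mathcal{U}\Ket{\nu})$. Computing $\tfrac{1}{4}\|\Ket{\nu}-i\mathcal{U}\Ket{\nu}\|_{2}^{2}$ and using $\overline{z}-z=-2i\,\mathrm{Im}\,z$ produces $\tfrac{1}{2}(1+\mathrm{Im}\,\Bra{\nu}\mathcal{U}\Ket{\nu})$.

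The only non-routine point is justifying the $O(g_{\mathcal{U}})$ gate count and $d_{\mathcal{U}}+2$ depth for the controlled version of $\mathcal{U}$. This is the main obstacle I would flag: strictly speaking, controlling an arbitrary single-qubit gate with constant overhead depends on the chosen gate set, and controlling gates that act on the same qubit as other gates in a layer need not preserve depth exactly. I would handle this by appealing to the standard construction in which each gate in $\mathcal{U}$ is replaced by its controlled analogue (a controlled single- or two-qubit gate, each decomposable into $O(1)$ elementary gates), noting that gates in the same time slice of $\mathcal{U}$ yield controlled gates that share only the ancilla and can, in the idealized cost model used here, still be scheduled in a single time slice with the ancilla fanning out.
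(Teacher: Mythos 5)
Your derivation is correct and is exactly the standard construction the paper relies on: the circuit in Figure~\ref{fig:hadamard-test-circuit} is precisely the Hadamard--controlled-${\cal U}$--Hadamard template you analyze, and the paper states this proposition as a recalled foundational result without spelling out the proof. Your state propagation, the probability computation via $\tfrac{1}{4}\TNormS{\Ket{\nu}\pm{\cal U}\Ket{\nu}}$, the $\calS^{\dagger}$ variant for the imaginary part, and your caveat about the idealized cost model behind the $O(g_{{\cal U}})$ gate count and $d_{{\cal U}}+2$ depth are all consistent with what the paper asserts.
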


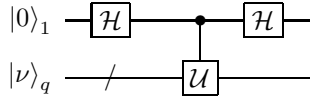
\begin{figure}[tph]
\begin{centering}
\begin{tabular}{c}
\Qcircuit @C=1em @R=.9em  {    &\lstick{\Ket 0_{1}}	 & \gate{{\cal H}}			& 	\qw  &  \ctrl{1}	     & \gate{{\cal H}}		& \qw & \rstick{} 	\\ &\lstick{\Ket{\nu}_{q}} &  {/} \qw  & \qw	       & \gate{{\cal U}}		  & \qw					& \qw	     } \tabularnewline
\end{tabular}
\par\end{centering}
\centering{}\caption{\label{fig:hadamard-test-circuit}Circuit Implementing of Hadamard
test}
\end{figure}

\begin{rem}
If we have access to a circuit ${\cal U}_{\nu}$ such that ${\cal U}_{\nu}\Ket 0_{q}=\Ket{\nu}_{q}$,
and a circuit ${\cal U_{\eta}}$ such that ${\cal U}_{\eta}\Ket 0_{q}=\Ket{\eta}_{q}$,
then one can use Hadamard Test for overlap estimation (with relative
phase information as well) by the following identity: $\Braket{\nu}{\eta}=\Bra 0{\cal U}_{\nu}^{\conj}{\cal U}_{\eta}\Ket 0$,
e.g. in that case we have that 
\begin{equation}
p(0)=\frac{1}{2}\left(1+\Re{\Bra 0{\cal U}_{\nu}^{\conj}{\cal U}_{\eta}\Ket 0}\right)=\frac{1}{2}\left(1+\Re{\Braket{\nu}{\eta}}\right)\label{eq:hadamard-overlap}
\end{equation}
for the output of $\HT_{{\cal U}_{\nu}^{\conj}{\cal U}_{\eta}}$,
where $p(0)$ denotes the probability of measuring $0$ in the first
qubit (when only that qubit is measured).
\end{rem}

\begin{prop}
[Swap Test] For $q$ qubits, there exist an a circuit $\ST_{q}$
on $2q+1$ qubits (see Figure~\ref{fig:swap-test-circuit} for the
circuit) such that for any two $q$-qubit states $\Ket{\psi}_{q}$
and $\Ket{\phi}_{q}$ we have 
\[
\ST_{q}\left(\Ket 0_{1}\Ket{\psi}_{q}\Ket{\phi}_{q}\right)=\frac{1}{2}\Ket 0_{1}\left(\Ket{\psi}_{q}\Ket{\phi}_{q}+\Ket{\phi}_{q}\Ket{\psi}_{q}\right)+\frac{1}{2}\Ket 1_{1}\left(\Ket{\psi}_{q}\Ket{\phi}_{q}-\Ket{\phi}_{q}\Ket{\psi}_{q}\right)
\]
 The circuit $\ST_{q}$ serves as an implementation of the Swap Test.
Upon applying the Swap Test to $\Ket 0_{1}\Ket{\psi}_{q}\Ket{\phi}_{q}$,
the probability of obtaining $0$ after measurement of the first qubit
is 
\[
\frac{1}{2}(1+\abs{\Braket{\phi}{\psi}}^{2}).
\]
The cost of constructing $\ST_{q}$ is $q+2$ and the depth is $3$\textcolor{red}{.}
\end{prop}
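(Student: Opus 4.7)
The plan is to verify the Swap Test statement by explicitly defining a circuit, tracing the state through it, and then computing the measurement statistics. The circuit $\ST_q$ will consist of three layers: a Hadamard gate $\mathcal{H}$ on the ancilla qubit, followed by $q$ controlled-SWAP (Fredkin) gates each conditioned on the ancilla and swapping qubit $i$ of register $1$ with qubit $i$ of register $2$ (for $i=0,\dots,q-1$), followed by a second Hadamard on the ancilla. Since the $q$ Fredkin gates act on pairwise disjoint target qubits, they share only the ancilla as a common control and can be scheduled in a single layer, which will give the claimed depth of $3$ and gate count $q+2$.

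The main computation will be to propagate $\Ket 0_{1}\Ket{\psi}_{q}\Ket{\phi}_{q}$ through these three layers. First I would apply $\mathcal{H}$ to the ancilla, giving $\tfrac{1}{\sqrt{2}}(\Ket 0_{1}+\Ket 1_{1})\Ket{\psi}_{q}\Ket{\phi}_{q}$. Then the controlled-SWAP layer acts as the identity on the $\Ket 0_{1}$ branch and as the full register swap on the $\Ket 1_{1}$ branch, producing
\[
\tfrac{1}{\sqrt{2}}\bigl(\Ket 0_{1}\Ket{\psi}_{q}\Ket{\phi}_{q}+\Ket 1_{1}\Ket{\phi}_{q}\Ket{\psi}_{q}\bigr).
\]
Applying the second Hadamard to the ancilla and rearranging terms then yields exactly the claimed output
\[
\tfrac{1}{2}\Ket 0_{1}(\Ket{\psi}_{q}\Ket{\phi}_{q}+\Ket{\phi}_{q}\Ket{\psi}_{q})+\tfrac{1}{2}\Ket 1_{1}(\Ket{\psi}_{q}\Ket{\phi}_{q}-\Ket{\phi}_{q}\Ket{\psi}_{q}).
\]

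For the measurement probability, I would compute the squared norm of the $\Ket 0_{1}$ component. Expanding the inner product of $\Ket{\psi}_{q}\Ket{\phi}_{q}+\Ket{\phi}_{q}\Ket{\psi}_{q}$ with itself via the tensor-product rule $(\Bra a \Bra b)(\Ket c \Ket d)=\Braket{a}{c}\Braket{b}{d}$, I get the four terms $\Braket{\psi}{\psi}\Braket{\phi}{\phi}$, $\Braket{\psi}{\phi}\Braket{\phi}{\psi}$, $\Braket{\phi}{\psi}\Braket{\psi}{\phi}$, and $\Braket{\phi}{\phi}\Braket{\psi}{\psi}$. Using unit normalization of $\Ket{\psi}_q$ and $\Ket{\phi}_q$ together with $\Braket{\psi}{\phi}\Braket{\phi}{\psi}=|\Braket{\phi}{\psi}|^{2}$, the prefactor $\tfrac14$ collapses this sum to $\tfrac{1}{2}(1+|\Braket{\phi}{\psi}|^{2})$, as claimed.

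There is no real obstacle here; this is a standard textbook derivation. The only step that requires a touch of care is ensuring that the ``controlled-SWAP on two $q$-qubit registers'' is interpreted as $q$ Fredkin gates acting in parallel (not a monolithic controlled-permutation), so that the depth-$3$ and gate-count-$(q+2)$ claims hold simultaneously; I would state this scheduling explicitly when defining the circuit.
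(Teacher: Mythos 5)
Your derivation is correct and is essentially the argument implied by the paper, which states this proposition as a standard fact with the circuit of Figure~\ref{fig:swap-test-circuit} and no written proof: Hadamard, controlled register-swap realized as $q$ Fredkin gates, Hadamard, then the norm computation giving $\frac{1}{2}(1+\abs{\Braket{\phi}{\psi}}^{2})$, with the same accounting of $q+2$ gates and depth $3$ (treating the controlled-swap block as one layer, exactly as the paper does). No gaps.
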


\begin{figure}[tph]
\begin{centering}
\begin{tabular}{c}
\Qcircuit @C=1em @R=.9em  {    &\lstick{\Ket{0}_{1}}	 & \gate{{\cal H}}			& 	\qw  &  \ctrl{2}	  & \gate{{\cal H}}		& \qw & \rstick{} 	\\ &\lstick{\Ket{\psi}_{q}} &  \qw  & \qw	 & \qswap		  & \qw					& \qw	\\ &\lstick{\Ket{\phi}_{q}} & \qw   & \qw	 & \qswap \qwx	  & \qw					& \qw  } \tabularnewline
\end{tabular}
\par\end{centering}
\centering{}\caption{\label{fig:swap-test-circuit}Circuit Implementing of Swap Test}
\end{figure}
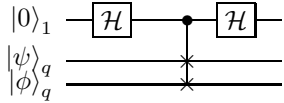

\section{\label{sec:qmsla}quantum Matrix State Linear Algebra (qMSLA)}

Since quantum circuits essentially evolve quantum states by applying
unitary transformations on them, one can envision performing matrix
computations using quantum circuits. Quantum speedups on numerical
linear algebra tasks will immediately lead to faster algorithms in
many scientific computing and machine learning tasks, since at the
core, such algorithms rely heavily on matrix computations~\cite{biamonte2017quantum}.
\textcolor{black}{Indeed, quite a few quantum algorithms for scientific
computing and machine learning have been proposed recently, many of
them relying on QLA breakthroughs such as the HHL algorithm~\cite{harrow2009quantum},
and others~\cite{childs2017quantum,gilyen2019quantum,subacsi2019quantum,an2022quantum,lin2020optimal}.
}\textcolor{red}{}

All the aforementioned works make assumptions on the mechanism in
which the input is presented to the quantum algorithm, and that mechanism
is of crucial importance \cite{chakraborty2018power}. We refer to
the mechanism in which input is presented to the quantum algorithm
as the \emph{input model} of the quantum algorithm. Several input
models have been presented in the literature. Two prominent examples
are the sparse-data access model~\cite{ambainis2012variable,childs2017quantum}
and various quantum data structure based input models~\cite{kerenidis2016quantum,kerenidis2020quantum}.
Recently, Chakraborty et al.~\cite{chakraborty2018power} showed
that a variety of widely used input models can be reduced to an input
model in which matrices are inputed using \emph{block encodings }and
vectors are inputed as \emph{state preparation circuits}: \textcolor{red}{}
\begin{defn}
[State preparation Circuit]\label{def:state-prep-circ} We say that
a $\log_{2}n$-qubit circuit ${\cal U}$ is a \emph{state preparation
circuit} for a vector $\x\in\C^{n}$ if applying ${\cal U}$ to the
state $\Ket 0_{\log_{2}n}$ results in the state $\Ket{\x}$.
\end{defn}

\begin{defn}
[Block encoding of a matrix \cite{gilyen2019quantum}] For $\alpha\geq\FNorm{\matA}$,
a circuit ${\cal U}$ is a $\alpha$\emph{-Block Encoding} of $\matA\in\C^{m\times n}$
if 
\[
\alpha\matM({\cal U})=\left[\begin{array}{cc}
\matA & *\\*
* & *
\end{array}\right]
\]
where $*$ denotes arbitrary entries. We refer to $\alpha$ as the
\emph{scale}.
\end{defn}

We refer to the input model in which matrices are accessed using block
encodings and vectors are accessed as state preparation circuits as
the \emph{block encoding input model}. There are powerful algorithms
that operate under the block encoding model. In particular, in the
block encoding model we can perform Quantum Singular Value Transformation
(QSVT)~\cite{gilyen2019quantum}, a powerful technique that leads
to efficient algorithms for solving linear equations, amplitude amplification,
quantum simulation, and more~\cite{martyn2021grand}.

In this paper we do not assume matrices are given as block encodings,
and instead consider an input model in which \emph{both} matrices
and vectors are inputed as state preparation circuits; for matrices
we use a \emph{matrix state preparation circuit.}
\begin{defn}
[Matrix state preparation circuit] \label{def:matrix-circ}We say
that a $(\log_{2}n+\log_{2}m)$-qubit circuit ${\cal U}$ is a \emph{matrix
state preparation circuit} for a matrix $\matA\in\C^{m\times n}$
if applying ${\cal U}$ to the state $\Ket 0_{\log_{2}mn}$ results
in the state $\kket{\matA}$. Equivalently, the first column of $\matM({\cal U})$
is $\vec{\matA}$. For convenience, where appropriate, we add the
matrix as sub-index when denoting state preparation circuits, e.g.
${\cal U}_{\matA}$. In such cases, with an abuse of notation, the
number of gates in ${\cal U}_{\matA}$ is denoted by $g_{\matA}$,
and the depth by $d_{\matA}$.
\end{defn}

\begin{rem}
A matrix state preparation circuit ${\cal U}_{\matA}$ should be viewed
as a data structure bundling the actual circuit, along with meta-data
that describe the size of the resulting matrix (i.e., $m$ and $n$).
The meta-data in essence describes the partition of the qubits on
which the circuit operates into two registers, one corresponding to
columns indices (MSB qubits, ``top'' qubits in circuit visualizations)
and the other corresponding to row indices (LSB qubits, ``bottom''
qubits in circuit visualizations). We refer to the qubits that correspond
to the column indices as \emph{column qubits}, and the register that
corresponds to these indices as the \emph{column register}. Likewise
for \emph{row qubits} and \emph{row register}.
\end{rem}

\begin{rem}
It is also possible to track classically the Frobinieus norm of the
matrix (and include it in the meta-data), though this is not necessary
for the purpose of multivariate trace estimation.
\end{rem}

\begin{rem}
A vector is also matrix where one of the dimensions has size $1$.
So, our definition of matrix state preparation circuit (Definition~\ref{def:matrix-circ})
subsumes vector state preparation circuits (Definition~\ref{def:state-prep-circ}).
In vector state preparation circuits one of the two registers has
zero size: for a column vector we have only row qubits, and for a
row vector we have only column qubits.

We refer to the input model in which both matrices and vectors are
accessed via state preparation circuits as the \emph{state preparation
input model}. Are the block encoding model and the state preparation
model equivalent?\emph{ }We do not have a complete answer to this
question, though we do have some partial observations. First, given
a matrix state preparation circuit for $\matA$ and a state preparation
circuit for a vector $\w$ whose entries are the row norms of $\matA$,
it is possible to construct a block encoding of $\matA$~~\cite[Section I.D]{clader2022quantum}.
We are unaware of any efficient algorithm that given \emph{only} a
matrix state preparation circuit for $\matA$ constructs a block encoding
of $\matA$. Second, in Section~\ref{subsec:mu-state-preparation}
we show that given\textcolor{red}{{} }a circuit ${\cal U}$ we can
construct a matrix state preparation of $\matM({\cal U})$. Thus,
given a block encoding of $\matA$ we can construct a matrix state
preparation circuit for a matrix that contains $\matA$. Both observations
together lead us to conjecture that the state preparation model uses
less stringent assumptions\textcolor{red}{{} }than the block encoding
model, though we leave formalizing this conjecture and proving its
correctness to future work.

There is one additional way in which the state preparation model has
an advantage over the block encoding model. The value of the scale
parameter $\alpha$ of the block encoding is important to downstream
efficiency. Roughly speaking, the smaller $\alpha$ is, shallower
circuits suffice or less shots are required. Recalling that the minimum
value for $\alpha$ for a block encoding of $\matA$ is $\FNorm{\matA}$,
we note that it is impossible to construct a block encoding with the
minimum $\alpha$ unless $\matA$ is unitary. In contrast, for \emph{any}
$m\times n$ matrix it is possible to construct a $O(mn)$ depth matrix
state preparation circuit in $O(mn)$ (classical) time, without any
loss~\cite{shende2005synthesis}.\textcolor{red}{}

In this paper we show how to estimate multivariate traces and spectral
sums in the state preparation model. The main challenge is that in
the state preparation model we do not have access to powerful tools
like QSVT. Our approach is based on showing that given classical descriptions
of state preparation circuits, we can construct more complex state
preparation circuits that implement several important matrix algebra
operations, thus providing a toolbox for performing matrix algebra
in the state preparation model. We refer to our framework as ``quantum
Matrix State Linear Algebra'' (qMSLA). Using this toolbox, we can
encode multivariate traces in state preparation circuit via simple
matrix identities. In this section we outline qMSLA, while in the
next section (Section~\ref{sec:Encoding-Matrix-Moments}) we use
qMSLA to propose a a quantum algorithm for multivariate trace estimation.
\end{rem}

In the following subsections we present various qMSLA operations,
along with discussion of their complexity. With few exceptions, each
operations receives a variable number of classical descriptions of
matrix state preparation circuit, and outputs a classical description
of a new state preparation circuit that implements some matrix algebra
operation between the input matrices. Thus, the algorithms are classical-to-classical,
but provide circuits to be executed on a quantum computer. For each
operation, we provide visualization of the output circuit, analyze
the gate complexity and depth of the output circuit, and the (classical)
cost of forming the output circuit.

We split qMSLA to two levels. Level 1 includes primitive operations
that create or operate on state preparation circuits using low level
circuit building procedures, while level 2 includes composite operations
built using level 1 primitive operations. Both levels are summarized
in Tables~\ref{tab:qmsla-table-l1} and \ref{tab:qmsla-composite-l2}.

In this section we present a high level description on how to implement
various qMSLA operations. Low-level implementation details are deferred
to the appendix. We also introduce in the appendix another qMSLA level:
level 0. Level 0 operates on the circuit level, and does not deal
with matrix states at all. In essence, these are the basic circuit
building operations we require from an underlying quantum computing
framework (such as QISKIT) so that we can build qMSLA on top of that
framework. Most of the operations are implemented in various frameworks,
but some are not, so we explain how they can be implemented using
only the ability to add and remove gates. We then show how to implement
level 1 operations using only level 0 operations.

Henceforth in this section, unless otherwise specified, we assume
that $\matA\in\C^{m\times n}$, ${\cal U},{\cal W},{\cal Q}$ are
quantum circuits, $\sigma$ is a permutation function on all qubits,
${\cal S}_{\sigma}$ denotes an implementation of a permutation $\sigma$
using SWAP gates, and $\vxi\in\C^{m(n-1)}$ represents auxiliary information
(garbage), which may be arbitrary.

\begin{table}[tph]
\centering{}{\footnotesize{}}%
\begin{tabular}{|c|c|c|c|c|c|c|}
\hline 
{\footnotesize{}Input} & {\footnotesize{}Output} & {\footnotesize{}$q(\text{output})$} & {\footnotesize{}$g(\text{output})$} & {\footnotesize{}$d(\text{output})$} & {\footnotesize{}Operation Name} & {\footnotesize{}Subsection}\tabularnewline
\hline 
\hline 
{\footnotesize{}$n=2^{q}$} & {\footnotesize{}${\cal U}_{\matI_{n}}$} & {\footnotesize{}$2q$} & {\footnotesize{}$2q$} & {\footnotesize{}$2$} & {\footnotesize{}$\textrm{qmsla.identity}$} & {\footnotesize{}\ref{subsec:identity}}\tabularnewline
\hline 
{\footnotesize{}${\cal U}$} & {\footnotesize{}${\cal U}_{\matM({\cal U})}$} & {\footnotesize{}$2q({\cal U})$} & {\footnotesize{}$g({\cal U})+2q$} & {\footnotesize{}$d({\cal U})+2$} & {\footnotesize{}$\textrm{qmsla.matrix}$} & {\footnotesize{}\ref{subsec:mu-state-preparation}}\tabularnewline
\hline 
{\footnotesize{}${\cal U}_{\matA}$} & {\footnotesize{}${\cal U}_{\overline{\matA}}$} & {\footnotesize{}$q_{\matA}$} & {\footnotesize{}$g_{\matA}$} & {\footnotesize{}$d_{\matA}$} & {\footnotesize{}$\textrm{qmsla.conjugate}$} & {\footnotesize{}\ref{subsec:transpose-conjugate-inverse}}\tabularnewline
\hline 
{\footnotesize{}${\cal U}_{\matA}$} & {\footnotesize{}${\cal U}_{\matA^{\T}}$} & {\footnotesize{}$q_{\matA}$} & {\footnotesize{}$g_{\matA}$} & {\footnotesize{}$d_{\matA}$} & {\footnotesize{}$\textrm{qmsla.transpose}$} & {\footnotesize{}\ref{subsec:matrix-transpose}}\tabularnewline
\hline 
{\footnotesize{}${\cal U}_{\matA}$} & {\footnotesize{}${\cal U}_{\vec{\matA}}$} & {\footnotesize{}$q_{\matA}$} & {\footnotesize{}$g_{\matA}$} & {\footnotesize{}$d_{\matA}$} & {\footnotesize{}$\textrm{qmsla.vec}$} & {\footnotesize{}\ref{subsec:vectorize-matrix}}\tabularnewline
\hline 
{\footnotesize{}${\cal U}_{\matA},k$} & {\footnotesize{}${\cal U}_{\matA\oplus0_{0\times(2^{k}-1)n}}$} & {\footnotesize{}$q_{\matA}+k$} & {\footnotesize{}$g_{\matA}$} & {\footnotesize{}$d_{\matA}$} & {\footnotesize{}$\textrm{qmsla.pad\_zero\_columns}$} & {\footnotesize{}\ref{subsec:pad-columns}}\tabularnewline
\hline 
{\footnotesize{}${\cal U_{\matA}},{\cal U}_{\b}$} & {\footnotesize{}${\cal U}_{\frac{\matA\b}{\FNorm{\matA}\TNorm{\b}}\oplus\vxi}$} & {\footnotesize{}$q_{\matA}$} & {\footnotesize{}$g_{\matA}+g_{\matB}$} & {\footnotesize{}$d_{\matA}+d_{\matB}$} & {\footnotesize{}$\textrm{qmsla.matrix\_vec}$} & {\footnotesize{}\ref{subsec:matrix-vector-product}}\tabularnewline
\hline 
{\footnotesize{}${\cal U}_{\matA},{\cal U}_{\matB}$} & {\footnotesize{}${\cal U}_{\matA\otimes\matB}$} & {\footnotesize{}$q_{\matA}+q_{\matB}$} & {\footnotesize{}$g_{\matA}$+$g_{\matB}$} & {\footnotesize{}$\max(d_{\matA},d_{\matB})$} & {\footnotesize{}$\textrm{qmsla.kronecker}$} & {\footnotesize{}\ref{subsec:kronecker-product}}\tabularnewline
\hline 
{\footnotesize{}${\cal U}_{\matA_{1}},\dots,{\cal U}_{\matA_{k}}$} & {\footnotesize{}${\cal U}_{\matA_{1}\otimes\matA_{2}\otimes\cdots\otimes\matA_{k}}$} & {\footnotesize{}$\sum_{i}q_{\matA_{i}}$} & {\footnotesize{}$\sum_{i}g_{\matA_{i}}$} & {\footnotesize{}$\max(d_{\matA_{1}},\cdots,d_{\matA_{k}})$} & {\footnotesize{}$\textrm{qmsla.kronecker}$} & {\footnotesize{}\ref{subsec:multiple-kronecker}}\tabularnewline
\hline 
\end{tabular}\caption{\label{tab:qmsla-table-l1}Level 1 qMSLA operations. These are primitive
operations that create or operate on state preparation circuits using
low level circuit building tools. The classical cost for these operations
is consistently $O(g(output))$.}
\end{table}

\begin{table}[tph]
\begin{centering}
{\footnotesize{}}%
\begin{tabular}{|c|c|c|c|c|c|c|}
\hline 
{\footnotesize{}Input} & {\footnotesize{}Output} & {\footnotesize{}$q(\text{output})$} & {\footnotesize{}$g(\text{output})$} & {\footnotesize{}$d(\text{output})$} & {\footnotesize{}Operation Name} & {\footnotesize{}Subsection}\tabularnewline
\hline 
\hline 
{\footnotesize{}${\cal U}_{\matA},r$} & {\footnotesize{}${\cal U}_{\matA\oplus0_{(2^{r}-1)m\times0}}$} & {\footnotesize{}$q_{\matA}+r$} & {\footnotesize{}$g_{\matA}$} & {\footnotesize{}$d_{\matA}$} & {\footnotesize{}$\textrm{qmsla.pad\_zero\_rows}$} & {\footnotesize{}\ref{subsec:pad-rows}}\tabularnewline
\hline 
{\footnotesize{}${\cal U}_{\matA},k,r$} & {\footnotesize{}${\cal U}_{\matA\oplus0_{(2^{r}-1)m\times(2^{k}-1)n}}$} & {\footnotesize{}$q_{\matA}+k+r$} & {\footnotesize{}$g_{\matA}$} & {\footnotesize{}$d_{\matA}$} & {\footnotesize{}$\textrm{qmsla.pad}$} & {\footnotesize{}\ref{subsec:pad-rows}}\tabularnewline
\hline 
{\footnotesize{}${\cal U}_{\matA}$} & {\footnotesize{}${\cal U}_{\matA^{\conj}}$} & {\footnotesize{}$q_{\matA}$} & {\footnotesize{}$g_{\matA}$} & {\footnotesize{}$d_{\matA}$} & {\footnotesize{}$\textrm{qmsla.adjoint}$} & {\footnotesize{}\ref{subsec:adjoint}}\tabularnewline
\hline 
{\footnotesize{}${\cal U}_{\vpsi},{\cal U}_{\vphi}$} & {\footnotesize{}${\cal U}_{\frac{\vpsi^{\conj}\vphi}{\TNorm{\vpsi}\TNorm{\vphi}}\oplus\vxi}$} & {\footnotesize{}$q_{\vpsi}$} & {\footnotesize{}$g_{\vpsi}+g_{\vphi}$} & {\footnotesize{}$d_{\vpsi}+d_{\vphi}$} & {\footnotesize{}$\textrm{qmsla.overlap}$} & {\footnotesize{}\ref{subsec:overlap}}\tabularnewline
\hline 
\end{tabular}{\footnotesize\par}
\par\end{centering}
\caption{\label{tab:qmsla-composite-l2}Level 2 qMSLA operations. These are
composite operations built using level 1 operations. The classical
cost for these operations is consistently $O(g(output))$.}
\end{table}

\subsection{\label{subsec:prelude}Prelude: Operating on a subset of the qubits
and qubit permutations}

In this subsection, we investigate quantum circuit operations targeting
specific subsets of qubits (i.e., registers), qubit permutation procedures,
circuit composition, and an efficient method for eliminating SWAP
gates in state preparation circuits. Understanding each of these operations
is crucial for the implementation of various qMSLA operations.

\subsubsection{Operating on single registers of a matrix state}

Consider a matrix state $\kket{\matA}$ for $\matA\in\C^{m\times n}$.
qMSLA is based on three types of operations: operating on the column
register alone, operating on the row register alone, and rearranging
the qubits between registers (using SWAP gates). In this subsection
we seek to understand how the first two evolve matrix states.

First, lets consider how the state evolves when we apply a circuit
only on one of the two registers. Looking at the product states of
the two registers, assuming that $\valpha\in\C^{n}$ and $\vbeta\in\C^{m}$,
we have the following immediate identities:
\begin{enumerate}
\item ${\cal U}\cdot\Ket{\valpha}\Ket{\vbeta}$ results in the state $\Ket{\matM({\cal U})(\valpha\otimes\vbeta)}$.
\item ${\cal Q}\cdotp_{0}\Ket{\valpha}\Ket{\vbeta}=({\cal Q}\otimes{\cal I}_{m})\Ket{\valpha}\Ket{\vbeta}$
results in the state $\Ket{\matM({\cal Q})\valpha}\Ket{\vbeta}$.
\item ${\cal W}\cdotp_{1}\Ket{\valpha}\Ket{\vbeta}=({\cal I}_{n}\otimes{\cal W})\Ket{\valpha}\Ket{\vbeta}$
results in the state $\Ket{\valpha}\Ket{\matM({\cal W})\vbeta}$.
\end{enumerate}
Using these identities, we prove the following lemma.

\begin{lem}
\label{prop:upper-lower-opperations} Consider the state $\kket{\matA}$
for $\matA\in\C^{m\times n}$, and circuits ${\cal U}$ and ${\cal Q}$
where ${\cal U}$ operates on $\log_{2}n$ qubits and ${\cal Q}$
operates on $\log_{2}m$ qubits. The following holds:
\begin{enumerate}
\item ${\cal U}\cdotp_{0}\kket{\matA}$ results in the state $\kket{\matA\matM({\cal U})^{\T}}$.
\item ${\cal Q}\cdotp_{1}\kket{\matA}$ results in the state $\kket{\matM({\cal Q})\matA}$.
\end{enumerate}
\end{lem}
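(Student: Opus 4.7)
The plan is to unpack both parts directly from the definition of $\kket{\matA}$ and exploit the observation (already laid out in the prelude) that a circuit acting on register $i$ corresponds, in amplitude space, to tensoring with identity on the other register. Write $U \coloneqq \matM({\cal U})$ and $Q \coloneqq \matM({\cal Q})$. The starting point for both parts is the expansion
\[
\kket{\matA} \;=\; \frac{1}{\FNorm{\matA}} \sum_{j=0}^{n-1} \Ket{j}_{q^{(0)}} \otimes (\matA\e_{j}^{n})_{\text{register }1},
\]
which comes from grouping the double sum in the definition of $\kket{\matA}$ by the column index $j$. I would also note once, up front, that because $U$ and $Q$ are unitary, $\FNorm{\matA U^{\T}} = \FNorm{\matA} = \FNorm{Q\matA}$, so the normalization constant $1/\FNorm{\matA}$ is the same on both sides of each identity.

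Part 2 is the easier of the two and I would do it first, mostly as a warm-up. Using identity (3) of the prelude, ${\cal Q}\cdot_{1}\kket{\matA} = \frac{1}{\FNorm{\matA}}\sum_{j}\Ket{j}\otimes Q(\matA\e_{j}^{n})$, and since $Q\matA\e_{j}^{n}$ is the $j$th column of $Q\matA$, the right-hand side is immediately $\kket{Q\matA}$ by the same grouping used above.

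Part 1 is where the transpose appears and is the main step that needs care. By identity (2) of the prelude, ${\cal U}\cdot_{0}\kket{\matA} = \frac{1}{\FNorm{\matA}}\sum_{j}(U\e_{j}^{n})\otimes (\matA\e_{j}^{n})$. The obstacle is that the unitary now acts on the \emph{column} register, not on entries of $\matA$, so I need to reindex. I would expand $U\e_{j}^{n} = \sum_{k} U_{k,j}\e_{k}^{n}$, swap the order of summation, and pull the scalars $U_{k,j}$ inside the second tensor factor to obtain
\[
{\cal U}\cdot_{0}\kket{\matA} \;=\; \frac{1}{\FNorm{\matA}}\sum_{k}\Ket{k}\otimes \matA\Bigl(\sum_{j}U_{k,j}\e_{j}^{n}\Bigr) \;=\; \frac{1}{\FNorm{\matA}}\sum_{k}\Ket{k}\otimes \matA U^{\T}\e_{k}^{n},
\]
where the last equality uses $(U^{\T}\e_{k}^{n})_{j} = U_{k,j}$. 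The final sum is exactly the grouped form of $\kket{\matA U^{\T}} = \kket{\matA\matM({\cal U})^{\T}}$, completing the proof.

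The only thing that can go wrong is a bookkeeping slip with the transpose: it arises because the column register stores column indices of $\matA$, so acting on those indices by $U$ left-multiplies the index vector by $U$, which is equivalent to right-multiplying $\matA$ by $U^{\T}$. Everything else is a direct consequence of the linearity of the Kronecker product and the unitary-invariance of $\FNorm{\cdot}$.
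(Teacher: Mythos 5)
Your proof is correct, and it takes a genuinely more elementary route than the paper's. The paper dispatches both parts in a few lines by citing the vectorization identity $\vec{\matA\matB\matC}=(\matC^{\T}\otimes\matA)\vec{\matB}$ together with $\matM({\cal U}\otimes{\cal I})=\matM({\cal U})\otimes\matI$, so that ${\cal U}\cdotp_{0}\kket{\matA}=\Ket{(\matM({\cal U})\otimes\matI)\vec{\matA}}=\Ket{\vec{\matA\matM({\cal U})^{\T}}}$, and symmetrically $\Ket{(\matI\otimes\matM({\cal Q}))\vec{\matA}}=\Ket{\vec{\matM({\cal Q})\matA}}$ for part 2. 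You instead group $\kket{\matA}$ column-by-column and re-derive the needed special case of that identity by hand: part 2 by noting $\matM({\cal Q})\matA\e_{j}^{n}$ is the $j$th column of $\matM({\cal Q})\matA$, and part 1 by expanding $\matM({\cal U})\e_{j}^{n}$ in the standard basis and reindexing, which is exactly where the transpose emerges via $(\matM({\cal U})^{\T}\e_{k}^{n})_{j}=\matM({\cal U})_{k,j}$. The paper's version is shorter and leans on a textbook citation; yours is self-contained and makes the mechanism (why acting on the column register right-multiplies by the transpose) explicit. One minor point: the prelude identities are stated for product states, so applying them term-by-term to the superposition implicitly uses linearity of the underlying unitary $\matM({\cal U})\otimes\matI_{m}$ (resp.\ $\matI_{n}\otimes\matM({\cal Q})$) on amplitude vectors; since you frame the whole computation in amplitude space and handle normalization via unitary invariance of $\FNorm{\cdot}$, this is sound.
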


\begin{proof}
We use the following identity: $\vec{\matA\matB\matC}=(\matC^{\T}\otimes\matA)\vec{\matB})$
\cite[Proposition 7.1.9]{bernstein2009matrix}. We have,

\begin{align*}
{\cal U}\cdotp_{0}\kket{\matA}=({\cal U}\otimes{\cal I})\Ket{\vec{\matA}} & =\Ket{\matM({\cal U}\otimes{\cal I})\vec{\matA}}\\
 & =\Ket{(\matM({\cal U})\otimes\matI)\vec{\matA}}\\
 & =\Ket{\vec{\matI\matA\matM({\cal U})^{\T}}}\\
 & =\kket{\matA\matM({\cal U})^{\T}}
\end{align*}
and,
\begin{align*}
{\cal Q}\cdotp_{1}\kket{\matA}=({\cal I\otimes}{\cal Q})\Ket{\vec{\matA}} & =\Ket{\matM({\cal I}{\cal \otimes}{\cal Q})\vec{\matA}}\\
 & =\Ket{(\matI\otimes\matM({\cal Q}))\vec{\matA}}\\
 & =\Ket{\vec{\matM({\cal Q})\matA\matI}}\\
 & =\kket{\matM({\cal Q})\matA}
\end{align*}
\end{proof}

\subsubsection{Permutation of qubits and registers}

Let $q$ be the number of qubits in a system, and consider a permutation
$\sigma:\{0,\ldots,q-1\}\to\{0,\ldots,q-1\}$. Denote by ${\cal S}_{\sigma}$
be the unitary that permutes the qubits according to $\sigma$. In
the computational basis we have
\[
\calS_{\sigma}\Ket{b_{0}\cdots b_{q-1}}_{q}=\Ket{b_{\sigma(0)}\cdots b_{\sigma(q-1)}}_{q}
\]
where $b_{0}\cdots b_{q-1}$ is the binary expansion of the index
of a basis state. A circuit implementing ${\cal S}_{\sigma}$ can
be built using SWAP gates. A simple algorithm for forming ${\cal S}_{\sigma}$
is the following. Given the sequence $(\sigma(0),\dots,\sigma(q-1))$,
find a sequence of swaps that sort the sequence (e.g., by tracking
the swaps of a sorting algorithm). That is find $(i_{1},j_{1}),\dots,(i_{T},j_{T})$
such that $\sigma_{(i_{T},j_{T})}\circ\cdots\circ\sigma_{(i_{1},j_{1})}\circ\sigma=\sigma_{I}$
where $\sigma_{(i,j)}$ is the permutation that swaps the entry $i$
and $j$, and $\sigma_{I}$ is the identity permutation. Let $\calS_{(i,j)}$
denote the circuit which swaps qubits $i$ and $j$ (via a single
SWAP gate), but keeps other qubits in place. Now, ${\cal S}_{(i_{T},j_{T})}\cdots{\cal S}_{(i_{1},j_{1})}{\cal S}_{\sigma}={\cal I}$.
By applying the inverse of these swap operations from the left, we
obtain a implementation of $\calS_{\sigma}$ as $\calS_{(i_{1},j_{1})}\cdots\calS_{(i_{T},j_{T})}$.
The set of swaps can be found using sorting algorithm in $O(q\log q)$
operations. This is also the bound on the number of gates in ${\cal S}_{\sigma}$,
and on the depth.

However, our algorithms typically permute full registers rather than
individual bits. In such cases, to distinguish register permutations
from qubit permutations, we use the bar notation $\bar{\sigma}$ to
indicate that the permutation is to be applied to registers. It is
assumed to have signature $\bar{\sigma}:\{0,\ldots,r-1\}\to\{0,\ldots,r-1\}$
for $r$ registers. In particular, suppose we have $r$ registers,
of sizes $q_{0},\dots,q_{r-1}$, and consider a permutation $\bar{\sigma}:\{0,\ldots,r-1\}\to\{0,\ldots,r-1\}$,
then define the circuit ${\cal S}_{\bar{\sigma}}^{R}$ as the circuit
on $\sum_{i=0}^{m-1}q_{i}$ qubits for which for all states $\Ket{\psi_{0}}_{q_{0}},\dots,\Ket{\psi_{r-1}}_{q_{r-1}}$we
have 
\[
{\cal S}_{\bar{\sigma}}^{R}\cdot\Ket{\psi_{0}}_{q_{0}}\cdots\Ket{\psi_{r-1}}_{q_{r-1}}=\Ket{\psi_{\bar{\sigma}(0)}}_{q_{\bar{\sigma}(0)}}\cdots\Ket{\psi_{\bar{\sigma}(r-1)}}_{q_{\bar{\sigma}(r-1)}}
\]

\begin{example}
Given $(\log_{2}m+\log_{2}n)$-qubit system and a permutation $\bar{\sigma}_{\T}=(1,0)$
(i.e., swap the order of the two registers) we have that,
\[
\calS_{\bar{\sigma}_{\T}}^{R}\Ket{\phi}_{\log_{2}n}\Ket{\psi}_{\log_{2}m}=\Ket{\psi}_{\log_{2}m}\Ket{\phi}_{\log_{2}n}
\]
We show in Section~\ref{subsec:matrix-transpose} that $\calS_{\bar{\sigma}_{\T}}^{R}$
applied to $\kket{\matA}$ results in $\kket{\matA^{\T}}$.
\end{example}

\subsubsection{Eliminating permutations in state preparation circuits}

In the subsequent sections, various qMSLA operations are described,
and from these descriptions, a common structure emerges in the output
of straightforward implementations. Suppose the output is a state
preparation circuit ${\cal U}_{\matX}$ for some matrix $\matX$.
Typically, ${\cal U}_{\matX}$ starts with a series of SWAP gates
that effectively implements a permutation on the qubits, next some
circuit ${\cal Q}$ is executed, and finally another series of SWAP
operations are executed effectively implementing another permutation
on the qubits. In short, the output can be written as $\mathcal{U}_{\matX}={\cal S}_{\sigma_{1}}{\cal Q}{\cal S}_{\sigma_{2}}$.
However, conceptually, since qMSLA concerns only with the matrix its
outputs prepare (i.e., their operation on the ground state $\Ket 0$),
any circuit ${\cal U}_{\matX}^{'}$ will do equally well as long as
it is a state preparation circuit for $\matX$ as well, even if $\matM({\cal U}_{\matX})\neq\matM({\cal U}_{\matX}^{'})$.
In this subsection, we show how given $\sigma_{1}$, $\sigma_{2}$
and ${\cal Q}$, we can construct a circuit ${\cal U}_{\matX}^{'}$
that prepares the same matrix as $\mathcal{U}_{\matX}={\cal S}_{\sigma_{1}}{\cal Q}{\cal S}_{\sigma_{2}}$
but does not use any SWAPs. We call this process \noun{EliminatePremutations.}

As stated earlier, in the appendix we give low-level implementation
details of qMSLA operations based on the high-level descriptions in
this section. In translating the high-level descriptions of Sections~\ref{subsec:lev1}
and \ref{subsec:lev2} to low-level pseudocode descriptions, we implicitly
use \noun{EliminatePremutations }as a meta-algorithm that transforms
the output circuit to a more efficient ones. The high-level descriptions
of Sections~\ref{subsec:lev1} and \ref{subsec:lev2} do not use
\noun{EliminatePremutations. }However, the circuit diagrams do show
the boundaries on which \noun{EliminatePremutations} is applied, and
various complexity statements assume that SWAPS have been eliminated
using \noun{EliminatePremutations.}

\noun{EliminatePremutations} proceeds as followed. First, it eliminates
any leading SWAP gates, as swaps applied to the ground state have
no effect, i.e. ${\cal S}_{\sigma_{2}}\Ket 0=\Ket 0$. Next\noun{,
EliminatePremutations} calculates the inverse permutation $\sigma_{1}^{-1}$,
inverts the circuit ${\cal Q}$, and composes the empty circuit with
${\cal Q}^{-1}$. However, in the composition, wire $i$ is connected
to wire $\sigma_{1}(i)$ in ${\cal Q}^{-1}$ input. Finally, \noun{EliminatePremutations}
inverts the output circuit, resulting in $\mathcal{U}_{\matX}^{'}$.
 More details, including a proof of correctness, are provided in
the appendix.

\subsection{\label{subsec:lev1}Level-1 qMSLA operations}

Level 1 operations create or operate on state preparation circuits
using low-level circuit building operations. For more details on implementation,
including pseudo-code for all level 1 operations, see Appendix \ref{sec:Pseudo-Code-of-level1}.

\subsubsection{\label{subsec:identity}Preparing an identity matrix:\protect \\
 ${\cal U}_{\protect\matI_{n}}\gets\textrm{qmsla.identity}(n)$}

Given $2q$-qubits, split between two $q$-qubit registers, consider
the maximally entangled state 
\[
\Ket{\psi}_{2q}=\frac{1}{\sqrt{n}}\sum_{k=0}^{n-1}\Ket k_{q}\otimes\Ket k_{q}
\]
where $n=2^{q}$. We have 
\[
\Ket{\psi}_{2q}=\frac{1}{\sqrt{n}}\sum_{k=0}^{n-1}\Ket k_{q}\otimes\Ket k_{q}=\frac{1}{\sqrt{n}}\sum_{n=0}^{n-1}\kket{\e_{k}\e_{k}^{\T}}=\kket{\matI_{n}}
\]
Thus, to implement a state preparation circuit ${\cal U}_{\mat I_{n}}$
for $\matI_{n}$ we need to find a circuit that given $\Ket 0_{2q}$
creates the maximally entangled state $\Ket{\psi}_{2q}$. It is well
known that the circuit depicted in Figure~\ref{fig:circuit-for-psi}
prepares this state. Thus, $g_{\matI_{n}}=2q$ and $d_{\matI_{n}}=2$.

\begin{figure}[tph]
\centering    
\begin{tabular}{c  c  c}
\Qcircuit @C=1em @R=.9em  {    &\lstick{} 	&  \qw 		& \multigate{5}{{\cal U}_{\matI_{n}}}    &   		& 					& &	& \\ &\lstick{} 	&  \qw 		& \ghost{{\cal U}_{\matI_{n}}} 			 & {/} \qw	&  \qw   		\inputgrouph{1}{6}{4.0em}{\Ket 0_{2\log_{2}n}}{1.0em} & \dstick{c:\log_{2}n} & &  \\ &\lstick{} 	&  \qw 		& \ghost{{\cal U}_{\matI_{n}}}  		 &   		& 					& &	& \rstick{\kket{\matI}}  \\ &\lstick{} 	&  \qw 		& \ghost{{\cal U}_{\matI_{n}}}  		 &   		& 					& &	&  \\ &\lstick{} 	&  \vdots 	& \nghost{{\cal U}_{\matI_{n}}} 		 &			{/} \qw & \qw    & \dstick{r:\log_{2}n} & & \\		 &\lstick{} 	&  \qw 		& \ghost{{\cal U}_{\matI_{n}}} 			 & 			 &    & &  &  \\		   } 
& 
\raisebox{-4.0em}{\hspace{1cm}\scalebox{1.5}{\quad = \quad}}    
& 
\hspace{1.25cm}  
\Qcircuit @C=1em @R=.9em  {    &\lstick{} 	&  \qw 	& \gate{\calH}  			&  \ctrl{3} & \qw		& \qw \inputgrouph{1}{6}{3.75em}{\Ket 0_{2\log_{2}n}}{1.0em}	& \qw 		&  \\ &\lstick{} 	&   	& \raisebox{1.5ex}{\vdots}	& 	    	& 			&  \raisebox{1.5ex}{\vdots}	 									& 			&   \rstick{\hspace{-0.5em}c:\log_{2}n}  		\\ &\lstick{} 	&  \qw 	& \gate{\calH} 			 	& \qw 		& \ctrl{3}	& \qw															& \qw 		& 	 \rstick{\raisebox{3.0em}{\hspace{+4em}}\kket{\matI}} 		\\ &\lstick{} 	&  \qw 	& \qw  						& \targ		& \qw 		& \qw															& \qw 		&   											\\ &\lstick{} 	&  		& \raisebox{1.5ex}{\vdots}	& 		 	& 			&  \raisebox{1.5ex}{\vdots}										& 			& \rstick{\hspace{-0.5em}r:\log_{2}n}	\\		 &\lstick{} 	&  \qw 	& \qw	  					& \qw		& \targ 	&  \qw															& \qw 		& 																												 \\		   } 
\end{tabular}

\caption{\label{fig:circuit-for-psi}Graphical description of $\text{qmsla.identity}$
(Input: $n$, Output: ${\cal U}_{\protect\matI_{n}}$).}
\end{figure}
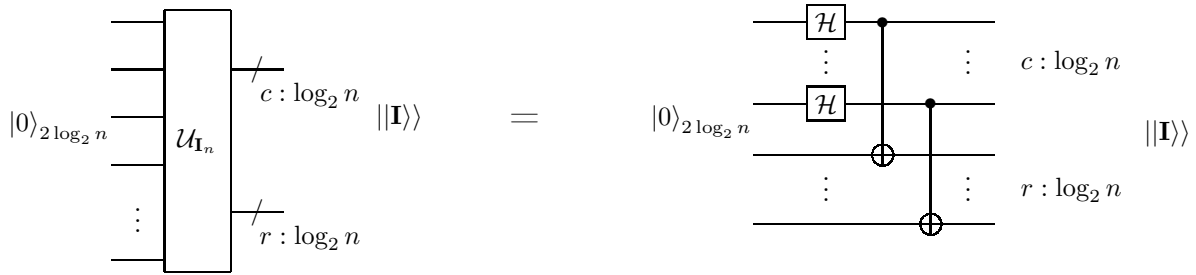

\subsubsection{\label{subsec:mu-state-preparation}State preparation for $\protect\matM({\cal U})$:
\protect \\
${\cal U}_{\protect\matM({\cal U})}\gets\textrm{qmsla.matrix}({\cal U})$}

Consider a circuit ${\cal U}$ on $q$ qubits and denote $n=2^{q}$.
Due to Lemma~\ref{prop:upper-lower-opperations} we have, 
\begin{align*}
{\cal U}\cdotp_{1}{\cal U}_{\matI_{n}}\Ket 0_{2q} & ={\cal U}\cdotp_{1}\kket{\matI_{n}}\\
 & =\kket{\matM({\cal U})\matI_{n}}\\
 & =\kket{\matM({\cal U})}
\end{align*}
Thus, ${\cal U}\cdotp_{1}{\cal U}_{\matI_{n}}$ is a state preparation
circuit for $\matM({\cal U})$. See Figure~\ref{fig:circuit-for-u_mu}
for a graphical description. We have $g_{\matM({\cal U})}=g({\cal U})+2q$
and $d_{\matM({\cal U})}=d({\cal U})+2$. The complexity of forming
${\cal U}_{\matM({\cal U})}$ is $O(g_{\matM({\cal U})})$.

\newcommand{\nmultigate}[2]{*+<1em,.9em>{\hphantom{#2}} \POS [0,0]="i",[0,0].[#1,0]="e",!C *{#2},"e"+UR;"e"+UL **\dir{-};"e"+DL **\dir{-};"e"+DR **\dir{-};"e"+UR **\dir{-},"i" }     

\begin{figure}[tph]
\noindent \begin{centering}
\centering     
\begin{tabular}{c  c  c} 
\Qcircuit @C=1em @R=.9em  {    &\lstick{} 	&  \qw 		& \multigate{4}{{\cal U}_{\matM({\cal U})}}    &   		& 					& &	& \\ &\lstick{} 	&  \qw 		& \ghost{{\cal U}_{\matM({\cal U})}} 		   & {/} \qw	&  \qw   		\inputgrouph{1}{5}{3.5em}{\Ket 0_{2\log_{2}n}}{1.0em} & \dstick{c:\log_{2}n} & &  \\ &\lstick{} 	&  \qw 		& \ghost{{\cal U}_{\matM({\cal U})}}  		   &   		& 					& &	& \rstick{\kket{\matM({\cal U})}}  \\ &\lstick{} 	&  \vdots 	& \nghost{{\cal U}_{\matM({\cal U})}} 		   &			{/} \qw & \qw    & \dstick{r:\log_{2}n} & & \\		 &\lstick{} 	&  \qw 		& \ghost{{\cal U}_{\matM({\cal U})}} 		   & 			 &    & &  &  \\		   } 

& 
\raisebox{-4.0em}{\hspace{1cm}\scalebox{1.5}{\quad = \quad}}     
& 
\hspace{1.25cm}   
\Qcircuit @C=1em @R=.9em  {                                                      &\lstick{} 	&  \qw 		& \multigate{4}{{\cal U}_{\matI_{n}}}   &   						 &   					&   	&			&   & \\ &\lstick{} 	&  \qw 		& \ghost{{\cal U}_{\matI_{n}}} 			& \qw    					 &  \qw       					&  \qw \inputgrouph{1}{5}{3.7em}{\Ket 0_{2\log_{2}n}}{1.0em}	& {/} \qw  	& \dstick{\raisebox{-1.0em}{\hspace{0.5em}}c:\log_{2}n}  \qw  &  \\ &\lstick{} 	&  \qw 		& \ghost{{\cal U}_{\matI_{n}}}  		& 							 &  \nmultigate{2}{{\cal U}} &		&			&   &  \rstick{\raisebox{-0.5em}{\hspace{+2em}}\kket{\matM({\cal U})} }  \\ &\lstick{} 	&  \vdots 	&\nghost{{\cal U}_{\matI_{n}}} 			& {/} \qw					 &  \ghost{{\cal U}}       &  \qw  & {/} \qw 	& \dstick{\raisebox{-1.0em}{\hspace{0.5em}}r:\log_{2}n} \qw  & \\		 &\lstick{} 	&  \qw 		& \ghost{{\cal U}_{\matI_{n}}} 			& 							 &  \nghost{{\cal U}}  		& 		&  			&   &  \\		   }
\end{tabular}
\par\end{centering}
\caption{\label{fig:circuit-for-u_mu}Graphical description of $\text{qmsla.matrix}$
(Input: ${\cal U}$, Output: ${\cal U}_{\protect\cirmat{{\cal U}}}$).}
\end{figure}

In a sense, the ability to build ${\cal U}_{\matM({\cal U})}$ from
a circuit ${\cal U}$ hints that usage of the state preparation model
constitutes making weaker assumptions than the ones made when using
the block encoding model, since block encodings can be converted to
state preparation circuits.

\subsubsection{Matrix conjugate:\protect \\
${\cal U}_{\overline{\protect\matA}}\gets\textrm{qmsla.conjugate}({\cal U}_{\protect\matA})$}

Given a state preparation circuit for a matrix $\matA\in\C^{m\times n}$,
we can construct a state preparation circuit for $\overline{\matA}$
by conjugating the circuit itself. This is shown in the following
proposition.
\begin{prop}
\emph{For a matrix state preparation circuit ${\cal U}_{\matA}$,
we have 
\[
\overline{{\cal U}_{\matA}}={\cal U}_{\overline{\matA}}
\]
Thus, given ${\cal U}_{\matA}$ we can compute a description of ${\cal U}_{\overline{\matA}}$
with depth $d_{\matA}$ in cost $O(g_{\matA})$.}
\end{prop}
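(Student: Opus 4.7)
The plan is to exploit the fact that entry-wise complex conjugation is a ring homomorphism on $\C$ that extends to matrix products and tensor products, namely $\overline{\matM\matN} = \overline{\matM}\cdot\overline{\matN}$ and $\overline{\matM\otimes\matN} = \overline{\matM}\otimes\overline{\matN}$. These identities imply that if ${\cal U}_{\matA}$ is a sequence of elementary gates, and we let $\overline{{\cal U}_{\matA}}$ denote the circuit obtained by replacing each gate $G$ with the gate whose unitary is $\overline{\matM(G)}$ (acting on the same wires), then $\matM(\overline{{\cal U}_{\matA}}) = \overline{\matM({\cal U}_{\matA})}$. This is the core identity of the proof. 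I would first verify it formally by induction on the gate count, using the fact that applying a gate to a register is represented by a tensor product with identities followed by composition with the circuit's unitary so far, and that conjugation distributes over both operations.

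Next, I would apply this conjugated circuit to the ground state. Since $\Ket{0}_{q_{\matA}}$ corresponds to the real (in fact $0/1$) amplitude vector $\e_{0}$, we have
\[
\matM(\overline{{\cal U}_{\matA}})\,\e_{0} \;=\; \overline{\matM({\cal U}_{\matA})}\,\e_{0} \;=\; \overline{\matM({\cal U}_{\matA})\,\e_{0}} \;=\; \overline{\vec{\matA}/\FNorm{\matA}} \;=\; \vec{\overline{\matA}}/\FNorm{\overline{\matA}},
\]
where in the last equality I use that $\FNorm{\matA} = \FNorm{\overline{\matA}}$ and that vectorization commutes with entry-wise conjugation. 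By Definition~\ref{def:matrix-circ}, this exactly says that $\overline{{\cal U}_{\matA}}$ is a matrix state preparation circuit for $\overline{\matA}$, i.e.\ $\overline{{\cal U}_{\matA}} = {\cal U}_{\overline{\matA}}$.

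For the complexity bounds, observe that conjugating a circuit gate-by-gate preserves the DAG structure of the circuit: only the per-gate unitaries change, not their placement or ordering. Consequently the number of gates and the critical-path depth of $\overline{{\cal U}_{\matA}}$ are identical to those of ${\cal U}_{\matA}$, giving $g_{\overline{\matA}} = g_{\matA}$ and $d_{\overline{\matA}} = d_{\matA}$. The classical construction simply iterates over the gate list of ${\cal U}_{\matA}$ and replaces each gate's matrix with its conjugate, costing $O(g_{\matA})$ time (assuming each elementary gate acts on $O(1)$ qubits so that its $O(1)$-sized unitary can be conjugated in constant time).

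The only mildly subtle point — and what I expect to be the main obstacle to state cleanly — is the bookkeeping in the inductive step showing $\matM(\overline{{\cal U}_{\matA}}) = \overline{\matM({\cal U}_{\matA})}$ when gates act on proper subsets of qubits. I would handle this by writing each single-gate step as $\matI\otimes\matM(G)\otimes\matI$ (up to a qubit permutation that is itself represented by a real permutation matrix, hence invariant under conjugation), and then using the two distributivity identities above together with $\overline{\matI} = \matI$. Once this homomorphism is nailed down, the rest of the argument is essentially a one-line computation.
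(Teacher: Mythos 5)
Your proposal is correct and follows essentially the same route as the paper: conjugate the circuit gate-by-gate, use the identity $\matM(\overline{{\cal U}_{\matA}})=\overline{\matM({\cal U}_{\matA})}$ (which the paper delegates to its level-0 operation $\textrm{qmsla.qc\_conjugate}$), and conclude from the first-column/ground-state characterization of a matrix state preparation circuit that the result prepares $\kket{\overline{\matA}}$, with unchanged gate count and depth. Your inductive justification of the conjugation homomorphism and the explicit normalization bookkeeping ($\FNorm{\matA}=\FNorm{\overline{\matA}}$) only spell out steps the paper treats as immediate.
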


\begin{proof}
Recall that the first column of $\matM({\cal U}_{\matA})$ is $\vec{\matA}$,
so 
\begin{align*}
\matM(\overline{{\cal U}_{\matA}}) & =\overline{\matM({\cal U}_{\matA})}\\
 & =\overline{\left[\begin{array}{cccc}
| & * & \cdots & *\\
\vec{\matA} & \vdots &  & \vdots\\
| & * & \cdots & *
\end{array}\right]}\\
 & =\left[\begin{array}{cccc}
| & * & \cdots & *\\
\vec{\overline{\matA}} & \vdots &  & \vdots\\
| & * & \cdots & *
\end{array}\right]\\
 & =\matM({\cal U}_{\overline{\matA}}).
\end{align*}
\end{proof}

\subsubsection{\label{subsec:matrix-transpose}Matrix transpose:\protect \\
${\cal U}_{\protect\matA^{\protect\T}}\gets\textrm{qmsla.transpose}({\cal U}_{\protect\matA})$}

Given a matrix state preparation circuit for a matrix $\matA\in\C^{m\times n}$,
we can construct a matrix state preparation circuit for $\matA^{\T}$
based on the following proposition. See Figure~\ref{fig:matrix-transpose}
for a graphical description. It is easy to see then that $g_{\matA^{\T}}=g_{\matA}$
and $d_{\matA^{\T}}=d_{\matA}$, if \noun{EliminatePermutations} is
used to reduce the gate complexity and depth. The cost of $\text{qmsla.transpose}$
is $O(g_{\matA^{\T}})$.
\begin{prop}
\label{prop:matrix-transpose}Given $(\log_{2}n+\log_{2}m)$-qubit
\emph{state preparation circuit} ${\cal U}_{\matA}$, the circuit
$\calS_{(1,0)}^{R}\cdot{\cal U}_{\matA}$ is a state preparation circuit
for $\matA^{\T}$.
\end{prop}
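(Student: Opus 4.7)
The plan is to verify the proposition by a direct calculation: apply $\calS_{(1,0)}^{R}$ to the state that ${\cal U}_{\matA}$ produces (namely $\kket{\matA}$) and show that the resulting state is exactly $\kket{\matA^{\T}}$. Since ${\cal U}_{\matA}$ is a matrix state preparation circuit, applying $\calS_{(1,0)}^{R}\cdot{\cal U}_{\matA}$ to $\Ket 0_{\log_{2}mn}$ yields $\calS_{(1,0)}^{R}\kket{\matA}$, so the whole statement reduces to proving the identity $\calS_{(1,0)}^{R}\kket{\matA}=\kket{\matA^{\T}}$.

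First I would expand $\kket{\matA}$ in the computational basis using the definition given in the preliminaries:
\[
\kket{\matA}=\frac{1}{\FNorm{\matA}}\sum_{i=0}^{m-1}\sum_{j=0}^{n-1}a_{i,j}\Ket{j}_{\log_{2}n}\Ket{i}_{\log_{2}m},
\]
where the first (column) register has $\log_{2}n$ qubits and the second (row) register has $\log_{2}m$ qubits. Next, I would apply $\calS_{(1,0)}^{R}$, which by definition swaps the two registers as whole blocks, obtaining
\[
\calS_{(1,0)}^{R}\kket{\matA}=\frac{1}{\FNorm{\matA}}\sum_{i=0}^{m-1}\sum_{j=0}^{n-1}a_{i,j}\Ket{i}_{\log_{2}m}\Ket{j}_{\log_{2}n}.
\]

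Then I would compare this expression to $\kket{\matA^{\T}}$. Writing $\matA^{\T}\in\C^{n\times m}$ directly via the definition, the column register for $\matA^{\T}$ has size $\log_{2}m$ and the row register has size $\log_{2}n$, and with $(\matA^{\T})_{j,i}=a_{i,j}$ together with $\FNorm{\matA^{\T}}=\FNorm{\matA}$, we get
\[
\kket{\matA^{\T}}=\frac{1}{\FNorm{\matA}}\sum_{j=0}^{n-1}\sum_{i=0}^{m-1}a_{i,j}\Ket{i}_{\log_{2}m}\Ket{j}_{\log_{2}n},
\]
which matches the previous expression term by term. This establishes the identity and hence the proposition.

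The computation itself is routine; the only subtlety worth being explicit about is the bookkeeping of the register sizes and the convention that the column indices live in the top register and row indices in the bottom register (as fixed by the definition of $\kket{\cdot}$). Once that convention is tracked carefully, the register swap $\calS_{(1,0)}^{R}$ visibly implements the transpose on the encoded matrix, and no further work is needed.
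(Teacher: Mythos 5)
Your proof is correct and follows essentially the same route as the paper's: expand $\kket{\matA}$ in the computational basis with the column register first and the row register second, apply the register swap $\calS_{(1,0)}^{R}$, and identify the result with $\kket{\matA^{\T}}$ using $(\matA^{\T})_{j,i}=a_{i,j}$ and $\FNorm{\matA^{\T}}=\FNorm{\matA}$. The only cosmetic difference is that the paper also writes the intermediate states as single indices $\Ket{mj+i}$ and $\Ket{ni+j}$, which your register-level bookkeeping covers equivalently.
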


\begin{proof}
Denote $\bar{\sigma}_{\T}=(1,0)$ . We have that,
\begin{align*}
\calS_{\bar{\sigma}_{\T}}^{R}\cdot{\cal U}_{\matA}\Ket 0_{\log_{2}mn} & =\calS_{\bar{\sigma}_{\T}}^{R}\kket{\matA}\\
 & =\calS_{\bar{\sigma}_{\T}}^{R}\left(\frac{1}{\FNorm{\matA}}\sum_{j=0}^{n-1}\sum_{i=0}^{m-1}a_{i,j}\Ket{mj+i}_{\log_{2}mn}\right)\\
 & =\calS_{\bar{\sigma}_{\T}}^{R}\left(\frac{1}{\FNorm{\matA}}\sum_{j=0}^{n-1}\sum_{i=0}^{m-1}a_{i,j}\Ket j_{\log_{2}n}\Ket i_{\log_{2}m}\right)\\
 & =\frac{1}{\FNorm{\matA^{\T}}}\sum_{j=0}^{n-1}\sum_{i=0}^{m-1}a_{i,j}\Ket i_{\log_{2}m}\Ket j_{\log_{2}n}\\
 & =\frac{1}{\FNorm{\matA^{\T}}}\sum_{j=0}^{n-1}\sum_{i=0}^{m-1}a_{i,j}\Ket{ni+j}_{\log_{2}mn}\\
 & =\kket{\matA^{\T}}
\end{align*}
So, we see that $\calS_{(1,0)}^{R}\cdot{\cal U}_{\matA}$ is a state
preparation circuit for $\matA^{\T}$.
\end{proof}

\begin{figure}[tph]
\noindent \begin{centering}
\centering     
\begin{tabular}{c  c  c} 
\Qcircuit @C=1em @R=.9em  {    & 	&   		&     &   		& 					& &	& \\ &\lstick{} 	&  \qw 		& \multigate{4}{{\cal U}_{\matA^{\T}}}     &   		& 					& &	& \\ &\lstick{} 	&  \qw 		& \ghost{{\cal U}_{\matA^{\T}}} 		   & {/} \qw	&  \qw   		\inputgrouph{2}{6}{3.5em}{\Ket 0_{\log_{2}mn}}{1.0em} & \dstick{c:\log_{2}n} & &  \\ &\lstick{} 	&  \qw 		& \ghost{{\cal U}_{\matA^{\T}}}  		   &   		& 					& &	& \rstick{\kket{\matA^{\T}}}  \\ &\lstick{} 	&  \vdots 	&\nghost{{\cal U}_{\matA^{\T}}} 		   &			{/} \qw & \qw    & \dstick{r:\log_{2}m} & & \\		 &\lstick{} 	&  \qw 		& \ghost{{\cal U}_{\matA^{\T}}} 		   & 			 &    & &  &  \\		   }  
& 
\raisebox{-5.5em}{\hspace{1cm}\scalebox{1.5}{\quad = \quad}}     
& 
\hspace{1.25cm}   
\Qcircuit @C=1em @R=.9em  {        & 	&   		& &   	{\hspace{-1.0em}\tiny\textsc{EliminatePermutations}}					&   			&   		 	&   		&			&   & \\ &\lstick{} 	&  \qw 		& \multigate{4}{{\cal U}_{\matA}} &   						&   			&  \nmultigate{4}{\calS_{\bar{\sigma}_{\T}}^{R}} 													 	&   		&			&   & \\ &\lstick{} 	&  \qw 		& \ghost{{\cal U}_{\matA}} 		  & \dstick{{\hspace{0.85em}}\log_{2}m}  \qw & {/} \qw 	&  \ghost{\calS_{\bar{\sigma}_{\T}}^{R}}  \inputgrouph{2}{6}{3.7em}{\Ket 0_{\log_{2}mn}}{1.0em}     &  \qw 	& {/} \qw  	& \dstick{\raisebox{-1.0em}{\hspace{0.5em}}c:\log_{2}n}  \qw  &  \\ &\lstick{} 	&  \qw 		& \ghost{{\cal U}_{\matA}}  	  &  						&  			&  \nghost{\calS_{\bar{\sigma}_{\T}}^{R}}  															&		&			&   &  \rstick{\raisebox{-0.5em}{\hspace{+2em}}\kket{\matA^{\T}} }  \\ &\lstick{} 	&  \vdots 	&\nghost{{\cal U}_{\matA}} 		  & \dstick{{\hspace{0.7em}}\log_{2}n}  \qw & {/} \qw 	&  \ghost{\calS_{\bar{\sigma}_{\T}}^{R}}																	  	&  \qw   & {/} \qw 	& \dstick{\raisebox{-1.0em}{\hspace{0.5em}}r:\log_{2}m} \qw  & \\		 &\lstick{} 	&  \qw 	\gategroup{2}{4}{6}{7}{.7em}{--}	& \ghost{{\cal U}_{\matA}} 		  & 						& 			&  \nghost{\calS_{\bar{\sigma}_{\T}}^{R}}																		  	& 			&  			&   &  \\		   } 
\end{tabular}
\par\end{centering}
\caption{\label{fig:matrix-transpose}Graphical description of $\text{qmsla.transpose}$
(Input: ${\cal U}_{\protect\matA}$, Output: ${\cal U}_{\protect\matA^{\protect\T}}$).}
\end{figure}

\subsubsection{\label{subsec:vectorize-matrix}Vectorize:\protect \\
${\cal U}_{\protect\vec{\protect\matA}}\gets\textrm{qmsla.vec}({\cal U}_{\protect\matA})$}

Notice that $\kket{\matA}=\kket{\vec{\matA}}$ when considering only
probability amplitudes. Indeed, each state is obtained by a different
logical split of the qubits between the two registers. In other words,
the matrix state preparation circuit ${\cal U}_{\matA}$ can be interpreted
also as ${\cal U}_{\vec{\matA}}$, where the circuit itself is the
same, but the meta-data (partition of the qubits to registers) is
different. Thus, $\textrm{qmsla.vec}$ can be implemented via meta-data
transformation. Gate complexity and depth do not change, and the cost
of the operation is $O(g_{\matA})$.

\subsubsection{\label{subsec:pad-columns}Pad with zero columns: \protect \\
${\cal U}_{\protect\matA\oplus0_{0\times(2^{k}-1)n}}\gets\textrm{qmsla.pad\_zero\_columns}({\cal U}_{\protect\matA},k)$
($k>0$)}

Consider a state preparation circuits ${\cal U}_{\matA}$, where $\matA\in\C^{m\times n}$.
Given an integer $k>0$, the following identity shows how to construct
a matrix state preparation circuit for $\left[\begin{array}{cc}
\matA & \mat 0_{m\times(2^{k}-1)n}\end{array}\right]$:
\begin{align*}
{\cal I}_{k}\otimes{\cal U}_{\matA}\Ket 0_{k}\Ket 0_{\log_{2}mn} & =\Ket 0_{k}\kket{\matA}\\
 & =\kket{\left[\begin{array}{cc}
\matA & \mat 0\end{array}\right]}
\end{align*}
where ${\cal I}_{k}$ is the $k$-qubit empty (identity) circuit.
So, ${\cal U}_{\matA\oplus0_{0\times(2^{k}-1)n}}={\cal I}_{k}\otimes{\cal U}_{\matA}$.\textcolor{red}{{}
}See Figure~\ref{fig:matrix-extened} for a graphical description
of $\textrm{qmsla.pad\_zero\_columns}$. It is easy to see then that
$g_{\matA\oplus0_{0\times(2^{k}-1)n}}=g_{\matA}$ and $d_{\matA\oplus0_{0\times(2^{k}-1)n}}=d_{\matA}$
and the cost of the algorithm is $O(g_{\matA})$.

\begin{figure}[tph]
\noindent \begin{centering}
\centering     
\begin{tabular}{c  c  c} 
\Qcircuit @C=1em @R=.9em  {    & 	&   		&     &   		& 					& &	& \\ &\lstick{} 	&  \qw 		& \multigate{4}{{\cal U}_{\matA\oplus0_{0\times(2^{k}-1)n}}}     &   		& 					& &	& \\ &\lstick{} 	&  \qw 		& \ghost{{\cal U}_{\matA\oplus0_{0\times(2^{k}-1)n}}} 		   & {/} \qw	&  \qw   		\inputgrouph{2}{6}{3.5em}{\hspace{-2.5em}\Ket 0_{\log_{2}nm(2^{k}-1)}}{1.0em} & \dstick{{\hspace{+3.7em}}c:\log_{2}n(2^{k}-1)} & &  \\ &\lstick{} 	&  \qw 		& \ghost{{\cal U}_{\matA\oplus0_{0\times(2^{k}-1)n}}}  		   &   		& 					& &	&   \\ &\lstick{} 	&  \vdots 	&\nghost{{\cal U}_{\matA\oplus0_{0\times(2^{k}-1)n}}} 		   &			{/} \qw & \qw    & \dstick{r:\log_{2}m} & & \\		 &\lstick{} 	&  \qw 		& \ghost{{\cal U}_{\matA\oplus0_{0\times(2^{k}-1)n}}} 		   & 			 &    & &  &  \\	 & 	&   		&     &   		& 					& &	& \\ & 	&   		&     &   		& 					& &	& \\ & 	&   		&     &   		& 					& &	& \\ & 	&   		&     &   		& 					& &	& \\	   } 
& 
\raisebox{-5.5em}{\hspace{1cm}\scalebox{1.5}{\quad = \quad}}     
& 
\hspace{1.25cm}   
\Qcircuit @C=1em @R=.9em  {                                                                                               														 &		&   						& 									&		   &   	                                   	& 		 		&&\\ &\qw	&  \qw 						&\qw 								& \qw 	   & \qw                                    & 				&& \\ &\qw	&  \qw 						&\qw 								& \qw 	   & \qw                                    &   			&&  \\ &		&  \raisebox{1.5ex}{\vdots} &	 								&  	  	   &  	                                    & 		&& \\ &\qw 	&  \qw 						&\qw 								& \qw 	   & \qw \gategroup{1}{2}{10}{6}{0.0em}{--} & 	{/} \qw		&	\qw  \inputgroupv{2}{5}{0.5em}{1.0em}{\hspace{-4.5em} \Ket 0_{k}}			&   \dstick{{\hspace{+3.7em}}c:\log_{2}n(2^{k}-1)} 										\\ &\qw 	&  \qw 						& \multigate{3}{{\cal U}_{\matA}} 	&	\qw	   & \qw  	                                & 	&& \rstick{\raisebox{+3.5em}{\hspace{1.5em}}\kket{\matA\oplus0_{0\times(2^{k}-1)n}}} \\ &\qw 	&  \qw 						& \ghost{{\cal U}_{\matA}}  	  	&  \qw     & \qw                                    & 	 {/} \qw & 	\qw \inputgroupv{6}{9}{0.5em}{2.5em}{\hspace{-2.5em} \Ket 0_{\log_{2}nm}}	& \dstick{r:\log_{2}m} 		\\ & 		&  \vdots 					& \nghost{{\cal U}_{\matA}} 		&  \vdots  & 	                                    &   &&  \\ &\qw	&  \qw 						& \ghost{{\cal U}_{\matA}} 			&  	\qw    & \qw 	                                &		   		&& 	 \\ &		&   						& 									&		   &   	                                   	& 		 		& & 	\\ }
\end{tabular}
\par\end{centering}
\caption{\label{fig:matrix-extened}Graphical description of $\text{qmsla.pad\_zero\_columns}$
(Input: ${\cal U}_{\protect\matA}$ and $k$, Output: ${\cal U}_{\protect\matA\oplus0_{0\times(2^{k}-1)n}}$)}
\end{figure}

\subsubsection{\label{subsec:matrix-vector-product}Matrix-vector product:\protect \\
${\cal U}_{\protect\y}\gets\textrm{qmsla.matrix\_vec}({\cal U}_{\protect\matA},{\cal U}_{\protect\b})$
where $\protect\y=\frac{\protect\matA\protect\b}{\protect\FNorm{\protect\matA}\protect\TNorm{\protect\b}}\oplus\protect\vxi\in\protect\C^{mn}$
for some garbage $\protect\vxi\in\protect\C^{m(n-1)}$ and $\protect\TNorm{\protect\y}=1$}

Assume that $\matA\in\C^{m\times n}$ and $\b\in\C^{n}$. Consider
applying ${\cal U}_{\b}^{\T}$ to the MSB register (i.e. column register)
of the state $\kket{\matA}$. We have (Lemma~\ref{prop:upper-lower-opperations}):
\begin{align*}
{\cal U}_{\b}^{\T}\cdot_{0}\kket{\matA} & =\kket{\matA\matM({\cal U}_{\b}^{\T})^{\T}}\\
 & =\kket{\matA\matM({\cal U}_{\b})}\\
 & =\kket{\matA\left[\begin{array}{cccc}
| & * & \cdots & *\\
\frac{\b}{\TNorm{\b}} & \vdots &  & \vdots\\
| & * & \cdots & *
\end{array}\right]}\\
 & =\kket{\left[\begin{array}{cccc}
| & * & \cdots & *\\
\frac{\matA\b}{\TNorm{\b}} & \vdots &  & \vdots\\
| & * & \cdots & *
\end{array}\right]}\\
 & =\Ket{\left[\begin{array}{c}
\frac{\matA\b}{\TNorm{\b}}\\*
\vdots\\
\vdots\\*
\end{array}\right]}=\Ket{\left[\begin{array}{c}
\frac{\matA\b}{\FNorm{\matA}\TNorm{\b}}\\*
\vdots\\
\vdots\\*
\end{array}\right]}
\end{align*}
where $*$ denotes arbitrary amplitudes. The third equality follows
from the fact that first column of $\matM({\cal U}_{\b})$ is equal
to $\b/\TNorm{\b}$. The equality follows since the ket notation $\Ket{\cdot}$
we use is oblivious to a global scaling by a positive real number
(i.e. $\Ket{\alpha\x}=\Ket{\x}$ for any real $\alpha>0$), and so
we can rescale the first element and the additional arbitrary amplitudes
below it. We now denote the values in the lower part of the vector
by $\vxi$. The normalization we introduced by dividing by $\FNorm{\matA}$
ensures that $\TNorm{\y}=1$. 

See Figure~\ref{fig:matrix-vector-product} for a graphical description
of $\textrm{qmsla.matrix\_vec}$. The gate complexity is $g_{\matA}+g_{\b}$
and the depth is bounded by $d_{\matA}+d_{\b}$. The number of qubits
is $q_{\matA}$. The cost of the algorithm is $O(g_{\matA}+g_{\b})=O(g_{\y})$.

\begin{figure}[tph]
\noindent \begin{centering}
\centering    
\begin{tabular}{c  c  c}
\Qcircuit @C=1em @R=.9em  {     &\lstick{} 	&  		&    &   		& 					& &	& \\	  &\lstick{} 	&  		&    &   		& 					& &	& \\	    &\lstick{} 	&  \qw 		& \multigate{4}{{\cal U}_{\frac{\matA\b}{\FNorm{\matA}\TNorm{\b}}\oplus\vxi}}    &   		& 					& &	& \\  &\lstick{} 	&  \qw 		& \ghost{{\cal U}_{\frac{\matA\b}{\FNorm{\matA}\TNorm{\b}}\oplus\vxi}} 		   & 	&   & &  \\  &\lstick{} 	&  \qw 		& \ghost{{\cal U}_{\frac{\matA\b}{\FNorm{\matA}\TNorm{\b}}\oplus\vxi}}  		   & {/} \qw  		& 	\qw   		\inputgrouph{2}{6}{3.5em}{\Ket 0_{\log_{2}mn}}{1.0em} & \dstick{r:\log_{2}mn}				& &	&   \\  &\lstick{} 	&  \vdots 	&\nghost{{\cal U}_{\frac{\matA\b}{\FNorm{\matA}\TNorm{\b}}\oplus\vxi}} 		   &			 &  &  & & \\		  &\lstick{} 	&  \qw 		& \ghost{{\cal U}_{\frac{\matA\b}{\FNorm{\matA}\TNorm{\b}}\oplus\vxi}} 		   & 			 &    & &  &  \\	  &\lstick{} 	&  		&    &   		& 					& &	& \\	   }  
& 
\raisebox{-5.0em}{\scalebox{1.5}{\quad = \quad}}    
& 
\hspace{1.25cm}  
\Qcircuit @C=1em @R=.9em  {         &			&  		&   {\hspace{-1.0em}\tiny\textsc{Vec}}  &  & & &   	&			&   & \\    &			&  		&    									&  & & &   	&			&   & \\                                              &\lstick{} 	&  \qw & \multigate{4}{{\cal U}_{\matA}}   	&  & &		 		&   	&			&   & \\ &\lstick{} 	&  \qw & \ghost{{\cal U}_{\matA}} 				&  {/} \qw	& \dstick{{\hspace{-0.7em}}{\tiny\textrm{r:\ensuremath{\log_{2}n}}}}	\qw		&	\qw			&  \qw 	&   	&  &  \\ &\lstick{} 	&  \qw 									& \ghost{{\cal U}_{\matA}}  			& 			&	&	  \nmultigate{2}{{\cal U}_{\b}^{\T}} 		&		& {/} \qw			&  \dstick{\raisebox{-1.0em}{\hspace{1.0em}}r:\log_{2}mn}  \qw  &  \rstick{\raisebox{-3.5em}{\hspace{+2em}}\Ket{\frac{\matA\b}{\FNorm{\matA}\TNorm{\b}}\oplus\vxi} }  \\ &\lstick{} 	&  {\hspace{0.5em}\vdots} 				&\nghost{{\cal U}_{\matA}} 				&  {/} \qw	&  \dstick{{\hspace{-0.7em}}{\tiny\textrm{c:\ensuremath{\log_{2}m}}}} \qw		&     \ghost{{\cal U}_{\b}^{\T}}  \inputgrouph{2}{6}{3.7em}{\Ket 0_{2\log_{2}mn}}{1.0em} 																							&  \qw  																					&  	&   & \\		 &\lstick{} 	&  \qw \gategroup{2}{3}{8}{8}{0.0em}{--}& \ghost{{\cal U}_{\matA}} 				& 			&&    \nghost{{\cal U}_{\b}^{\T}}  	& 		&  			&   &  \\		   &	&  			&   &   		&& 	&   	&			&   & \\                                              }
\end{tabular}
\par\end{centering}
\caption{\label{fig:matrix-vector-product}Graphical description of $\text{qmsla.matrix\_vec}$
(Input: ${\cal U}_{\protect\matA}$ and ${\cal U}_{\protect\b}$,
Output: ${\cal U}_{\frac{\protect\matA\protect\b}{\protect\FNorm{\protect\matA}\protect\TNorm{\protect\b}}\oplus\protect\vxi}$).}
\end{figure}

\subsubsection{\label{subsec:kronecker-product}Kronecker product:\protect \\
${\cal U}_{\protect\mat A\otimes\protect\matB}\gets\textrm{qmsla.kronecker}({\cal U}_{\protect\matA},{\cal U}_{\protect\matB})$}

Consider two state preparation circuits ${\cal U}_{\matA}$ and ${\cal U}_{\matB}$.
We have that $({\cal U}_{\matA}\otimes{\cal U}_{\matB})\cdot\Ket 0_{q_{\matA}+q_{\matB}}=\kket{\matA}\kket{\matB}$.
The state $\kket{\matA}\kket{\matB}=\kket{\matA}\otimes\kket{\matB}$
is generally not equal to $\kket{\matA\otimes\matB}$, and ${\cal U}_{\matA}\otimes{\cal U}_{\matB}$
is not a state preparation circuit for $\matA\otimes\matB$. Nevertheless,
the following proposition shows that by permuting the order of the
registers of $\kket{\matA}\kket{\matB}$ we can obtain the state $\kket{\matA\otimes\matB}$.
This allows us to construct a matrix state preparation circuit ${\cal U}_{\matA\otimes\matB}$
for $\matA\otimes\matB$.
\begin{prop}
Let 
\[
\overline{\sigma}_{\otimes}\coloneqq(0,2,1,3)
\]
Then 
\[
\calS_{\bar{\sigma}_{\otimes}}^{R}\cdot\kket{\matA}\kket{\matB}=\kket{\matA\otimes\matB}.
\]
\end{prop}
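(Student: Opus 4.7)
The plan is to verify the identity by direct computation on computational basis states, leveraging the explicit formula for $\kket{\matA}\kket{\matB}$ in terms of the four registers (the column and row registers of $\matA$ and of $\matB$) and then recognizing the resulting state as $\kket{\matA\otimes\matB}$ via the MSB indexing convention. No new ideas beyond bookkeeping are needed.

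First, I would write out the product state explicitly. Letting $\matA \in \C^{m\times n}$ and $\matB \in \C^{p\times q}$, and using the definition of $\kket{\cdot}$, one has
\[
\kket{\matA}\kket{\matB} = \frac{1}{\FNorm{\matA}\FNorm{\matB}} \sum_{i,j,k,l} a_{i,j}\, b_{k,l}\, \Ket{j}_{\log_2 n}\Ket{i}_{\log_2 m}\Ket{l}_{\log_2 q}\Ket{k}_{\log_2 p},
\]
so that the four registers, in order, correspond to (column of $\matA$, row of $\matA$, column of $\matB$, row of $\matB$). Next, I would apply $\calS_{\bar{\sigma}_\otimes}^R$ with $\bar{\sigma}_\otimes = (0,2,1,3)$ using its definition, obtaining a state of the form $\Ket{j}_{\log_2 n}\Ket{l}_{\log_2 q}\Ket{i}_{\log_2 m}\Ket{k}_{\log_2 p}$ inside the sum; that is, the column register of $\matB$ and the row register of $\matA$ have been swapped.

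The remaining step is to identify the resulting state with $\kket{\matA\otimes\matB}$. Using the MSB convention fixed in Section~\ref{subsec:quantum}, one has $\Ket{j}_{\log_2 n}\Ket{l}_{\log_2 q} = \Ket{qj+l}_{\log_2 nq}$ and $\Ket{i}_{\log_2 m}\Ket{k}_{\log_2 p} = \Ket{pi+k}_{\log_2 mp}$. Combining the first two registers into a single column register of size $\log_2 nq$ and the last two into a row register of size $\log_2 mp$ yields
\[
\frac{1}{\FNorm{\matA}\FNorm{\matB}} \sum_{i,j,k,l} a_{i,j}\, b_{k,l}\, \Ket{qj+l}_{\log_2 nq}\Ket{pi+k}_{\log_2 mp}.
\]
Comparing with the expansion $\matA\otimes\matB = \sum_{i,j,k,l} a_{i,j} b_{k,l}\, \mat E^{mp\times nq}_{pi+k,\, qj+l}$ given in the preliminaries, and noting the standard identity $\FNorm{\matA\otimes\matB} = \FNorm{\matA}\FNorm{\matB}$, the right-hand side is exactly $\kket{\matA\otimes\matB}$.

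The only place where one must be careful is the identification of the linear index with the concatenation of qubit strings: since the paper puts the MSB in the lowest index, the block-indexing inherent in the Kronecker product's row/column ordering $pi+k$ and $qj+l$ matches the natural ordering on the permuted registers. Once the MSB convention is consistently applied, the computation is routine and the conclusion follows.
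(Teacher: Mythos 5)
Your proof is correct and follows essentially the same route as the paper: expand $\kket{\matA}\kket{\matB}$ over the four registers, apply the register permutation to swap the row register of $\matA$ with the column register of $\matB$, and re-identify the permuted basis kets as $\kket{\mat E_{pi+k,qj+l}^{mp\times nq}}$ using the MSB indexing and $\FNorm{\matA\otimes\matB}=\FNorm{\matA}\FNorm{\matB}$. The only cosmetic difference is that the paper isolates the basis-state identity as a separate first step before invoking linearity, whereas you perform the computation directly on the full sum.
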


\begin{proof}
Suppose that $\matA$ is $m\times n$ and $\matB$ is $p\times q$.
First, we show that 
\[
\calS_{\bar{\sigma}_{\otimes}}^{R}\kket{\mat E_{i,j}^{m\times n}}\kket{\mat E_{k,l}^{p\times q}}=\kket{\mat E_{pi+k,qj+l}^{mp\times nq}}
\]
for all $i,j,k$ and $l$. Indeed, 
\begin{align*}
\calS_{\bar{\sigma}_{\otimes}}^{R}\kket{\mat E_{i,j}^{m\times n}}\kket{\mat E_{k,l}^{p\times q}} & =\calS_{\bar{\sigma}_{\otimes}}^{R}\Ket j_{\log_{2}n}\Ket i_{\log_{2}m}\Ket l_{\log_{2}q}\Ket k_{\log_{2}p}\\
 & =\Ket j_{\log_{2}n}\Ket l_{\log_{2}q}\Ket i_{\log_{2}m}\Ket k_{\log_{2}p}\\
 & =\kket{\mat E_{l,j}^{q\times n}}\kket{\mat E_{k,i}^{p\times m}}\\
 & =\Ket{\e_{qj+l}^{nq}}\Ket{\e_{pi+k}^{mp}}\\
 & =\kket{\mat E_{pi+k,qj+l}^{mp\times nq}}
\end{align*}
Now,

\begin{align*}
\calS_{\bar{\sigma}_{\otimes}}^{R}\kket{\matA}\kket{\matB} & =\calS_{\bar{\sigma}_{\otimes}}^{R}\left(\frac{1}{\FNorm{\matA}}\sum_{i=0}^{m-1}\sum_{j=0}^{n-1}a_{ij}\kket{\mat E_{i,j}^{m\times n}}\right)\left(\frac{1}{\FNorm{\matB}}\sum_{k=0}^{p-1}\sum_{l=0}^{q-1}b_{kl}\kket{\mat E_{k,l}^{p\times q}}\right)\\
 & =\calS_{\bar{\sigma}_{\otimes}}^{R}\left(\frac{1}{\FNorm{\matA}\FNorm{\matB}}\sum_{i=0}^{m-1}\sum_{j=0}^{n-1}\sum_{k=0}^{p-1}\sum_{l=0}^{q-1}a_{ij}b_{kl}\kket{\mat E_{i,j}^{m\times n}}\kket{\mat E_{k,l}^{p\times q}}\right)\\
 & =\frac{1}{\FNorm{\matA}\FNorm{\matB}}\sum_{i=0}^{m-1}\sum_{j=0}^{n-1}\sum_{k=0}^{p-1}\sum_{l=0}^{q-1}a_{ij}b_{kl}\calS_{\bar{\sigma}_{\otimes}}\kket{\mat E_{i,j}^{m\times n}}\kket{\mat E_{k,l}^{p\times q}}\\
 & =\frac{1}{\FNorm{\matA\otimes\matB}}\sum_{i=0}^{m-1}\sum_{j=0}^{n-1}\sum_{k=0}^{p-1}\sum_{l=0}^{q-1}a_{ij}b_{kl}\kket{\mat E_{pi+k,qj+l}^{mp\times nq}}\\
 & =\kket{\matA\otimes\matB}
\end{align*}
\end{proof}
Based on the last proposition, given matrix state preparation circuits
${\cal U}_{\matA}$ and ${\cal U}_{\matB}$ we can create a matrix
state preparation circuit for ${\cal U}_{\matA\otimes\matB}$. See
Figure~\ref{fig:circuit-for-u_kron} for a graphical description.
Thus, we have $g_{\matA\otimes\matB}=g_{\matA}+g_{\matB}$ and $d_{\matA\otimes\matB}=\max(d_{\matA},d_{\matB})$,
once we apply \noun{EliminatePermutations} to reduce the gate complexity
and depth. The cost of $\textrm{qmsla.kronecker}$ is $O(g_{\matA\otimes\matB})$.

\begin{figure}[tph]
\noindent \begin{centering}
\centering    
\begin{tabular}{c  c  c}
\Qcircuit @C=1em @R=.9em  {    & 			&   		&    											&   						& 					& &	& \\ &\lstick{} 	&  \qw 		& \multigate{4}{{\cal U}_{\mat A\otimes\matB}}  &   						& 					& &	& \\ &\lstick{} 	&  \qw 		& \ghost{{\cal U}_{\mat A\otimes\matB}} 									    & {/} \qw						&  \qw   		\inputgrouph{2}{6}{3.5em}{\hspace{-2.5em}\Ket 0_{\log_{2}mnpq}}{1.0em} & \dstick{r:\log_{2}pm} & &  \\ &\lstick{} 	&  \qw 		& \ghost{{\cal U}_{\mat A\otimes\matB}}  								    &   							& 					& &	&   \\ &\lstick{} 	&  \vdots 	&\nghost{{\cal U}_{\mat A\otimes\matB}} 									    &			{/} \qw 			& \qw    	& \dstick{c:\log_{2}qn} & & \\		 &\lstick{} 	&  \qw 		& \ghost{{\cal U}_{\mat A\otimes\matB}} 									    & 							&  			 & &  &  \\	 & 			&   		&   										    &   							& 					& &	& \\ & 			&   		&   										    &   							& 					& &	& \\ & 			&   		&   										    &   							& 					& &	& \\ & 			&   		&   										    &   							& 					& &	& \\  } 
& 
\raisebox{-5.0em}{\scalebox{1.5}{\quad = \quad}}    
& 
\hspace{1.25cm}  
\Qcircuit @C=1em @R=.9em  {   & \lstick{} &							& 									& {\raisebox{-1.0em}{\hspace{-2.0em}}\tiny\textsc{EliminatePermutations}} 				& 		&&&	&&&\\ & \lstick{} & \qw 						& \multigate{3}{{\cal U}_{\matB}}	&  			& & \nmultigate{7}{\calS_{\bar{\sigma}_{\otimes}}^{R}}  &  		&    && &  \\ & \lstick{} & \qw 						& \ghost{{\cal U}_{\matB}}			& {/} \qw 	& \dstick{{\hspace{-0.7em}}{\tiny\textrm{r:\ensuremath{\log_{2}p}}}} \qw \inputgroupv{2}{5}{0.5em}{2.5em}{\hspace{-2.5em} \Ket 0_{\log_{2}pq}}	 & \ghost{\calS_{\bar{\sigma}_{\otimes}}^{R}}		& {/} \qw & \dstick{{\hspace{-0.em}}{\tiny\textrm{\ensuremath{r_{\matB}:\log_{2}p}}}} \qw	&       \\ & \lstick{} & {\hspace{0.5em}\vdots}	&\nghost{{\cal U}_{\matB}}			& {/} \qw 	& \dstick{{\hspace{-0.7em}}{\tiny\textrm{c:\ensuremath{\log_{2}q}}}} \qw & \ghost{\calS_{\bar{\sigma}_{\otimes}}^{R}}		 & {/} \qw 	& \dstick{{\hspace{-0.0em}}{\tiny\textrm{\ensuremath{r_{\matA}:\log_{2}m}}}} \qw &  \qw & {/} \qw & \dstick{{\hspace{+3.7em}}r:\log_{2}pm}  \qw   \\ & \lstick{} & \qw 						& \ghost{{\cal U}_{\matB}}			&  			& & \nghost{\calS_{\bar{\sigma}_{\otimes}}^{R}}																								  & 	&&	&    & \rstick{\raisebox{+3.5em}{\hspace{1.5em}}\kket{\matA\otimes\matB}}  \\ & \lstick{} & \qw 						& \multigate{3}{{\cal U}_{\matA}}	&  			& & \nghost{\calS_{\bar{\sigma}_{\otimes}}^{R}}																								  & 	&&	&   &    \\ & \lstick{} & \qw 						& \ghost{{\cal U}_{\matA}}			& {/}  \qw	& \dstick{{\hspace{-0.7em}}{\tiny\textrm{r:\ensuremath{\log_{2}m}}}} \qw \inputgroupv{6}{9}{0.5em}{2.5em}{\hspace{-2.5em}\Ket 0_{\log_{2}mn}} & \ghost{\calS_{\bar{\sigma}_{\otimes}}^{R}}		 & {/}  \qw	&  \dstick{{\hspace{0.0em}}{\tiny\textrm{\ensuremath{c_{\matB}:\log_{2}q}}}} \qw & \qw & {/} \qw &\dstick{{\hspace{+3.7em}}c:\log_{2}qn} \qw  \\ & \lstick{} & {\hspace{0.5em}\vdots}	&\nghost{{\cal U}_{\matA}}			& {/}  \qw	& \dstick{{\hspace{-0.7em}}{\tiny\textrm{c:\ensuremath{\log_{2}n}}}} \qw & \ghost{\calS_{\bar{\sigma}_{\otimes}}^{R}}		 & {/} \qw & \dstick{{\hspace{.0em}}{\tiny\textrm{\ensuremath{c_{\matA}:\log_{2}n}}}} \qw     &   \\ & \lstick{} & \qw 	\gategroup{1}{3}{10}{9}{.0em}{--}						& \ghost{{\cal U}_{\matA}}												        	 &  			& & \nghost{\calS_{\bar{\sigma}_{\otimes}}^{R}}		 &  		&   &&   & \\ & \lstick{} &  							& 									&  																		        	 & & 		 &			& \\ }
\end{tabular}
\par\end{centering}
\caption{\label{fig:circuit-for-u_kron}Graphical description of $\text{qmsla.kronecker}$
(Input: ${\cal U}_{\protect\matA}$ and ${\cal U}_{\protect\matB}$,
Output: ${\cal U}_{\protect\matA\otimes\protect\matB}$).}
\end{figure}

\subsubsection{\label{subsec:multiple-kronecker}Multiple kronecker products:\protect \\
${\cal U}_{\protect\matA_{1}\otimes\protect\matA_{2}\otimes\cdots\otimes\protect\matA_{k}}\gets\textrm{qmsla.kronecker}({\cal U}_{\protect\matA_{1}},\dots,{\cal U}_{\protect\matA_{k}})$}

Given a list of state preparation circuits ${\cal U}_{\matA_{1}},\dots{\cal U}_{\matA_{k}}$
we can compute ${\cal U}_{\matA_{1}\otimes\matA_{2}\otimes\cdots\otimes\matA_{k}}$
using recursion, with the algorithm from the previous subsection used
to combine results. That is, if $k=1$ we simply return the circuit
itself. For $k\geq2$ we recursively form ${\cal U}_{\matA_{1}\otimes\cdots\otimes\matA_{\left\lfloor k/2\right\rfloor }}$
and ${\cal U}_{\matA_{\left\lfloor k/2\right\rfloor +1}\otimes\cdots\otimes\matA_{k}}$,
and the combine the results to form ${\cal U}_{\matA_{1}\otimes\matA_{2}\otimes\cdots\otimes\matA_{k}}$
using the algorithm from the previous subsection. Assuming $\matA_{i}\in\C^{m_{i}\times n_{i}}$
for $i=1,\dots,k$, we have, $g_{\matA_{1}\otimes\cdots\otimes\matA_{k}}=g_{\matA_{1}}+\cdots+g_{\matA_{k}}$
and $d_{\matA_{1}\otimes\cdots\otimes\matA_{k}}=\max(d_{\matA_{1}},\dots,d_{\matA_{k}})$,
once we apply \noun{EliminatePermutations} to reduce the gate complexity
and depth.

The recursive algorithm has total complexity of $O(\log k\sum_{i=2}^{2k}g_{\matA_{i}})$.
This arises from the fact that each matrix appears in $O(\log k)$
\noun{EliminatePermutations} operations, with a cost of $O(g(\text{output}))$
each. Thus, it is better to us a non-recursive algorithm: In particular,
we could compute the permutation associated with the Kronecker products
upfront, and apply the \noun{EliminatePermutations }once\noun{, }to
reduce cost to $O(g_{\matA_{1}\otimes\cdots\otimes\matA_{k}})$\noun{.}

\subsection{\label{subsec:lev2}Level 2 qMSLA operations}

Having defined a basic set of qMSLA primitives, we can start composing
them into more complex operations. Here, we describe a few such operations
that prove useful for estimating multivariate traces.

\subsubsection{\label{subsec:pad-rows}Pad with zero rows and padding diagonally:\protect \\
${\cal U}_{\protect\matA\oplus0_{(2^{r}-1)m\times0}}\gets\textrm{qmsla.pad\_zero\_rows}({\cal U}_{\protect\matA},r)$
($r>0$)\protect \\
${\cal U}_{\protect\matA\oplus0_{(2^{r}-1)m\times(2^{k}-1)n}}\gets\text{qmsla.pad}(\protect\matA,r,k)$
($k,r>0$)}

Assuming we already have $\textrm{qmla.pad\_zero\_columns(\ensuremath{{\cal U}_{\matA}},r)}$,
then

\[
\text{qmlsa.pad\_zero\_rows}({\cal U}_{\matA},r)\coloneqq\textrm{qmsla.transpose}(\textrm{qmsla.pad\_zero\_columns}(\textrm{qmsla.transpose}({\cal U}_{\matA}),r))
\]
However, it is easy to see that this amounts to adding an empty register
between the column and row register. Thus, $g_{\matA\oplus0_{(2^{r}-1)m\times0}}=g_{\matA}$
and $d_{\matA\oplus0_{(2^{r}-1)m\times0}}=d_{\matA}$.

Both padding operations can be combined to a unified single padding
operation:
\[
\text{qmlsa.pad}(\matA,r,k)\coloneqq\textrm{qmsla.pad\_zero\_rows}(\textrm{qmsla.pad\_zero\_columns}({\cal U}_{\matA},k),r))
\]
This amounts to adding two empty registers: one before the column
register and one between the two registers. Again, $g_{\matA\oplus0_{(2^{r}-1)m\times(2^{k}-1)n}}=g_{\matA}$
and $d_{\matA\oplus0_{(2^{r}-1)m\times(2^{k}-1)n}}=d_{\matA}$, and
the cost is $O(g_{\matA})$.

\subsubsection{\label{subsec:adjoint}Matrix adjoint:\protect \\
${\cal U}_{\protect\matA^{\protect\conj}}\gets\textrm{qmsla.adjoint}({\cal U}_{\protect\matA})$}

Since $\matA^{\conj}=(\overline{\matA})^{\T}$ we can compose $\textrm{qmsla.conjugate}$
and $\textrm{qmsla.transpose}$ (order does not matter), to compute
${\cal U}_{\matA^{\conj}}$ given ${\cal U}_{\matA}$. Thus, $g_{\matA^{\conj}}=g_{\matA}$
and $d_{\matA^{\star}}=d_{\matA}$. The cost is $O(g_{\matA})$.

\subsubsection{\label{subsec:overlap}Overlap:\protect \\
${\cal U}_{\frac{\protect\vpsi^{\protect\conj}\protect\vphi}{\protect\TNorm{\protect\vpsi}\protect\TNorm{\protect\vphi}}\oplus\protect\vxi}\gets\text{qmsla.overlap}({\cal U}_{\protect\vpsi},{\cal U}_{\protect\vphi})$}

Given two state preparation circuits ${\cal U}_{\vpsi}$ and ${\cal U}_{\vphi}$
of vectors $\vpsi$ and $\vphi$ of the same size, we can build a
state preparation circuit for a unit vector of the form
\[
\y=\left[\begin{array}{c}
\frac{\vpsi^{\conj}\vphi}{\TNorm{\vpsi}\TNorm{\vphi}}\\*
'\vdots\\
\vdots\\*
\end{array}\right]
\]
Thus, the overlap (dot product) appears in the first probability amplitude.
The idea is to first form ${\cal U}_{\vpsi^{\conj}}$ from ${\cal U}_{\vpsi}$
using $\textrm{qmsla.adjoint}$, and then apply $\textrm{qmsla.matrix\_vector}$
to obtain ${\cal U}_{\y}$ (where we normalize the vector by dividing
by $\TNorm{\vpsi}$). A summary appears in Algorithm~\ref{alg:overlap}.
Thus, the number of gates in the resulting circuit is $g_{\vpsi}+g_{\vphi}$
and the\textcolor{red}{{} }\textcolor{black}{depth is $d_{\vpsi}+d_{\vphi}$,
and the cost of the algorithm is $O(g_{\vpsi}+g_{\vphi})$.}\textcolor{red}{{}
}Note that the resulting circuit does not use any additional control
operation beyond the ones in ${\cal U}_{\psi}$ and ${\cal U}_{\phi}$,
and no additional qubits.

\begin{algorithm}[tph]
\begin{algorithmic}[1]

\STATE \textbf{Input: }Classical description of the circuits ${\cal U}_{\vpsi},{\cal U}_{\vphi}$

\STATE 

\STATE ${\cal U}_{\vpsi^{\conj}}\gets\textrm{qmsla.adjoint}({\cal U}_{\vpsi})$

\STATE ${\cal U}_{\y}\gets\textrm{qmsla.matrix\_vec}({\cal U_{\vpsi^{\conj}}},{\cal U}_{\vphi})$

\RETURN ${\cal U}_{\y}$

\end{algorithmic}

\caption{\label{alg:overlap}\noun{$\text{qmsla.overlap}$}}
\end{algorithm}

\section{\label{sec:Encoding-Matrix-Moments}Encoding multivariate traces
in quantum states}

In this section, we leverage qMSLA to introduce our main algorithm:
\textsc{\noun{MVTracePrep}}. Given state preparation circuits for
$2k$ matrices, \textsc{\noun{MVTracePrep}} produces two (vector)
state preparation circuits, for which the overlap between the vectors
is equal to the multivariate trace of the matrices up to normalization.
More formally, assuming the inputs to \textsc{\noun{MVTracePrep}}\noun{
}are ${\cal U}_{\matA_{1}},\dots,{\cal U}_{\matA_{2k}}$, where ${\cal U}_{\matA_{i}}$
is a matrix state preparation circuit for $\matA_{i}$, \textsc{\noun{MVTracePrep}}\noun{
}outputs two state preparations circuits ${\cal U}_{\vpsi}$ and ${\cal U}_{\vphi}$
for vectors $\vpsi$ and $\vphi$ (respectively) such that $\vpsi^{\conj}\vphi=\frac{\Trace{\matA_{1}\matA_{2}\cdots\matA_{2k}}}{\FNorm{\matA_{1}}\cdots\FNorm{\matA_{2k}}}$.

A pseudocode description of \textsc{\noun{MVTracePrep}} appears in
Algorithm~\ref{alg:main}. For illustration purposes, a high-level
block circuit diagram of ${\cal U}_{\vphi}({\cal U}_{\matA_{2}},\cdots,{\cal U}_{\matA_{6}})$
(i.e., $k=3$) is shown in Figure~\ref{fig:main-alg} (${\cal U}_{\vpsi}$,
which depends only on ${\cal U}_{\matA_{1}}$, is very simple, and
is based on $\textrm{qmsla.pad}$), with the SWAPs written explicitly
(they can be eliminated using \noun{EliminatePermutations}). In the
following subsections, we explain \textsc{\noun{MVTracePrep}} and
it rationale, and finally state and prove Theorem~\ref{thm:main},
which summarizes the main properties of \textsc{\noun{MVTracePrep}}.

\begin{algorithm}[tph]
\begin{algorithmic}[1]

\STATE \textbf{Input: }Classical description of the circuits ${\cal U}_{\matA_{1}},\cdots{\cal U}_{\matA_{2k}}$

\STATE 

\STATE \COMMENT{Constructing ${\cal U}_{\psi}$} 

\STATE ${\cal U}_{\overline{\matA}_{1}}\gets\textrm{qmsla.conjugate}({\cal U}_{\matA_{1}})$

\STATE ${\cal U}_{\psi}\gets\textrm{qmsla.pad\_zero\_rows}(\textrm{qmsla.vec}({\cal U}_{\overline{\matA}_{1}}),\log_{2}n_{3}n_{4}\cdots n_{2k})$

\STATE 

\STATE \COMMENT{Constructing ${\cal U}_{\phi}$} 

\STATE $p\gets k+1$

\FOR{$i=p$ {\bf to} $i=2k$} 

\STATE ${\cal U}_{\matA_{i}^{\T}}\gets\textrm{qmsla.transpose}({\cal U}_{\matA_{i}})$

\ENDFOR 

\IF{$k$ is odd} 

\STATE $l_{\textrm{even}}\gets(k+1)/2$ and $l_{\textrm{odd}}\gets(k-1)/2$

\STATE ${\cal U}_{\matF_{\textrm{even}}^{(1)}}\gets{\cal U}_{\matA_{p}^{\T}}$
\COMMENT{State prep for $\matE_{(k+1)/2,k}=\matA_{p}^{\T}$}

\STATE ${\cal U}_{\matF_{\textrm{odd}}^{(1)}}\gets\textrm{qmsla.kronecker}({\cal U}_{\matA_{p-1}},{\cal U}_{\matA_{p+1}^{\T}})$
\COMMENT{State prep for $\matO_{i,k}=\matA_{p+1}\otimes\matA_{p-1}^{\T}$}

\ELSE 

\STATE $l_{\textrm{even}}\gets k/2$ and $l_{\textrm{odd}}\gets k/2$

\STATE ${\cal U}_{\matF_{\textrm{even}}^{(1)}}\gets\textrm{qmsla.kronecker}({\cal U}_{\matA_{p-1}},{\cal U}_{\matA_{p+1}^{\T}})$
\COMMENT{State prep for $\matE_{i,k}=\matA_{p+1}\otimes\matA_{p-1}^{\T}$}

\STATE ${\cal U}_{\matF_{\textrm{odd}}^{(1)}}\gets{\cal U}_{\matA_{p}^{\T}}$
\COMMENT{State prep for $\matO_{k/2,k}=\matA_{p}^{\T}$}

\ENDIF 

\FOR{$i=2$ {\bf to} $i=2k-p$} 

\IF{$p+i$ is odd} 

\STATE ${\cal U}_{\matO_{\textrm{i,k}}}\gets\textrm{qmsla.kronecker}({\cal U}_{\matA_{p-i}},{\cal U}_{\matA_{p+i}^{\T}})$

\STATE ${\cal U}_{\matF_{\textrm{odd}}^{(i)}}\gets\textrm{qmsla.kronecker}({\cal U}_{\matO_{\textrm{i,k}}},\textrm{qmsla.rvec}({\cal U}_{\matF_{\textrm{odd}}^{(i-1)}}))$

\ELSE 

\STATE ${\cal U}_{\matE_{\textrm{i,k}}}\gets\textrm{qmsla.kronecker}({\cal U}_{\matA_{p-i}},{\cal U}_{\matA_{p+i}^{\T}})$

\STATE ${\cal U}_{\matF_{\textrm{even}}^{(i)}}\gets\textrm{qmsla.kronecker}({\cal U}_{\matE_{\textrm{i,k}}},\textrm{qmsla.rvec}({\cal U}_{\matF_{\textrm{even}}^{(i-1)}}))$

\ENDIF 

\ENDFOR 

\STATE ${\cal U}_{\phi}\gets\textrm{qmsla.matrix\_vec}({\cal U}_{\matF_{\textrm{even}}^{(l_{\textrm{odd}})}},\textrm{qmsla.vec}({\cal U}_{\matF_{\textrm{odd}}^{(l_{\textrm{odd}})}}))$

\STATE 

\RETURN ${\cal U}_{\psi},{\cal U}_{\phi}$

\end{algorithmic}

\caption{\label{alg:main}\textsc{\noun{MVTracePrep}}}
\end{algorithm}

\begin{figure}[tph]
\noindent \begin{centering}
\Qcircuit @C=1em @R=.9em  {    & & & & & & & &  \\ & & & & {{\cal U}_{\mat A_{2}\otimes\mat A_{6}^{\text{T}}\otimes\vec{\mat A_{4}^{\text{T}}}^{\text{T}}}}   & & & {{\cal U}_{\mat A_{3}\otimes\mat A_{5}^{\text{T}}}^{\T}}&  \\ & & &  & & &  &&  \\ &\lstick{\Ket 0_{\log_{2}n}} & {/} \qw & \multigate{1}{{\cal U}_{{\mat A}_{2}}}	 	& \multigate{3}{{\cal S}_{\bar{\sigma}_{\otimes}}} 		& \multigate{5}{{\cal S}_{\bar{\sigma}_{\oplus}}} 	   & \multigate{3}{{\cal S}_{\bar{\sigma}_{\otimes}}^{\T}}		& \multigate{1}{{\cal U}_{{\mat A}_{3}}^{\T}}	& \qw \\ &\lstick{\Ket 0_{\log_{2}n}} & {/} \qw & \ghost{{\cal U}_{{\mat A}_{2}}}			    & \ghost{{\cal S}_{\bar{\sigma}_{\otimes}}}			  	& \ghost{{\cal S}_{\bar{\sigma}_{\oplus}}}		 	   & \ghost{{\cal S}_{\bar{\sigma}_{\otimes}}^{\T}}				& \ghost{{\cal U}_{{\mat A}_{3}}^{\T}} 	 & \qw \\ &\lstick{\Ket 0_{\log_{2}n}} & {/} \qw & \multigate{1}{{\cal U}_{{\mat A}_{6}^{\T}}} 	& \ghost{{\cal S}_{\bar{\sigma}_{\otimes}}}	   		    & \ghost{{\cal S}_{\bar{\sigma}_{\oplus}}}		 	   & \ghost{{\cal S}_{\bar{\sigma}_{\otimes}}^{\T}}				& \multigate{1}{{\cal U}_{{\mat A}_{5}^{\T}}^{\T}}			 & \qw \\ &\lstick{\Ket 0_{\log_{2}n}} & {/} \qw & \ghost{{\cal U}_{{\mat A}_{6}^{\T}}}			& \ghost{{\cal S}_{\bar{\sigma}_{\otimes}}}			    & \ghost{{\cal S}_{\bar{\sigma}_{\oplus}}}		 	   & \ghost{{\cal S}_{\bar{\sigma}_{\otimes}}^{\T}}				& \ghost{{\cal U}_{{\mat A}_{5}^{\T}}^{\T}}					 & \qw \\ &\lstick{\Ket 0_{\log_{2}n}} & {/} \qw & \multigate{1}{{\cal U}_{{\mat A}_{4}^{\T}}} 	&   \qw   											   	& \ghost{{\cal S}_{\bar{\sigma}_{\oplus}}}			   & \qw	& \qw									 				    & \qw \\ &\lstick{\Ket 0_{\log_{2}n}} & {/} \qw & \ghost{{\cal U}_{{\mat A}_{4}^{\T}}}			&  \qw    											  	& \ghost{{\cal S}_{\bar{\sigma}_{\oplus}}}			   & \qw	& \qw									 				    & \qw \gategroup{4}{4}{9}{6}{0.7em}{--} \gategroup{4}{7}{9}{8}{0.7em}{--}        } 
\par\end{centering}
\caption{\label{fig:main-alg}Visualization of the unitary transform\noun{
${\cal U}_{\protect\vphi}({\cal U}_{\protect\matA_{2}},\cdots,{\cal U}_{\protect\matA_{6}})$
}from the \textsc{\noun{MVTracePrep}} algorithm, where $\protect\matA_{i}\in\protect\C^{n\times n}$
.}
\end{figure}

\subsection{Warm-up I: $k=2$}

As a warmup, let us first consider the case of $k=2$, i.e. the trace
of the product of four matrices. To keep notation simple, we denote
the four matrices by $\matA,\matB,\matC$ and $\matD$, and we want
to compute $\Trace{\matA\matB\matC\matD}$. The proposed algorithm
is based on the following fact.
\begin{fact}
[Fact 7.4.9 from \cite{bernstein2009matrix}] Let $\matA\in\C^{n\times m}$,
$\matB\in\C^{m\times l}$, $\matC\in\C^{l\times k}$, $\matD\in\C^{k\times n}$.
Then,
\begin{equation}
\Trace{\matA\matB\matC\matD}=\vec{\matA}^{\T}(\matB\otimes\matD^{\T})\vec{\matC^{\T}}\label{eq:trace_of_matrix_multiplication}
\end{equation}
\end{fact}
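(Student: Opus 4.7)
The plan is to chain together two standard vectorization identities. The first is the Kronecker-vec formula $\vec{\matX\matY\matZ}=(\matZ^{\T}\otimes\matX)\vec{\matY}$, already invoked in the paper's proof of Lemma~\ref{prop:upper-lower-opperations}. The second is the trace-as-inner-product identity $\Trace{\matX\matY}=\vec{\matX}^{\T}\vec{\matY^{\T}}$, which follows from a single line of column-major index bookkeeping: both sides expand to $\sum_{i,j}X_{ij}Y_{ji}$ (for $\matX$ of shape $n\times m$ and $\matY$ of shape $m\times n$, the entry $\matX_{ij}$ in $\vec{\matX}$ sits at the same position as $(\matY^{\T})_{ij}=Y_{ji}$ in $\vec{\matY^{\T}}$).

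Starting from the left-hand side, I would first group the product as $\matA\cdot(\matB\matC\matD)$ and apply the trace--inner-product identity with $\matX=\matA$ and $\matY=\matB\matC\matD$ (whose shape is $n\times n$ after $m$, $l$, $k$ contract), giving
\[
\Trace{\matA\matB\matC\matD}=\vec{\matA}^{\T}\,\vec{(\matB\matC\matD)^{\T}}.
\]
Next, I would rewrite $(\matB\matC\matD)^{\T}=\matD^{\T}\matC^{\T}\matB^{\T}$ and apply the Kronecker-vec identity with $\matX=\matD^{\T}$, $\matY=\matC^{\T}$, $\matZ=\matB^{\T}$ to obtain
\[
\vec{(\matB\matC\matD)^{\T}}=\vec{\matD^{\T}\matC^{\T}\matB^{\T}}=(\matB\otimes\matD^{\T})\,\vec{\matC^{\T}}.
\]
Substituting back into the previous display yields the claimed identity.

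The only real pitfall is bookkeeping the transposes when matching the Kronecker-vec pattern, but the compatibility of the shapes $\matA\in\C^{n\times m}$, $\matB\in\C^{m\times l}$, $\matC\in\C^{l\times k}$, $\matD\in\C^{k\times n}$ makes the matching unambiguous once one commits to column-major vec. There is no deeper obstacle here; indeed, the statement is quoted directly from Bernstein (Fact~7.4.9). A purely elementary alternative would be to expand both sides to $\sum_{i,j,p,q}A_{ij}B_{jp}C_{pq}D_{qi}$, but routing through the two named identities is cleaner and fits the style of matrix-level manipulation used throughout the rest of the paper.
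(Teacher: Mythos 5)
Your proof is correct. The paper itself offers no proof of this statement --- it is quoted as a fact from Bernstein's book --- so any self-contained derivation is a strict addition rather than a parallel to an argument in the text. Your route is the standard one and checks out: the inner-product identity $\Trace{\matX\matY}=\vec{\matX}^{\T}\vec{\matY^{\T}}$ holds exactly as you say, since for $\matX\in\C^{n\times m}$, $\matY\in\C^{m\times n}$ the entry $X_{ij}$ occupies position $jn+i$ of $\vec{\matX}$ while $(\matY^{\T})_{ij}=Y_{ji}$ occupies the same position of $\vec{\matY^{\T}}$, so both sides equal $\sum_{i,j}X_{ij}Y_{ji}$ (plain transposes, so no stray conjugates appear even though the matrices are complex); and applying $\vec{\matX\matY\matZ}=(\matZ^{\T}\otimes\matX)\vec{\matY}$ --- the same identity the paper invokes in the proof of Lemma~\ref{prop:upper-lower-opperations} --- with $\matX=\matD^{\T}$, $\matY=\matC^{\T}$, $\matZ=\matB^{\T}$ gives $\vec{(\matB\matC\matD)^{\T}}=(\matB\otimes\matD^{\T})\vec{\matC^{\T}}$, with all dimensions compatible ($\matB\otimes\matD^{\T}$ is $mn\times lk$ and $\vec{\matC^{\T}}\in\C^{kl}$). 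One trivial slip: in your second paragraph the parenthetical ``whose shape is $n\times n$'' reads as describing $\matY=\matB\matC\matD$, which is actually $m\times n$; it is the full product $\matA\matB\matC\matD$ that is $n\times n$. This does not affect the argument, since you apply the trace identity with the correct shapes $\matA\in\C^{n\times m}$, $\matB\matC\matD\in\C^{m\times n}$. Your fallback of expanding both sides to $\sum_{i,j,p,q}A_{ij}B_{jp}C_{pq}D_{qi}$ would also work, but the two-identity route is cleaner and matches the matrix-level style of the rest of the paper.
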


This fact can easily be used to implement, via qMSLA operations, an
algorithm that takes ${\cal U}_{\matA},{\cal U}_{\matB},{\cal U}_{\matC},\text{ and }{\cal U}_{\matD}$
and outputs two state preparation circuits whose overlap is $\Trace{\matA\matB\matC\matD}$:
(in the parentheses we show which qMSLA operation is used in each
step)
\begin{description}
\item [{Step~1:}] Compute circuits for $\matC^{\T}$ and $\matD^{\T}$,
i.e. ${\cal U}_{\matC^{\T}},{\cal U}_{\matD^{\T}}$ ($\textrm{qmsla.tranpose}$
twice).
\item [{Step~2:}] Compute a circuit for $\matB\otimes\matD^{\T}$, i.e.
${\cal U}_{\matB\otimes\matD^{\T}}$ ($\textrm{qmsla.kronecker}$).
\item [{Step~3:}] Compute a circuit for $\vec{\matC^{\T}}$, i.e. ${\cal U}_{\vec{\matC^{\T}}}$
($\textrm{qmsla.vec}$).
\item [{Step~4:}] Compute a circuit for $\frac{(\matB\otimes\matD^{\T})\vec{\matC^{\T}}}{\FNorm{\matB}\FNorm{\matD}\FNorm{\matC}}\oplus\vxi'$
for some garbage $\vxi'\in\C^{kl(mn-1)}$ ($\textrm{qmsla.matrix\_vec})$.
Note that we use the fact that $\TNorm{\vec{\matC^{\T}}}=\FNorm{\matC}$
and $\FNorm{\matB\otimes\matD^{\T}}=\FNorm{\matB}\FNorm{\matD}.$
This is output ${\cal U}_{\vphi}$.
\item [{Step~5:}] Compute a circuit for $\overline{\matA}$, i.e. ${\cal U}_{\overline{\matA}}$
($\textrm{qmsla.conjugate}$).
\item [{Step~6:}] Compute a circuit for $\vec{\overline{\mat A}}$, i.e.
${\cal U}_{\vec{\overline{\matA}}}$ ($\textrm{qmsla.vec}$).
\item [{Step~7:}] Compute a circuit for $\vec{\overline{\mat A}}\oplus0_{kl(mn-1)}$
($\textrm{qmsla.pad\_zero\_rows}$). This is output ${\cal U}_{\vpsi}$.
\end{description}

Once we have ${\cal U}_{\vphi}$ and ${\cal U}_{\vpsi}$, we can estimate
the overlap in order to estimate $\Trace{\matA\matB\matC\matD}$.
For example, we can use Algorithm~\ref{alg:overlap}.
\begin{prop}
Consider circuits ${\cal U}_{\vpsi}$ and ${\cal U}_{\vphi}$ which
are the output of the steps described above. Consider the circuit
which the result of Algorithm~\ref{alg:overlap} applied to ${\cal U}_{\vpsi}$
and ${\cal U}_{\vphi}$. It is a state preparation circuit for the
vector 
\[
\frac{\Trace{\matA\matB\matC\matD}}{\FNorm{\matA}\FNorm{\matB}\FNorm{\matC}\FNorm{\matD}}\oplus\vxi
\]
for some vector $\vxi\in\C^{mnkl-1}$.
\end{prop}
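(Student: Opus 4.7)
The plan is to trace through Steps~1--7, writing down $\vpsi$ and $\vphi$ explicitly (including their $\oplus$-block structure and $2$-norms), then invoke Algorithm~\ref{alg:overlap} and show that the leading coordinate of its output equals the claimed normalized trace via the identity of Fact~\ref{eq:trace_of_matrix_multiplication}. The main work is bookkeeping of normalizations: each qMSLA primitive divides by a specific Frobenius or $2$-norm, and we must verify that these factors combine into exactly $\FNorm{\matA}\FNorm{\matB}\FNorm{\matC}\FNorm{\matD}$ at the very end.

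First I would identify $\vpsi$. By the specifications of \noun{qmsla.conjugate}, \noun{qmsla.vec} and \noun{qmsla.pad\_zero\_rows}, Steps~5--7 yield the (unnormalized) vector $\vpsi = \vec{\overline{\matA}}\oplus\matZero$, so $\TNorm{\vpsi}=\FNorm{\matA}$. For $\vphi$: Step~2 (\noun{qmsla.kronecker}) prepares $\matB\otimes\matD^{\T}\in\C^{mn\times lk}$, which has Frobenius norm $\FNorm{\matB}\FNorm{\matD}$; Step~3 (\noun{qmsla.vec}) prepares $\vec{\matC^{\T}}\in\C^{kl}$, of $2$-norm $\FNorm{\matC}$; Step~4 (\noun{qmsla.matrix\_vec}) then outputs a unit vector of length $mnkl$ whose first $mn$ entries are $\tfrac{(\matB\otimes\matD^{\T})\vec{\matC^{\T}}}{\FNorm{\matB}\FNorm{\matD}\FNorm{\matC}}$ and whose remainder is a garbage block $\vxi'$. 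In particular $\TNorm{\vphi}=1$.

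Next I would apply Algorithm~\ref{alg:overlap} to $({\cal U}_{\vpsi},{\cal U}_{\vphi})$; by its specification this outputs a state preparation circuit for
\[
\y \;=\; \frac{\vpsi^{\conj}\vphi}{\TNorm{\vpsi}\TNorm{\vphi}}\oplus\vxi.
\]
Since $\overline{\overline{\matA}}=\matA$, the adjoint computed inside the algorithm gives $\vpsi^{\conj} = \vec{\matA}^{\T}\oplus\matZero^{\T}$, so the leading zeros of $\vpsi^{\conj}$ annihilate the garbage block $\vxi'$ of $\vphi$ and the inner product collapses to $\vec{\matA}^{\T}(\matB\otimes\matD^{\T})\vec{\matC^{\T}}/(\FNorm{\matB}\FNorm{\matD}\FNorm{\matC})$. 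Applying \eqref{eq:trace_of_matrix_multiplication} identifies the numerator as $\Trace{\matA\matB\matC\matD}$, and dividing by $\TNorm{\vpsi}\TNorm{\vphi}=\FNorm{\matA}\cdot 1$ produces the claimed leading amplitude $\tfrac{\Trace{\matA\matB\matC\matD}}{\FNorm{\matA}\FNorm{\matB}\FNorm{\matC}\FNorm{\matD}}$, with the remaining $mnkl-1$ coordinates absorbed into $\vxi$.

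The only mildly delicate point to check is that the padding in $\vpsi$ consists of \emph{literal} zeros rather than arbitrary amplitudes, so that it truly annihilates $\vxi'$ inside the dot product; this is exactly what \noun{qmsla.pad\_zero\_rows} (Section~\ref{subsec:pad-rows}) guarantees, since it adds an empty register at the top of the state via ${\cal I}_{k}\otimes{\cal U}_{\matA}$. Everything else reduces to routine dimension and norm bookkeeping, together with the Frobenius-norm multiplicativity $\FNorm{\matB\otimes\matD^{\T}}=\FNorm{\matB}\FNorm{\matD}$ and $\TNorm{\vec{\matC^{\T}}}=\FNorm{\matC}$ already invoked above.
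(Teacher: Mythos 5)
Your proof is correct and follows essentially the same route as the paper's: write out $\vpsi=\vec{\overline{\matA}}\oplus\matZero$ and $\vphi=\frac{(\matB\otimes\matD^{\T})\vec{\matC^{\T}}}{\FNorm{\matB}\FNorm{\matD}\FNorm{\matC}}\oplus\vxi'$, use the zero padding to kill the garbage in the inner product, invoke Eq.~(\ref{eq:trace_of_matrix_multiplication}), and divide by $\TNorm{\vpsi}\TNorm{\vphi}$ per the specification of Algorithm~\ref{alg:overlap}. Your normalization bookkeeping is in fact cleaner: $\TNorm{\vpsi}=\FNorm{\matA}$ is the correct value (the paper's proof writes $1/\FNorm{\matA}$, an apparent typo), and your only slips are cosmetic (the padding zeros are trailing, not leading, and the padding mechanism is the row-padding variant rather than ${\cal I}_k\otimes{\cal U}$ on the column register).
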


\begin{proof}
Clearly ${\cal U}_{\vphi}$ is a state preparation circuit for $\vphi=\frac{(\matB\otimes\matD^{\T})\vec{\matC^{\T}}}{\FNorm{\matB}\FNorm{\matD}\FNorm{\matC}}\oplus\vxi'$
for some garbage $\vxi'\in\C^{kl(mn-1)}$, and ${\cal U}_{\vpsi}$
is a state preparation circuit for $\vpsi=\vec{\overline{\mat A}}\oplus0_{kl(mn-1)}$.
We have that,
\begin{align*}
\vpsi^{\conj}\vphi & =\left(\vec{\overline{\mat A}}\oplus0_{lk(mn-1)}\right)^{\conj}\left(\frac{(\matB\otimes\matD^{\T})\vec{\matC^{\T}}}{\FNorm{\matB}\FNorm{\matD}\FNorm{\matC}}\oplus\vxi'\right)\\
 & =\left(\vec{\matA}\oplus0_{lk(mn-1)}\right)^{\T}\left(\frac{(\matB\otimes\matD^{\T})\vec{\matC^{\T}}}{\FNorm{\matB}\FNorm{\matD}\FNorm{\matC}}\oplus\vxi'\right)\\
 & =\vec{\matA}^{\T}\frac{(\matB\otimes\matD^{\T})\vec{\matC^{\T}}}{\FNorm{\matB}\FNorm{\matD}\FNorm{\matC}}\\
 & =\frac{\Trace{\matA\matB\matC\matD}}{\FNorm{\matB}\FNorm{\matD}\FNorm{\matC}}
\end{align*}
Algorithm~~\ref{alg:overlap} on ${\cal U}_{\vpsi}$ and ${\cal U}_{\vphi}$
computes a state preparation circuit for the vector
\[
\frac{\vpsi^{\conj}\vphi}{\TNorm{\vpsi}\TNorm{\vphi}}\oplus\vxi=\frac{\Trace{\matA\matB\matC\matD}}{\FNorm{\matA}\FNorm{\matB}\FNorm{\matC}\FNorm{\matD}}\oplus\vxi
\]
for some garbage $\vxi\in\C^{mnkl-1}$. This is because $\TNorm{\vpsi}=1/\FNorm{\matA}$
and $\TNorm{\vphi}=1$ (since it is the result of $\text{qmsla.matrix\_vec}$).
\end{proof}

\subsection{\label{subsec:Warm-Up-II}Warm-up II: $k=3$}

For six matrices ($k=3$) there is no simple multivariate trace formula.
However, we can twice clamp together two of the matrices to reduce
the number of matrices to four, and apply Eq.~(\ref{eq:trace_of_matrix_multiplication}).
Further algebra reduces the trace to a formula that can be encoded
using qMSLA operations. More concretely:

\begin{align}
\Trace{\matA_{1}\matA_{2}\matA_{3}\matA_{4}\matA_{5}\matA_{6}} & =\Trace{\matA_{1}(\matA_{2}\matA_{3})\matA_{4}(\matA_{5}\matA_{6})}\nonumber \\
 & =\vec{\matA_{1}}^{\T}(\matA_{2}\matA_{3}\otimes\matA_{6}^{\T}\matA_{5}^{\T})\vec{\matA_{4}^{\T}}\nonumber \\
 & =\vec{\matA_{1}}^{\T}(\matA_{2}\otimes\matA_{6}^{\T})(\matA_{3}\otimes\matA_{5}^{\T})\vec{\matA_{4}^{\T}}\nonumber \\
 & =\Trace{\vec{\matA_{1}}^{\T}(\matA_{2}\otimes\matA_{6}^{\T})(\matA_{3}\otimes\matA_{5}^{\T})\vec{\matA_{4}^{\T}}}\nonumber \\
 & =\vec{\vec{\matA_{1}}^{\T}}^{\T}\left(\matA_{2}\otimes\matA_{6}^{\T}\otimes\vec{\matA_{4}^{\T}}^{\T}\right)\vec{\matA_{3}\otimes\matA_{5}^{\T}}\nonumber \\
 & =\vec{\matA_{1}}^{\T}\left(\matA_{2}\otimes\matA_{6}^{\T}\otimes\vec{\matA_{4}^{\T}}^{\T}\right)\vec{\matA_{3}\otimes\matA_{5}^{\T}}\label{eq:trace-of-6-matrices}
\end{align}
In the above, we applied Eq.~(\ref{eq:trace_of_matrix_multiplication})
to matrices $\matA_{1},\matA_{2}\matA_{3},\matA_{4}$ and \textbf{$\matA_{5}\matA_{6}$}.
In the third equality, we used the Kronecker mixed product property.
In the fourth equality we use the fact that $\Trace c=c$ for scalars.
Finally, we apply Eq.~(\ref{eq:trace_of_matrix_multiplication})
again, this time on the matrices $\vec{\matA_{1}}^{\T},\matA_{2}\otimes\matA_{6}^{\T},\matA_{3}\otimes\matA_{5}^{\T}$
and $\vec{\matA_{4}^{\T}}$.

Eq.~(\ref{eq:trace-of-6-matrices}) can be used to design, via qMSLA
operations, circuits ${\cal U}_{\vphi}$ and ${\cal U}_{\vpsi}$ for
\[
\vphi=\frac{\left(\matA_{2}\otimes\matA_{6}^{\T}\otimes\vec{\matA_{4}^{\T}}^{\T}\right)\vec{\matA_{3}\otimes\matA_{5}^{\T}}}{\FNorm{\matA_{2}}\cdots\FNorm{\matA_{6}}}\oplus\vxi'
\]
and 
\[
\vpsi=\vec{\overline{\matA_{1}}}\oplus0_{w}
\]
for some garbage $\vxi'$ and appropriate size $w$. Computing overlap
using Algorithm~~\ref{alg:overlap} produces the state
\[
\frac{\Trace{\matA_{1}\matA_{2}\cdots\matA_{6}}}{\FNorm{\matA_{1}}\cdots\FNorm{\matA_{6}}}\oplus\vxi
\]
for some garbage $\vxi$. We omit the details, and move directly to
arbitrary $k>2$, as the details are very similar to the case of $k=2$.

\subsection{Multivariate trace formula for $k>2$}

We now consider the trace of the product of $\matA_{1},\dots,\matA_{2k}$.
We clamp $\matA_{2}$ with $\matA_{3}$ and $\matA_{2k-1}$ with $\matA_{2k}$,
finding that we need to compute the trace of the product of $\matA_{1},\matA_{2}\matA_{3},\matA_{4},\dots\matA_{2k-2},\matA_{2k-1}\matA_{2k}$.
These are $2k-2$ matrices. So, we recursively apply the formula for
$2k-2$ inputs, and after using again the Kronecker mixed product
property and the fact that $\Trace c=c$ for scalars, we obtain a
multivariate trace formula. The details are somewhat technical, so
we first give the final result, and then prove it inductively.

First, we need a few additional notations. Given matrices $\matA_{1},\ldots\matA_{2k}$,
we define the following set of matrices
\[
\matE_{i,k}\coloneqq\matA_{2i}\otimes\matA_{2(k+1-i)}^{\T}\quad i=1,\dots,\left\lfloor \frac{k}{2}\right\rfloor 
\]
\[
\matE_{(k+1)/2,k}\coloneqq\matA_{k+1}^{\T}\quad(k\text{ is odd})
\]
\[
\matO_{i,k}\coloneqq\matA_{2i+1}\otimes\matA_{2(k-i)+1}^{\T}\quad i=1,\dots,\left\lfloor \frac{k-1}{2}\right\rfloor 
\]
\[
\matO_{k/2,k}\coloneqq\matA_{k+1}^{\T}\quad(k\text{ is even})
\]
Each of the matrices $\{\matE_{i,k}\}$ and $\{\matO_{i,k}\}$ pair
two matrices (except for the pivotal matrix, which is not paired)
of the sequence $\matA_{2},\dots,\matA_{2k}$ via the Kronecker product.
Each matrix appears exactly once. The $\matE$ matrices pair even
indexed matrices, and the $\matO$ matrices pair odd indexed matrices.
Next, we define a series of functions, $F^{(1)},F^{(2)},\dots$, where
$F^{(p)}$ is $p$-ary and given by: 
\begin{align*}
F^{(1)}(\matX) & \coloneqq\matX\\
F^{(p)}(\matX_{1},\dots,\matX_{p}) & \coloneqq\matX_{1}\otimes\vec{F^{(p-1)}(\matX_{2},\dots,\matX_{p})}
\end{align*}

\begin{thm}
\label{thm:main-trace}With the above notations, if $k$ is even 
\[
\MTrace{\matA_{1},\dots,\matA_{2k}}{2k}=\vec{\matA_{1}}^{\T}F^{(k/2)}(\matE_{1,k},\ldots,\matE_{k/2,k})\vec{F^{(k/2)}(\matO_{1,k},\dots,\matO_{k/2,k})}
\]
and if $k$ is odd 
\[
\MTrace{\matA_{1},\dots,\matA_{2k}}{2k}=\vec{\matA_{1}}^{\T}F^{((k+1)/2)}(\matE_{1,k},\ldots,\matE_{(k+1)/2,k})\vec{F^{((k-1)/2)}(\matO_{1,k},\dots,\matO_{(k-1)/2,k})}
\]
\end{thm}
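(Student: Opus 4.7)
I would prove the theorem by induction on $k$, following exactly the template of the $k=3$ warm-up in Section~\ref{subsec:Warm-Up-II}. The base case $k=2$ is immediate from Fact~\ref{eq:trace_of_matrix_multiplication} applied to $\matA_{1},\matA_{2},\matA_{3},\matA_{4}$: the right-hand side of that fact is $\vec{\matA_{1}}^{\T}(\matA_{2}\otimes\matA_{4}^{\T})\vec{\matA_{3}^{\T}}$, which matches the claim after identifying $\matE_{1,2}=\matA_{2}\otimes\matA_{4}^{\T}$, $\matO_{1,2}=\matA_{3}^{\T}$, and noting $F^{(1)}$ is the identity.

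For the inductive step I would clamp the outermost inner pairs $\matA_{2}\matA_{3}$ and $\matA_{2k-1}\matA_{2k}$, so that $\Trace{\matA_{1}\matA_{2}\cdots\matA_{2k}}$ is written as a trace of $2(k-1)$ blocks $\matA_{1},(\matA_{2}\matA_{3}),\matA_{4},\ldots,\matA_{2k-2},(\matA_{2k-1}\matA_{2k})$, then apply Fact~\ref{eq:trace_of_matrix_multiplication} treating $\matA_{1},\,(\matA_{2}\matA_{3}),\,(\text{middle block}),\,(\matA_{2k-1}\matA_{2k})$ as the four slots. The resulting outer Kronecker factor $(\matA_{2}\matA_{3})\otimes(\matA_{2k-1}\matA_{2k})^{\T}$ splits by the mixed product identity $(\matX\matY)\otimes(\matZ\matW)=(\matX\otimes\matZ)(\matY\otimes\matW)$ into $(\matA_{2}\otimes\matA_{2k}^{\T})(\matA_{3}\otimes\matA_{2k-1}^{\T})=\matE_{1,k}\,\matO_{1,k}$. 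After using $\Trace{c}=c$ to re-wrap the resulting scalar as a trace and invoking Fact~\ref{eq:trace_of_matrix_multiplication} a second time on the four effective factors $\vec{\matA_{1}}^{\T},\,\matE_{1,k},\,\matO_{1,k},\,\vec{\,\cdots}$, the outer structure gets exactly one extra Kronecker argument on each side: an $\matE_{1,k}$ factor is prepended to the left $F^{(p)}$ chain, and an $\matO_{1,k}$ factor is prepended to the right $F^{(p)}$ chain. This matches the recursive definition $F^{(p)}(\matX_{1},\ldots,\matX_{p})=\matX_{1}\otimes\vec{F^{(p-1)}(\matX_{2},\ldots,\matX_{p})}$ exactly, so the inductive hypothesis applied to the remaining inner block of $2(k-2)$ matrices $\matA_{4},\ldots,\matA_{2k-2}$ produces the remaining nesting levels of $F^{(p-1)}$ and $F^{(p-2)}$.

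\textbf{Main obstacle.} The chief technical difficulty is the parity bookkeeping. When $k$ is odd the pivot $\matA_{k+1}^{\T}$ sits in the even family $\{\matE_{i,k}\}$, but when $k$ is even it sits in the odd family $\{\matO_{i,k}\}$; each inductive peel changes $k$ to $k-1$ and therefore swaps parity, so the two claimed formulas feed into each other in an interleaved way and must be proved in tandem. In particular I have to check that at the ``bottom'' of the recursion, the peeling leaves behind exactly $\matA_{k+1}^{\T}$ in the correct family (and a single extra pair in the other family), and that the depths $\lfloor k/2\rfloor$ and $\lfloor(k-1)/2\rfloor$ of the two $F^{(\cdot)}$ chains come out right for each parity. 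A secondary, more routine obstacle is a careful dimension check at every application of Fact~\ref{eq:trace_of_matrix_multiplication}, since the outer factors $\vec{\matA_{1}}^{\T}$ and $\vec{F^{(\cdot)}(\matO_{\cdot,k})}$ are vectors of rapidly growing length and their identification with the ``$\matA$'' and ``$\matC$'' slots of Eq.~\ref{eq:trace_of_matrix_multiplication} requires matching row/column counts against the Kronecker chain on each call.
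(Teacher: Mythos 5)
Your base case is fine, and the ingredients you list (the four-matrix trace fact, the mixed-product identity, re-wrapping scalars with $\Trace{c}=c$) are the right ones, but the inductive step as you describe it has a genuine gap. Once you apply the four-matrix fact with slots $\matA_{1}$, $(\matA_{2}\matA_{3})$, $M\coloneqq\matA_{4}\cdots\matA_{2k-2}$, $(\matA_{2k-1}\matA_{2k})$, the middle block survives only as the vectorized product $\vec{M^{\T}}$ occupying the ``$\vec{\matC^{\T}}$'' slot; it is a vector, not a trace. The statement you are inducting on concerns $\Trace{\matA_{1}\cdots\matA_{2k}}$, so ``the inductive hypothesis applied to the remaining inner block $\matA_{4},\ldots,\matA_{2k-2}$'' has nothing to attach to: no trace of that block appears anywhere in your expression (and indeed the full trace is not a function of $\Trace{\matA_{4}\cdots\matA_{2k-2}}$). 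A secondary slip: that block contains $2k-5$ matrices, an odd number, not $2(k-2)$, so even a reformulated even-length hypothesis would not apply to it as written. To make your outside-in peeling rigorous you would have to strengthen the induction to a statement about the bilinear form, or about how $\vec{(\matA_{4}\cdots\matA_{2k-2})^{\T}}$ expands into the nested Kronecker structure, and you have not formulated such a statement.

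The paper's induction avoids this by running in the opposite direction: it applies the inductive hypothesis to the \emph{full clamped sequence} $\matA_{1},(\matA_{2}\matA_{3}),\matA_{4},\ldots,\matA_{2k},(\matA_{2k+1}\matA_{2k+2})$, which is a bona fide multivariate trace of $2k$ matrices, so the hypothesis applies verbatim (with the index shift exchanging the roles of the $\matE$ and $\matO$ families, i.e., the parity issue you anticipated). Only afterwards does it unclamp, using the mixed-product identity, the identity $(\matA\matB)\otimes\x=\matA(\matB\otimes\x)$ for a row vector $\x$, and one further application of the four-matrix fact, to pull $\matE_{1,k+1}$ and $\matO_{1,k+1}$ out as the new outermost layer of the $F^{(\cdot)}$ nesting. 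Either adopt that ordering, or prove a stronger vectorized-form statement by induction; as written, the passage from your first peel to ``the remaining nesting levels'' does not follow.
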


\begin{proof}
We prove the theorem by induction on $k$. For the base case, note
that $k=2$ is exactly Eq. (\ref{eq:trace_of_matrix_multiplication}):
\begin{align*}
\Trace{\matA_{1}\cdots\matA_{4}} & =\vec{\matA_{1}}^{\T}\matA_{2}\otimes\matA_{4}^{\T}\vec{\matA_{3}^{\T}}\\
 & =\vec{\matA_{1}}^{\T}F^{(1)}(\matE_{1,k})\vec{F^{(1)}(\matO_{1,k})}
\end{align*}
The inductive step is slightly different if $k$ is even or odd.
However, the difference is very minor, and sums up with slightly modified
indices. Thus, for conciseness, we show the inductive step only for
even $k$ to odd $k+1$. That is, assuming the formula hold for an
even $k$, we prove that the following formula (which is the formula
for $k+1$) holds:

{\footnotesize{}
\[
\Trace{\matA_{1}\matA_{2}\cdots\matA_{2k}\matA_{2k+1}\matA_{2k+2}}=\vec{\matA_{1}}^{\T}F^{((k+2)/2)}(\matE_{1,k+1},\ldots,\matE_{(k+2)/2,k+1})\vec{F^{(k/2)}(\matO_{1,k+1},\ldots,\matO_{k/2,k+1})}
\]
}To see that, we first note that{\scriptsize{}
\begin{align*}
\Trace{\matA_{1}\cdots\matA_{2k+2}} & =\Trace{\matA_{1}(\matA_{2}\matA_{3})\cdots\matA_{2k}(\matA_{2k+1}\matA_{2k+2})}\\
 & =\vec{\matA_{1}}^{\T}F^{\left(\frac{k}{2}\right)}(\matA_{2}\matA_{3}\otimes\matA_{2k+2}^{\T}\matA_{2k+1}^{\T},\matO_{2,k+1},\ldots,\matO_{\frac{k}{2},k+1})\vec{F^{\left(\frac{k}{2}\right)}(\matE_{2,k+1},\ldots,\matE_{\frac{k+1}{2},k+1})}\\
 & =\vec{\matA_{1}}^{\T}\left(\matA_{2}\matA_{3}\otimes\matA_{2k+2}^{\T}\matA_{2k+1}^{\T}\otimes\vec{F^{\left(\frac{k-2}{2}\right)}(\matO_{2,k+1},\ldots,\matO_{\frac{k}{2},k+1})}^{\T}\right)\vec{F^{\left(\frac{k}{2}\right)}(\matE_{2,k+1},\ldots,\matE_{\frac{k+1}{2},k+1})}\\
 & =\vec{\matA_{1}}^{\T}(\matA_{2}\otimes\matA_{2k+2}^{\T})(\matA_{3}\otimes\matA_{2k+1}^{\T})\otimes\vec{F^{\left(\frac{k-2}{2}\right)}(\matO_{2,k+1},\ldots,\matO_{\frac{k}{2},k+1})}^{\T}\vec{F^{\left(\frac{k}{2}\right)}(\matE_{2,k+1},\ldots,\matE_{\frac{k+1}{2},k+1})}\\
 & =\vec{\matA_{1}}^{\T}(\matA_{2}\otimes\matA_{2k+2}^{\T})\left(\matA_{3}\otimes\matA_{2k+1}^{\T}\otimes\vec{F^{\left(\frac{k-2}{2}\right)}(\matO_{2,k+1},\ldots,\matO_{\frac{k}{2},k+1})}^{\T}\right)\vec{F^{\left(\frac{k}{2}\right)}(\matE_{2,k+1},\ldots,\matE_{\frac{k+1}{2},k+1})}
\end{align*}
}In the second equality, we applied the inductive assumption. However,
note that due to index shift the even labeled pairs ($\matE$ matrices)
become odd labeled pairs $(\matO$ matrices), except for the first
index. In third equality we apply the recursive definition of $F^{(k/2)}$,
and in the fourth equality we use the Kronecker mixed product property.
The fifth equality follows by the fact that for any two matrices $\matA,\matB$
and row vector $\x$, we have $(\matA\matB)\otimes\x=\matA(\matB\otimes\x)$~\cite[Fact 7.4.20]{bernstein2009matrix}.
Finally, we can always put a trace around scalars and utilize Eq.~(\ref{eq:trace_of_matrix_multiplication})
once again, to find that:{\scriptsize{}
\begin{align*}
\Trace{\matA_{1}\cdots\matA_{2k+2}} & =\vec{\matA_{1}}^{\T}\matA_{2}\otimes\matA_{2k+2}^{\T}\otimes\vec{F^{\left(\frac{k}{2}\right)}(\matE_{2,k+1},\ldots,\matE_{\frac{k+1}{2},k+1})}^{\T}\vec{\matA_{3}\otimes\matA_{2k+1}^{\T}\otimes\vec{F^{\left(\frac{k-2}{2}\right)}(\matO_{2,k+1},\ldots,\matO_{\frac{k}{2},k+1})}^{\T}}\\
 & =\vec{\matA_{1}}^{\T}F^{\left(\frac{k+2}{2}\right)}(\matE_{1,k+1},\ldots,\matE_{\frac{k+2}{2},k+1})\vec{F^{\left(\frac{k}{2}\right)}(\matO_{1,k+1},\ldots,\matO_{\frac{k}{2},k+1})}
\end{align*}
}where in the second equality we applied the recursive definition
of $F^{((k+2)/2)}$ and $F^{(k/2)}$ in reverse.
\end{proof}

\subsection{\label{subsec:proof-main}\textsc{\noun{MVTracePrep}}}

Algorithm \textsc{\noun{MVTracePrep}} implements the general trace
formula described in the previous subsection using qMSLA operations.
\begin{thm}
\label{thm:main}Given classical descriptions of matrix state preparation
circuits ${\cal U}_{\matA_{1}},\dots{\cal U}_{\matA_{2k}}$ of matrices
$\matA_{1},\dots,\matA_{2k}$, where the size of $\matA_{i}$ is $n_{i-1}\times n_{i}$
for $n_{0},\dots,n_{2k}$, such that $n_{0}=n_{2k}$, all of which
are powers of 2, Algorithm \textsc{\noun{MVTracePrep}}\noun{ }(Algorithm~\ref{alg:main})
outputs two (vector) state preparation circuits, ${\cal U}_{\vpsi}$
for a vector $\vpsi$ and ${\cal U}_{\vphi}$ for a vector $\vphi$,
on $q=2\sum_{i=1}^{2k}\log_{2}n_{i}$ qubits, such that 
\[
\Braket{\vpsi}{\vphi}=\frac{\Trace{\matA_{1}\cdots\matA_{2k}}}{\FNorm{\matA_{1}}\cdots\FNorm{\matA_{2k}}}.
\]
${\cal U}_{\vpsi}$ depends only on $\matA_{1}$, while ${\cal U}_{\vphi}$
depends only on $\matA_{2},\dots,\matA_{2k}$. The classical cost
of the algorithm is $O(\sum_{i=1}^{2k}g_{\matA_{i}})$. The depth
of ${\cal U_{\vpsi}}$ is $d_{\matA_{1}}$ and the gate complexity
is $g_{\matA_{1}}$. The depth of ${\cal U_{\vphi}}$ is $\max_{i\in[2,4,\cdots,2k]}d_{\matA_{i}}+\max_{i\in[3,5,\cdots,2k-1]}d_{\matA_{i}}$
and the gate complexity is $\sum_{i=2}^{2k}g_{\matA_{i}}$.
\end{thm}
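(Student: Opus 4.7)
The plan is to verify that Algorithm~\textsc{\noun{MVTracePrep}} faithfully implements the multivariate trace formula of Theorem~\ref{thm:main-trace} at the level of state preparation circuits, and then to track normalizations and complexities via the level~1/level~2 qMSLA bookkeeping already established in Tables~\ref{tab:qmsla-table-l1} and \ref{tab:qmsla-composite-l2}.

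First I would handle $\mathcal{U}_{\vpsi}$, which is short. The algorithm sets $\mathcal{U}_{\vpsi}\gets\textrm{qmsla.pad\_zero\_rows}(\textrm{qmsla.vec}(\textrm{qmsla.conjugate}(\mathcal{U}_{\matA_1})),\log_2 n_3 n_4\cdots n_{2k})$. Chasing through the definitions of these three primitives, this prepares the normalized state of the vector $\vpsi = \vec{\overline{\matA_1}}\oplus 0_w$ on $q$ qubits, where $w$ is chosen so the total dimension matches that of $\vphi$. The depth and gate count follow trivially since $\textrm{qmsla.conjugate}$, $\textrm{qmsla.vec}$ and the padding are metadata-level transformations; all three preserve $g_{\matA_1}$ and $d_{\matA_1}$.

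The substantive part is $\mathcal{U}_{\vphi}$. I would prove by induction on the iteration index $i$ that after the loop, $\mathcal{U}_{\matF^{(i)}_{\text{even}}}$ is a matrix state preparation circuit for the matrix $F^{(i)}(\matE_{1,k},\dots,\matE_{i,k})$, and analogously for $\matF^{(i)}_{\text{odd}}$; the base case is handled directly by the \textsf{if}/\textsf{else} block distinguishing even and odd $k$. The inductive step relies exactly on the recursive definition $F^{(p)}(\matX_1,\dots,\matX_p)=\matX_1\otimes\vec{F^{(p-1)}(\matX_2,\dots,\matX_p)}$: applying $\textrm{qmsla.kronecker}$ to the pair circuit for $\matE_{i,k}$ (resp.\ $\matO_{i,k}$) and the row-vectorized previous $F^{(i-1)}$ circuit yields precisely this identity. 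Finally, $\textrm{qmsla.matrix\_vec}$ applied to $\mathcal{U}_{\matF^{(l_{\text{even}})}_{\text{even}}}$ and $\textrm{qmsla.vec}(\mathcal{U}_{\matF^{(l_{\text{odd}})}_{\text{odd}}})$ gives, by its specification,
\[
\vphi = \frac{F^{(l_{\text{even}})}(\matE_{1,k},\dots)\,\vec{F^{(l_{\text{odd}})}(\matO_{1,k},\dots)}}{\FNorm{F^{(l_{\text{even}})}(\matE_{1,k},\dots)}\cdot\TNorm{\vec{F^{(l_{\text{odd}})}(\matO_{1,k},\dots)}}}\oplus\vxi,
\]
and an easy induction on Frobenius norms (using $\FNorm{\matA\otimes\matB}=\FNorm{\matA}\FNorm{\matB}$ and $\FNorm{\vec{\matX}}=\FNorm{\matX}$) shows that the denominator telescopes to $\prod_{i=2}^{2k}\FNorm{\matA_i}$.

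With both circuits identified, the overlap calculation is a one-line application of Theorem~\ref{thm:main-trace}. Since $\vpsi$ has zeros in its tail, $\Braket{\vpsi}{\vphi}$ only reads the first block of $\vphi$, yielding
\[
\Braket{\vpsi}{\vphi} = \frac{\vec{\matA_1}^{\T}F^{(l_{\text{even}})}(\matE_{1,k},\dots)\vec{F^{(l_{\text{odd}})}(\matO_{1,k},\dots)}}{\FNorm{\matA_1}\cdots\FNorm{\matA_{2k}}} = \frac{\Trace{\matA_1\cdots\matA_{2k}}}{\FNorm{\matA_1}\cdots\FNorm{\matA_{2k}}},
\]
where the first equality uses $\TNorm{\vpsi}=1/\FNorm{\matA_1}$ and $\TNorm{\vphi}=1$, and the second is Theorem~\ref{thm:main-trace}. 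For complexity, I would add up contributions per line of the pseudocode using the entries of Tables~\ref{tab:qmsla-table-l1}--\ref{tab:qmsla-composite-l2}: each $\matA_i$ ($i\ge 2$) enters $\mathcal{U}_{\vphi}$ exactly once inside one Kronecker assembly, giving gate count $\sum_{i=2}^{2k}g_{\matA_i}$; depth is the maximum over the even-indexed branch plus the maximum over the odd-indexed branch because $\textrm{qmsla.matrix\_vec}$ concatenates its two inputs sequentially; and classical construction cost is linear in the final gate count.

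The main obstacle I anticipate is not any single step but the careful bookkeeping of padding dimensions and of the register permutations introduced by $\textrm{qmsla.kronecker}$ and the vectorization operations, so that $\vpsi$ and $\vphi$ really live in a common $q$-qubit Hilbert space with compatible register layouts that make the inner product pick out the first amplitude of $\vphi$. Once the non-recursive application of \noun{EliminatePermutations} is invoked (as described at the end of Section~\ref{subsec:multiple-kronecker}) the depth and gate bounds follow cleanly; otherwise one has to argue separately that SWAPs do not inflate the stated complexities.
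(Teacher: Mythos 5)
Your proposal is correct and follows essentially the same route as the paper's proof: correctness is reduced to Theorem~\ref{thm:main-trace} by checking that the loop builds state preparation circuits for the $F^{(i)}$ matrices, normalizations are tracked via $\FNorm{\matA\otimes\matB}=\FNorm{\matA}\FNorm{\matB}$ and the garbage padding of $\vpsi$, and the $O(\log k)$ overhead is removed by applying \noun{EliminatePermutations} once at the end (as the paper does via its optimized variant in the appendix). The only quibble is the normalization remark: since $\vpsi=\vec{\overline{\matA_{1}}}\oplus0$, one has $\TNorm{\vpsi}=\FNorm{\matA_{1}}$ rather than $1/\FNorm{\matA_{1}}$, which is immaterial to the conclusion.
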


\begin{proof}
Correctness of the algorithm follow Theorem~\ref{thm:main-trace},
since it simply implements the various parts of formula using qMSLA
operations. Indeed, ${\cal U}_{\matF_{\textrm{even}}^{(i)}}$ and
${\cal U}_{\matF_{\textrm{odd}}^{(i)}}$ are state preparation circuits
for 
\[
\matF_{\textrm{even}}^{(i)}\coloneqq F^{(i)}(\matE_{1,k},\ldots,\matE_{i,k})
\]
\[
\matF_{\textrm{odd}}^{(i)}\coloneqq F^{(i)}(\matO_{1,k},\ldots,\matO_{i,k})
\]
respectively, and ${\cal U}_{\matO_{i,k}}$ and ${\cal U}_{\matE_{i,k}}$
are state preparation for $\matO_{i,k}$ and $\matE_{i,k}$ (respectively).
However, further explanations are needed for lines 4 and 29. Since
the output of $\textrm{qmsla.matrix\_vec}$ is a state preparation
for the product with garbage, line 29 creates a state preparation
circuit for 
\[
\vphi=\begin{cases}
\frac{F^{(k/2)}(\matE_{1,k},\ldots,\matE_{k/2,k})\vec{F^{(k/2)}(\matO_{1,k},\dots,\matO_{k/2,k})}}{\FNorm{\matA_{2}}\cdots\FNorm{\matA_{2k}}}\oplus\vxi' & k\,\textrm{is even}\\
\frac{F^{((k+1)/2)}(\matE_{1,k},\ldots,\matE_{(k+1)/2,k})\vec{F^{((k-1)/2)}(\matO_{1,k},\dots,\matO_{(k-1)/2,k})}}{\FNorm{\matA_{2}}\cdots\FNorm{\matA_{2k}}}\oplus\vxi' & k\,\textrm{is odd}
\end{cases}
\]
To avoid the garbage contaminating the overlap, the algorithm set
$\vpsi$ to be equal to $\vec{\overline{\mat A}_{1}}$ padded with
zeros. Thus, line 4 creates a state preparation circuit for 
\[
\vpsi=\vec{\overline{\mat A}_{1}}\oplus0^{0\times\log_{2}n_{3}n_{4}\cdots n_{2k}}
\]
According to Theorem~\ref{thm:main-trace}, we have that $\vpsi^{\conj}\vphi=\frac{\Trace{\matA_{1}\cdots\matA_{2k}}}{\FNorm{\matA_{1}}\cdots\FNorm{\matA_{2k}}}$.

We are left with analyzing the gate and depth complexities, along
with the running time, considering qubit sizes. For ${\cal U}_{\vpsi}$,
following the conjugation and padding of ${\cal U}_{\matA_{1}}$,
the gate and depth complexities are $g_{\matA_{1}}$ and $d_{\matA_{1}}$,
respectively, resulting a running time of $O(g_{\matA_{1}})$. Subsequently,
Kronecker products and matrix vector product introduce no additional
complexity in terms of depth and gate complexity. Consequently, the
depth $d_{{\cal U}_{\phi}}$ is $\max_{i\in[2,4,\cdots,2k]}d_{\matA_{i}}+\max_{i\in[3,5,\cdots,2k-1]}d_{\matA_{i}}$,
and the gate complexity is $\sum_{i=2}^{2k}g_{\matA_{i}}$. The qubit
count of both circuits is determined by the sizes of matrices with
even indices, given that ${\cal U}_{\matA_{i}}$ has dimensions $n_{i-1}\times n_{i}$
for $n_{0},\dots,n_{2k}$, where $n_{0}=n_{2k}$. Thus, the qubit
count is $q=2\sum_{i=1}^{2k}\log_{2}n_{i}$.

Now consider the running time. In the description in Algorithm~\ref{alg:main}
we use qMSLA operation, each having complexity of $O(g(\text{output}))$.
Unfortunately, this does not imply that the final output of a sequence
of qMSLA operations has complexity of $O(g(\text{output}))$. Specifically,
the description of \textsc{\noun{MVTracePrep}} results in complexity
which is $O(\log k\sum_{i=2}^{2k}g_{\matA_{i}}$) (this is because,
each matrix appears in $O(\log k)$ qMSLA operations, and for each
one you pay for its gate complexity). However, a careful implementation
of the algorithm that writes parts of ${\cal U}_{\matA_{1}},\dots,{\cal U}_{\matA_{2k}}$
to their final place, and delays \noun{EliminatePermutations} to the
end, allows us to shave off the $\log k$ factor. We defer the details
to Appendix~\ref{sec:optimized-mvtp}.
\end{proof}

\section{\label{sec:est-prod-trace}Estimating multivariate traces}

Algorithm \textsc{\noun{MVTracePrep}} encodes $\SMTrace{\matA_{1},\dots,\matA_{2k}}{2k}$
using two state preparation circuits ${\cal U}_{\vphi}$ and ${\cal U}_{\vpsi}$,
where to overlap between $\Ket{\vphi}$ and $\Ket{\vpsi}$ is equal
to $\SMTrace{\matA_{1},\dots,\matA_{2k}}{2k}$ up to normalization.
These two circuits can be used in the context of a larger quantum
algorithm, or used to directly estimate $\SMTrace{\matA_{1},\dots,\matA_{2k}}{2k}$
using various algorithms for estimating overlap (e.g., Hadamard Test,
Swap Test and Algorithm~\ref{alg:overlap}; each can be combined
with amplitude estimation acceleration \cite[Example 4.5]{lin2022lecture}).
Here we analyze the overall cost needed for estimating $\SMTrace{\matA_{1},\dots,\matA_{2k}}{2k}$
up to additive $\epsilon$ error, and discuss implications in the
context of an end-to-end algorithm that start with matrices $\matA_{1},\dots,\matA_{2k}$
explicitly stored in classical memory. We focus on using the Hadamard
Test, and for simplicity assume the matrices are real. For complex
matrices we need to employ the Hadamard T twice (the real and imaginary
versions), but the analysis is essentially the same.

Following Eq.~(\ref{eq:HT}), we have 
\[
\HT_{{\cal U_{\vpsi}^{\conj}{\cal U}_{\vphi}}}\Ket 0_{1}\Ket 0_{q}=\frac{1}{2}(\Ket 0_{1}({\cal I}+{\cal U_{\vpsi}^{\conj}{\cal U}_{\vphi}})\Ket 0_{q}+\Ket 1_{1}({\cal I}-{\cal U_{\vpsi}^{\conj}{\cal U}_{\vphi}})\Ket 0_{q})
\]
for $q=2\sum_{i=1}^{2k}\log_{2}n_{i}$. Due to Eq~(\ref{eq:hadamard-overlap})
and the fact that the matrices are real we get that 
\[
\p(0)=\frac{1}{2}\left(1+\Trace{\frac{\matA_{1}\cdots\matA_{2k}}{\FNorm{\matA_{1}}\cdots\FNorm{\matA_{2k}}}}\right)
\]
where $\p(0)$ denotes the probability of measuring $0$ in the first
qubit (when only that qubit is measured). Next, we compute an approximation
$\tilde{\p}$ of $\p\coloneqq\p(0)$, and then we can build an approximation
$\tilde{t}\approx t\coloneqq\Trace{\frac{\matA_{1}\cdots\matA_{2k}}{\FNorm{\matA_{1}}\cdots\FNorm{\matA_{2k}}}}$
via 
\[
\tilde{\t}\coloneqq2\tilde{\p}-1
\]
Using Quantum Phase Estimation (QPE) we can estimate $\tilde{\p}$
s.t $\left|\p-\tilde{\p}\right|\leq\epsilon'$ with total complexity
(depth times shots) of $O(\epsilon'^{-1}\max_{i}(d_{{\cal \matA}_{i}}))$
(see~\cite[Section 4.2]{lin2022lecture}) and we have that, 
\begin{align*}
\left|\t-\tilde{\t}\right| & =\left|(2\p-1)-(2\tilde{\p}-1)\right|\\
 & =2\left|\p-\tilde{\p}\right|\\
 & \le2\epsilon'
\end{align*}
Thus, the total complexity is $O(\epsilon^{-1}\max_{i}(d_{{\cal \matA}_{i}}))$
for approximating $\Trace{\frac{\matA_{1}\cdots\matA_{2k}}{\FNorm{\matA_{1}}\cdots\FNorm{\matA_{2k}}}}$
to additive error $\epsilon$. Since $\frac{\Trace{\matA_{1}\cdots\matA_{2k}}}{\FNorm{\matA_{1}}\cdots\FNorm{\matA_{2k}}}=\Trace{\frac{\matA_{1}\cdots\matA_{2k}}{\FNorm{\matA_{1}}\cdots\FNorm{\matA_{2k}}}}$
, the total complexity is $O\left(\epsilon^{-1}\FNorm{\matA_{1}}\cdots\FNorm{\matA_{2k}}\max_{i}(d_{{\cal \matA}_{i}})\right)$
for approximating $\Trace{\matA_{1}\cdots\matA_{2k}}$ to $\epsilon$
additive error.

\textbf{}

\subsection{\label{sec:Classical-Approaches}Quantum multivariate trace estimation
with classical inputs}

Consider the case that the matrices $\matA_{1},\dotsm,\matA_{2k}\in\R^{n\times n}$
are given in classical memory, and our goal is to approximate $\SMTrace{\matA_{1},\dots,\matA_{2k}}{2k}$
to additive error $\epsilon$. For simplicity of analysis, we assume
that all matrices are $n\times n$. We do not assume sparsity of $\matA_{i}$
(i.e., for each $i$ we treat $\matA_{i}$ as dense), or \emph{any}
other special property, like positive definiteness. To employ \textsc{\noun{MVTracePrep}},
we need state preparation circuits ${\cal U}_{\matA_{1}},\dots,{\cal U}_{\matA_{2k}}$.
Given a classically stored matrix $\matA\in\C^{m\times n}$, a state
preparation circuit for $\matA$ can be built using the algorithm
described in~\cite{shende2005synthesis} in $O(mn)$. So the total
combined cost (classical, one time) of building ${\cal U}_{\matA_{1}},\dots,{\cal U}_{\matA_{2k}}$
is $O(kn^{2})$, and the depth of each circuit is $O(n^{2})$. The
costs of the output of the \textsc{\noun{MVTracePrep}} are
\[
q=4k\log_{2}n+1\quad\quad g=O(kn^{2})\quad\quad d=O(n^{2})
\]
The classical cost of \textsc{\noun{MVTracePrep}} is also $O(kn^{2})$.

Using Hadamard test and QPE, as in the previous subsection, we get
total quantum complexity for $\epsilon$ additive error of $O\left(\epsilon^{-1}\FNorm{\matA_{1}}\cdots\FNorm{\matA_{2k}}n^{2}\right)$.
Thus, the total cost, classical and quantum, of approximating $\SMTrace{\matA_{1},\dots,\matA_{2k}}{2k}$
to additive $\epsilon$ error with constant probability is 
\[
O\left(n^{2}(k+\epsilon^{-1}\FNorm{\matA_{1}}\cdots\FNorm{\matA_{2k}})\right).
\]
By taking the median of $O(\log(1/\delta))$ different executions
of the algorithm, we can boost the success probability to at least
$1-\delta$, so the running time is $O\left(n^{2}(k+\epsilon^{-1}\FNorm{\matA_{1}}\cdots\FNorm{\matA_{2k}}\log(1/\delta))\right)$
for an additive $(\epsilon,\delta)$ estimator.

\paragraph{Comparison to Classical Methods.}

We now compare our algorithm to multivariate trace estimators that
use only classical computation. The simplest classical approach is
to compute $\matA_{1}\matA_{2}\cdots\matA_{2k}$ and then sum the
diagonal entries. This algorithm requires $O(n^{3}k)$ arithmetic
operations, and provides an exact value for the matrix product trace.
Our algorithm reduces the dependence on $n$ from $n^{3}$ to $n^{2}$,
but provides a stochastic estimate.

More appropriate baselines are classical stochastic trace estimators.
The state-of-the-art algorithm is Hutch++~\cite{meyer2021hutchpp},
which requires $O(\epsilon^{-1}\log(1/\delta))$ matrix vector products
for a relative $(\epsilon,\delta)$ estimator for a symmetric positive
definite $\matA$. Computing the product of $\matA=\matA_{1}\cdots\matA_{2k}$
with a vector can be accomplished in $O(n^{2}k)$ using repeated matrix-vector
products. Overall, the cost of Hutch++ for the case we consider in
this section is $O(n^{2}k\epsilon^{-1}\log(1/\delta))$. In comparison,
in our algorithm the factor $n^{2}k$ does not appear with $\epsilon$
and $\delta$. So, as long as $\FNorm{\matA_{1}}\cdots\FNorm{\matA_{2k}}=o(k)$
we improve the running time compared to Hutch++, though Hutch++ guarantees
are relative and not additive.

However, one can argue that using parallel computing with $k$ compute
nodes can trivially reduce the cost of Hutch++ to $O(n^{2}\epsilon^{-1}\log(1/\delta))$,
and this is a fairer comparison since our algorithm uses $O(k$) qubits.
In fact, when input data is given in an unstructured dense classical
manner, it is hard to achieve any quantum advantage since encoding
the data will immediately incur an exponential cost in the number
of qubits. The classical algorithm for encoding a vector of size $2^{N}$
as the amplitudes of a $N$-qubit circuit (the algorithm from~\cite{shende2005synthesis},
which is implemented in various quantum computing frameworks) requires
$O(2^{N})$ gate complexity and depth. Yet, there is a tradeoff between
the number of qubits and depth: recently it was shown (\cite{araujo2021divide}),
that a quantum circuit with a depth of $O(N^{2})$ and $O(2^{N})$
qubits can effectively load a $2^{N}$-dimensional vector into a quantum
state (here the quantum system has ancilla qubits). Regardless, an
exponential cost is involved, and any exponential (or even polynomial)
reduction in cost (compared to classical) in any downstream algorithm
(as is the case for our algorithm) get nullified.

The main potential for our algorithm is when the input matrices are
huge matrices with compact quantum circuit description. In such scenarios
we have an encoding of the matrices in an exponentially sized Hilbert
space, but only pay costs that are (hopefully) linear or polynomial
in the number of qubits. Such situations can only occur for scenarios
where we can construct meaningful matrices directly on the quantum
state. The most likely candidates are Hamiltonians of quantum systems.
Another avenue worth exploring involves encoding a clever approximation
of matrix $\matA$, rather than encoding $\matA$ directly. One can
hope that such approximations could be generated using cost-effective
operations in the quantum domain, particularly those that have low
depth. Presumably, one can attempt to find such approximations that
are not necessarily cheaper in terms of matrix size, but more easily
encoded into the quantum state (depth wise), while preserving the
spectral properties of $\matA$ up to a small error. We leave exploring
these directions to future research.

We stress that the above discussion holds only for positive definite
matrices, while our algorithm is more general and can be even used
to estimate multivariate traces of non-symmetric input matrices.

\subsection{From multivariate traces to spectral sums}

Multivariate trace estimation can be used for spectral sum estimation,
via polynomial approximation. A well established approach is to estimate
$\Trace{f(\matA)}$ using $\Trace{p(\matA)}$ where $p(\cdot)$ is
a polynomial that approximates $f(\cdot)$. We do not consider how
$p(\cdot)$ is computed; we focus solely on how $\Trace{p(\matA)}$
can be computed via multivariate traces. 

We suggest two methods for this task. The first is to use matrix moments.
Suppose $p(x)=\sum_{k=0}^{L}c_{k}x^{k}$ is a $L$-degree polynomial.
Then, 
\[
\Trace{p(\matA)}=\sum_{k=0}^{L}c_{k}\Trace{\matA^{k}},
\]
Therefore, we can approximate $\Trace{\matA^{k}}$ for $k=0,\ldots,L$
using our algorithm\footnote{For odd degrees we add a state preparation circuit for the identity
matrix as input}. As long as we have access to enough qubits, we can computed them
in parallel. Using these trace values, the sum is computed classically.

The second approach translates $\Trace{p(\matA)}$ to a single multivariate
trace. We first rewrite $p(x)$ in a factored form: 
\[
p(x)=c_{L}(x-r_{1})(x-r_{2})\cdots(x-r_{L})
\]
The shifts $r_{1},\dots,r_{L}$ are the roots of $p(\cdot)$, and
as long as the polynomial has positive degree they exist (perhaps
with multiplicity in some of the roots). Now, 
\[
p(\matA)=c_{L}(\matA-r_{1}\matI)(\matA-r_{2}\matI)\cdots(\matA-r_{L}\matI)
\]
so $\Trace{p(\matA)}=c_{L}\MTrace{\matA-r_{1}\matI,\dots,\matA-r_{L}\matI}L$.

It is well known that the roots of a polynomial expressed as a linear
combination of monomials, i.e. $p(x)=\sum_{k=0}^{L}c_{k}x^{k}$, are
equal to the eigenvalues of the $L\times L$ companion matrix: 
\[
\begin{bmatrix}1 & 0 & \cdots & 0 & 0\\
0 & 1 & \cdots & 0 & 0\\
\vdots & \vdots & \ddots & \vdots & \vdots\\
0 & 0 & \cdots & 1 & 0\\
-c_{1}/c_{0} & -c_{2}/c_{0} & \cdots & -c_{L-1}/c_{0} & -c_{L}/c_{0}
\end{bmatrix}
\]
Indeed, MATLAB's roots function implements exactly this method. However,
this method has numerical stability issues, and there are better methods
based on expressing the polynomial in the Chebyshev basis. See \cite[Chapter 18]{trefethen2013approximation}
for a discussion.

The polynomial factorization based approach requires $L$ distinct
but related input matrix state preparation circuits: each one of them
is a preparation circuit of a $\matA$ shifted by some constant. This
is in contrast to the moments based method, which requires a single
preparation circuit (for $\matA$ only). Thus, this approach is especially
appealing in cases where we can generate circuits of shifted variants
of $\matA$ once we have some base preparation circuit $\matA$ itself.

\section{Conclusions}

In this work, we have presented a new quantum algorithm for approximating
multivariate traces. To that end we introduce the qMSLA framework.
qMSLA builds a set of fundamental matrix algebra building block operations
for circuits that encode matrices in quantum states. qMSLA facilitates
the seamless composition of basic operations to construct circuits,
as we demonstrate with our construction of circuits that encode multivariate
traces. This adaptability broadens its applicability to various linear
algebra problems.

We view the qMSLA-based approach as a promising paradigm for designing
quantum algorithm for QLA and QML via high level building blocks.
We envision several avenues for future expansion of qMSLA:
\begin{itemize}
\item \textbf{Enrich level 2 operations}: Expanding the existing set of
level 2 operations within qMSLA will further enhance its versatility
and enable the efficient construction of low-depth circuits for complex
matrix computations.
\item \textbf{Extended matrix state preparation circuits}: Developing an
extended version of the qMSLA framework that can handle ``garbage''
in ancilla qubits of inputs and outputs is crucial. Quantum computations
often involve intermediate results that are not essential for the
final outcome. A framework capable of efficiently dealing with such
garbage data will significantly improve the overall efficiency.
\end{itemize}

\subsection*{Acknowledgments.}

This research was supported by the US-Israel Binational Science Foundation
(Grant no. 2017698), Israel Science Foundation (Grant no. 1524/23)
and IBM Faculty Award. Liron Mor-Yosef acknowledges support by the
Milner Foundation and the Israel Council for Higher Education.

\bibliographystyle{plain}
\bibliography{references}

\appendix

\begin{center}
\textbf{\Huge{}Appendix}{\Huge\par}
\par\end{center}

The structure of the appendix is outlined as follows:
\begin{itemize}
\item In Appendix~\ref{sec:L0-qMSLA}, we provide a detailed description
of level 0 qMSLA operations. These fundamental circuit-level operations
serve as building blocks for higher levels of qMSLA operations.
\item Appendix~\ref{sec:Pseudo-Code-of-level1} gives implementation details
for level 1 qMSLA. In particular, Appendix~\ref{sec:Pseudo-Code-of-level1}
defines a data structure for describing matrix state preparation circuits
and provides pseudocodes for all level 1 operations.
\item Appendix~\ref{sec:optimized-mvtp} describes a more efficient variant
of \noun{MVTracePrep}.
\end{itemize}

\section{\label{sec:L0-qMSLA}Level 0 qMSLA}

In order to describe the implementation of qMSLA in a concise yet
complete, framework independent manner, we baseqMSLA's implementation
on a set of basic circuit level operations. We group these operations
as ``level 0 qMSLA''. However, we stress that these operations are
at the circuit level, and do not operate on state preparation circuits.
In the next appendix, we explain how to implement all level 1 qMSLA
operations in terms of level 0 operations. We emphasize that while
our description of level 0 operations is independent of various quantum
computing frameworks (e.g. QISKIT, Q\#, PennyLane, etc.), nearly all
operations are present in such frameworks under various names. There
are a couple of operations that are not always present (circuit transpose
and conjugate). For these we provide implementation details.

As a design choice, all qMSLA operation we describe (here, and in
subsequent sections) are assumed to be non-destructive; they accept
circuit(s) and return a new circuit, without overwriting inputs. Of
course, in the context of an algorithm that is implemented using a
sequence of qMSLA operations, it is possible to use destructive operations
for more efficient implementation. We view this as a type of complier
optimization, which we leave for future research.

Table~\ref{tab:qmsla-table-l0} provides a summary of all level 0
qMSLA operations. In the following subsections, we provide additional
details.

\begin{table}[tph]
\begin{centering}
{\footnotesize{}}%
\begin{tabular}{|c|c|c|c|c|c|c|}
\hline 
{\footnotesize{}Input} & {\footnotesize{}Output} & {\footnotesize{}$q(\text{output})$} & {\footnotesize{}$g(\text{output})$} & {\footnotesize{}$d(\text{output})$} & {\footnotesize{}Operation Name} & {\footnotesize{}Subsection}\tabularnewline
\hline 
\hline 
{\footnotesize{}$q$} & {\footnotesize{}${\cal I}_{q}$} & {\footnotesize{}$q$} & {\footnotesize{}$0$} & {\footnotesize{}$0$} & {\footnotesize{}$\textrm{qmsla.qc\_empty}$} & {\footnotesize{}\ref{subsec:create-new-circuits}}\tabularnewline
\hline 
{\footnotesize{}${\cal W},{\cal Q}$} & {\footnotesize{}$\mathcal{Q}\otimes\mathcal{W}$} & {\footnotesize{}$q({\cal W})+q({\cal Q})$} & {\footnotesize{}$g({\cal W})+g({\cal Q})$} & {\footnotesize{}$\max(d({\cal W}),d(Q))$} & {\footnotesize{}$\textrm{qmsla.qc\_tensor}$} & {\footnotesize{}\ref{subsec:create-new-circuits}}\tabularnewline
\hline 
{\footnotesize{}${\cal W},{\cal Q},\sigma$} & \multirow{1}{*}{{\footnotesize{}$\mathcal{S}_{\sigma^{-1}}\mathcal{Q}\mathcal{S}_{\sigma}\mathcal{W}$}} & {\footnotesize{}$q({\cal W})$} & {\footnotesize{}$g({\cal W})+g({\cal Q})$} & {\footnotesize{}$d({\cal W})+d(Q)$} & {\footnotesize{}$\textrm{qmsla.qc\_compose}$} & {\footnotesize{}\ref{subsec:create-new-circuits}}\tabularnewline
\hline 
{\footnotesize{}${\cal {\cal Q}}$, ${\cal G}$,$\delta$} & {\footnotesize{}${\cal G}_{\delta}{\cal {\cal Q}}$} & {\footnotesize{}$q({\cal Q})$} & {\footnotesize{}$g({\cal Q})+1$} & {\footnotesize{}$d({\cal Q})+1$} & {\footnotesize{}$\texttt{\textrm{qmsla.qc\_add\_gate}}$} & {\footnotesize{}\ref{subsec:add-gate}}\tabularnewline
\hline 
{\footnotesize{}${\cal U}$} & {\footnotesize{}${\cal U^{\T}}$} & {\footnotesize{}$q({\cal U})$} & {\footnotesize{}$g({\cal U})$} & {\footnotesize{}$d({\cal U})$} & {\footnotesize{}$\textrm{qmsla.qc\_transpose}$} & {\footnotesize{}\ref{subsec:transpose-conjugate-inverse}}\tabularnewline
\hline 
{\footnotesize{}${\cal U}$} & {\footnotesize{}$\overline{{\cal U}}$} & {\footnotesize{}$q({\cal U})$} & {\footnotesize{}$g({\cal U})$} & {\footnotesize{}$d({\cal U})$} & {\footnotesize{}$\textrm{qmsla.qc\_conjugate}$} & {\footnotesize{}\ref{subsec:transpose-conjugate-inverse}}\tabularnewline
\hline 
{\footnotesize{}${\cal U}$} & {\footnotesize{}${\cal U}^{-1}$} & {\footnotesize{}$q({\cal U})$} & {\footnotesize{}$g({\cal U})$} & {\footnotesize{}$d({\cal U})$} & {\footnotesize{}$\textrm{qmsla.qc\_inverse}$} & {\footnotesize{}\ref{subsec:transpose-conjugate-inverse}}\tabularnewline
\hline 
\hline 
{\footnotesize{}${\cal W},\sigma$} & {\footnotesize{}$\mathcal{S}_{\sigma}\mathcal{W}$}\tablefootnote{Equality here holds only on the ground state. That is, if ${\cal U}$
is the output, we only require that ${\cal U}\Ket 0_{q({\cal W})}={\cal S}_{\sigma}{\cal W}\Ket 0_{q({\cal W})}$.} & {\footnotesize{}$q({\cal W})$} & {\footnotesize{}$g({\cal W})$} & {\footnotesize{}$d({\cal W})$} & {\footnotesize{}$\textrm{qmsla.qc\_permute\_bits}$} & {\footnotesize{}\ref{subsec:premute_bits}}\tabularnewline
\hline 
\end{tabular}{\footnotesize\par}
\par\end{centering}
{\tiny{}\caption{\label{tab:qmsla-table-l0}Level 0 qMSLA operations. These are basic
circuit level operations that allow us to implement higher level qMSLA
operations. The classical cost for these operations remains consistently
$O(g(output))$. In the table, ${\cal U},{\cal W},{\cal Q}$ are quantum
circuits, ${\cal G}$ is a quantum gate, $\sigma$ is a permutation,
and $\delta$ is a list of qubit indices.}
}{\tiny\par}
\end{table}

\subsection{\label{subsec:create-new-circuits}Creating new circuits:\protect \\
${\cal U}\gets\textrm{\textrm{qmsla.qc\_empty}(q)}$, ${\cal U}\gets\textrm{\texttt{\textrm{qmsla.qc\_tensor}}}(\mathcal{Q},{\cal W})$,\protect \\
 ${\cal U}\gets\textrm{qmsla.qc\_compose}({\cal W},Q,\sigma)$\protect \\
}

The first set of level 0 operations concern creating a new circuit.
Circuits can be created in three ways: initiating an empty circuit,
taking the tensor product of two circuit, and composing two circuits.

\paragraph{${\cal U}\gets\textrm{qmsla.qc\_empty}(q)$:}

Create a new empty (no gates) circuit ${\cal U}$ on $q$ qubits.
The unitary matrix associated with the circuit is the identity matrix,
i.e. $\matM({\cal U})=\matI_{2^{q}}$.

\paragraph{${\cal U}\gets\textrm{\texttt{\textrm{qmsla.qc\_tensor}}}(\mathcal{Q},{\cal W})$:}

Create new circuit ${\cal U}$ on $q(Q)+q({\cal W})$ qubits. Apply
${\cal Q}$ to the $q({\cal Q})$ MSB qubits, and ${\cal W}$ to the
$q({\cal W})$ LSB qubits. The corresponding unitary is the Kronecker
(aka tensor) product of the unitaries, i.e. $\text{\ensuremath{\cirmat{{\cal U}}}=\ensuremath{\cirmat{{\cal Q}}}\ensuremath{\ensuremath{\otimes\cirmat{{\cal W}}}}}$.

\paragraph{${\cal U}\gets\textrm{qmsla.qc\_compose}({\cal W},Q,\sigma)$:}

Create a new circuit ${\cal U}$ on $\max(q({\cal Q}),q({\cal W}))$
qubits which is made of first applying ${\cal W}$ (on the MSB $q({\cal W})$
qubits), and then applying ${\cal Q}$. However, for $i=1,\dots,q({\cal W})$
qubit $\sigma(i)$ in ${\cal W}$'s output is wired to qubit $i$
in ${\cal Q}$. Mathematically, the operation is ${\cal U}=\mathcal{S}_{\sigma^{-1}}\mathcal{Q}\mathcal{S}_{\sigma}\mathcal{W}$.
However, no actual SWAP gates are used. We note that this is exactly
the ``compose'' operation in QISKIT.

\subsection{\label{subsec:add-gate}Adding a gate:\protect \\
${\cal U}\gets\textrm{qmsla.qc\_add\_gate(\ensuremath{\mathcal{Q}}, \ensuremath{{\cal G}}, \ensuremath{\delta})}$}

The $\textrm{qmsla.qc\_add\_gate}$\_function constructs a new circuit
by composing an existing circuit ${\cal Q}$ with the gate ${\cal G}$,
which is applied the qubits of ${\cal Q}$ whose indices appear in
the qubit list $\delta$. The gate is added at the end of the circuits,
on the output of ${\cal Q}$, where the wires in $\delta$ outgoing
from ${\cal Q}$ are connected to ${\cal G}$ using the ordering in
$\delta$. Mathematically, the operation is ${\cal U}={\cal G}_{\delta}{\cal Q}$,
where ${\cal G}_{\delta}$ denotes the circuit on $q({\cal Q})$ qubits
that consist solely of ${\cal G}$ applied to the qubits listed in
$\delta$.

\subsection{\label{subsec:transpose-conjugate-inverse}Transpose, conjugate,
and inverse of a circuit:\protect \\
${\cal U}^{\protect\T}\gets\textrm{qmsla.qc\_transpose}({\cal U})$,
$\overline{{\cal U}}\gets\textrm{qmsla.qc\_conjugate}({\cal U})$,\protect \\
${\cal U}^{\protect\conj}\gets\textrm{qmsla.qc\_inverse}({\cal U})$}

The operations $\textrm{qmsla.qc\_transpose}$, $\textrm{qmsla.qc\_conjugate}$,
$\textrm{qmsla.qc\_inverse}$ implement transpose of the unitary,
conjugate of the unitary, and the combination of both to obtain the
inverse (adjoint) of the unitary (respectively). To obtain the inverse/adjoint
circuit ${\cal U}^{-1}$ from a given circuit ${\cal U}$, reverse
the order of gates and conjugate each gate. To create a conjugated
circuit $\overline{{\cal U}}$ from a given circuit ${\cal U}$, conjugate
each gate without changing the order. To create a transposed circuit
${\cal U}^{\T}$ from a circuit ${\cal U}$, we simply invert and
conjugate. The inverse operation is typically implemented in most
quantum computing frameworks (e.g., the ``inverse'' function in
QISKIT), while the transpose and conjugate operations are relatively
straightforward to implement in most frameworks, based on the description
above. We summarize in the following lemma.
\begin{lem}
~
\begin{enumerate}
\item Given a classical description for a circuit ${\cal U}$ with $n$
gates we can create a classical description of a circuit ${\cal U}^{\T}$
for which $\matM({\cal U}^{\T})=\matM({\cal U})^{\T}$ in $O(n)$
operations. The depth and number of gates in the circuit are the same
as for ${\cal U}$.
\item Given a classical description for a circuit ${\cal U}$ with $n$
gates we can create a classical description of a circuit $\overline{{\cal U}}$
for which $\matM(\overline{{\cal U}})=\overline{\matM({\cal U})}$
in $O(n)$ operations. The depth and number of gates in the circuit
are the same as for ${\cal U}$.
\item Given a classical description for a circuit ${\cal U}$ with $n$
gates we can create a classical description of a circuit ${\cal U}^{\conj}$
for which $\matM({\cal U}^{\conj})=\matM({\cal U})^{\conj}$ in $O(n)$
operations. The depth and number of gates in the circuit are the same
as for ${\cal U}$.
\end{enumerate}
\end{lem}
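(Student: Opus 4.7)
The plan is to establish all three claims by direct gate-level constructions, relying only on the elementary identities $\overline{\matA\matB}=\overline{\matA}\,\overline{\matB}$ and $(\matA\matB)^{\conj}=\matB^{\conj}\matA^{\conj}$, together with the observation that every single-gate unitary in the circuit acts on $O(1)$ qubits and is therefore of bounded size. Consequently each per-gate manipulation (entrywise conjugation, adjoint, order reversal) is an $O(1)$ operation, and since the constructions do not introduce or delete any gates and do not alter layering, the claimed preservation of depth and gate count is immediate.

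For claim (2), the conjugate $\overline{{\cal U}}$, I would iterate through the gates of ${\cal U}$ in order. For each gate ${\cal G}$ applied on a qubit subset $\delta$, I would append to $\overline{{\cal U}}$ the entrywise-conjugated gate $\overline{{\cal G}}$ acting on the same $\delta$, using $\textrm{qmsla.qc\_add\_gate}$. Writing ${\cal U}$ as a product of layers (each of which is itself a tensor product of single gates and identities on the remaining qubits) and using $\overline{\matA\otimes\matB}=\overline{\matA}\otimes\overline{\matB}$ together with $\overline{\matA\matB}=\overline{\matA}\,\overline{\matB}$ extended by induction, one obtains $\matM(\overline{{\cal U}})=\overline{\matM({\cal U})}$. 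The cost is $O(n)$ and the depth and gate count are identical to those of ${\cal U}$ by construction.

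For claim (3), the inverse ${\cal U}^{\conj}$, I would traverse the gates of ${\cal U}$ in reverse order and append each adjoint gate ${\cal G}^{\conj}$ (computed in $O(1)$ time per gate since each ${\cal G}$ has bounded size) on the same qubits $\delta$. Correctness follows from $(\matM_1\matM_2\cdots\matM_n)^{\conj}=\matM_n^{\conj}\cdots\matM_2^{\conj}\matM_1^{\conj}$ applied across layers. Again, the cost is $O(n)$, and depth and gate count are preserved because reversal is layer-wise. For claim (1), the transpose, I would use $\matA^{\T}=\overline{\matA^{\conj}}$, composing the previous two constructions (in either order, since complex conjugation and product-reversal commute). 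Two $O(n)$-cost steps that each preserve depth and gate count compose to yield an $O(n)$-cost construction preserving the same.

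The only mild subtlety, rather than a genuine obstacle, is justifying that $\overline{{\cal G}}$ and ${\cal G}^{\conj}$ can indeed be produced as valid entries of the classical circuit description in constant time and that they exist within whatever gate set we permit. This is addressed by observing that we may take the native gate set to consist of arbitrary unitaries on $O(1)$ qubits stored as small matrices, in which case both operations are literally entrywise transformations on matrices of bounded dimension. A secondary, purely bookkeeping, remark is that when a layer of ${\cal U}$ contains multiple gates acting on disjoint qubits in parallel, these remain in the same layer after conjugation and after reversal (reversal of the whole layer sequence, not within a layer), so the depth is truly unchanged.
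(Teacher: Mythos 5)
Your construction is exactly the one the paper uses: conjugate each gate in place for $\overline{{\cal U}}$, reverse the gate order and adjoint each gate for ${\cal U}^{\conj}$, and compose the two for ${\cal U}^{\T}$; the paper states this only informally before the lemma, and your proposal fills in the same per-gate and layer-wise justifications. Correct, and essentially the same approach.
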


\subsection{\label{subsec:premute_bits}Efficient qubit rearrangement:\protect \\
${\cal U}\gets\textrm{qmsla.qc\_premute\_bits}({\cal W},\sigma)$}

Given a circuit ${\cal W}$ and qubit permutation $\text{\ensuremath{\sigma}}$,
the goal of $\text{qmlsa.qc\_permute\_bits }$ is to return a circuit
that applies the qubit permutation $\sigma$ on the output of ${\cal W}$,
e.g. return ${\cal S}_{\sigma}{\cal W}$. However, for our purposes,
it suffices that this holds only when the input is the ground state.
That is, $\textrm{qmsla.qc\_premute\_bits}({\cal W},\sigma)$ returns
a circuit ${\cal U}$ such that ${\cal U}\Ket 0_{q({\cal W})}={\cal S}_{\sigma}{\cal W}\Ket 0_{q({\cal W})}$.
Since we do require equality for all input state, just for the ground
state, we can implement ${\cal U}$ without any SWAP gates, and with
the same depth as ${\cal U}.$ A pseudocode description appears is
provided in Algorithm~\ref{alg:qc_premute_bits}. We remark $\textrm{qmsla.qc\_premute\_bits}$
is different from other level 0 operations in that it utilizes level
0 qMSLA operations, yet it does not qualify for level 1 operation
due to its lack of clear linear algebraic interpretation.

\begin{algorithm}[tph]
\begin{algorithmic}[1]

\STATE \textbf{Input: }Classical description of circuit ${\cal W}$
and permutation $\sigma$

\STATE 

\STATE ${\cal W}^{-1}\gets\textrm{qmsla.qc\_inverse}({\cal W})$

\STATE ${\cal I}\gets\textrm{qmsla.qc\_empty}(q({\cal W}))$

\STATE ${\cal Q}\gets\textrm{qmsla.qc\_compose}({\cal I},{\cal W}^{-1},\sigma^{-1})$

\STATE $\mathcal{U}\gets\textrm{qmsla.qc\_inverse}({\cal Q})$

\RETURN $\mathcal{U}$

\end{algorithmic}

\caption{\label{alg:qc_premute_bits}$\textrm{qmsla.qc\_premute\_bits}$}
\end{algorithm}

Mathematically, we can verify that $\textrm{qmsla.qc\_premute\_bits}$
indeed achieves its desired goals:
\begin{align}
\mathcal{U}\Ket 0_{q} & ={\cal Q}^{-1}\Ket 0_{q}\nonumber \\
 & =(\calS_{\sigma_{1}}{\cal W}^{-1}\calS_{\sigma_{1}^{-1}}{\cal I})^{-1}\Ket 0_{q}\nonumber \\
 & =(\calS_{\sigma_{1}}({\cal S}_{\sigma_{1}}{\cal W})^{-1}{\cal I})^{-1}\Ket 0_{q}\nonumber \\
 & ={\cal I}^{-1}{\cal S}_{\sigma_{1}}{\cal W}\calS_{\sigma_{1}^{-1}}\Ket 0_{q}\nonumber \\
 & ={\cal S}_{\sigma_{1}}{\cal W}\Ket 0_{q}\label{eq:premute_bits}
\end{align}
where we used the fact that for any permutation $\sigma$ we have
that $\calS_{\sigma}\Ket 0_{q}=\Ket 0_{q}$. This demonstrates that
the output circuit ${\cal U}$ applies the permutation circuit ${\cal S}_{\sigma}$
to the original circuit ${\cal Q}$ on the ground state, as intended.

\paragraph{Relationship with \textsc{EliminatePermutations.}}

qMSLA operation $\textrm{qmsla.qc\_premute\_bits}$ is key in applying
the \textsc{EliminatePermutations }process\textsc{. }Suppose that
for some output state preparation circuit ${\cal U}_{\matX}$ we have
identified a circuit ${\cal W}$ and two permutations $\sigma_{1},\sigma_{2}$
such that $\mathcal{U}_{\matX}={\cal S}_{\sigma_{1}}{\cal W}{\cal S}_{\sigma_{2}}$.
Then ${\cal U}_{\matX}^{'}=\text{qc\_permute\_bits}({\cal W},\sigma_{1})$
is also a state preparation circuit for $\matX$. This can be verified
as follows:
\begin{align*}
\mathcal{U}_{\matX}^{'}\Ket 0_{q} & ={\cal S}_{\sigma_{1}}{\cal Q}\Ket 0_{q}\\
 & ={\cal S}_{\sigma_{1}}{\cal Q}{\cal S}_{\sigma_{2}}\Ket 0_{q}\\
 & =\mathcal{U}_{\matX}\Ket 0_{q}
\end{align*}
In main text of the paper, wherever we write \textsc{EliminatePermutations}
(in circuit diagrams and text itself) we mean that when preparing
that pseudocode for Appendix \ref{sec:Pseudo-Code-of-level1} we have
identified the permutations and applied $\text{qc\_permute\_bits}$.

\section{\label{sec:Pseudo-Code-of-level1}Implementation details for level
1 qMSLA}

In this section, we give implementation details (pseudocode) for all
level 1 qMSLA operations.

Pseudocodes are provided in Algorithms \ref{alg:Pseudo-code-of-operations}
and \ref{alg:Pseudo-code-of-operations-2}. However, to understand
the pseudocode we need first to discuss how matrix state preparation
circuits are represented therein. A state preparation circuit is a
structure with three fields: the circuit (${\cal U}$), the number
of rows in the prepared matrix ($m$), and number of columns ($n$).
In the pseudocode, a state preparation circuit is constructed using
the constructor $\text{MatrixStatePreperation}({\cal U},m,n)$. The
various fields of a state preparation structure ${\cal U}_{\matA}$
are accesses in the pseudocode as follows: ${\cal U}_{\matA}.m$ for
number of rows, ${\cal U}_{\matA}.n$ for the number of columns, and
${\cal U}_{\matA}$ for the circuit itself. Pseudocodes also use two
logical accessors: ${\cal U}_{\matA}.\text{creg}$ and ${\cal U}_{\matA}.\text{rreg}$.
${\cal U}_{\matA}.\text{creg}$ returns the list of qubits that correspond
to the column register, ${\cal U}_{\matA}.\text{rreg}$ returns the
list of qubits that correspond to the row register.

If qubits are provided as a numerical list, we use colon notation:
$0:n$ represents elements from $0$ to $n-1$ (inclusive), and $m:n$
represents elements from $m$ to $n-1$ (inclusive). Concatenating
lists is denoted by the plus operator. The $\text{shift}$ operator
shifts the indices in a list by a specified number.

\begin{algorithm}[tph]
\begin{algorithmic}[1]

\STATE \textbf{\uline{qmsla.identity(\mbox{$n$}):}}

\STATE $q\gets\log_{2}n$

\STATE ${\cal I}_{2q}\gets\textrm{\textrm{qmsla.qc\_empty(\ensuremath{2q})}}$

\STATE ${\cal U}\gets\textrm{\textrm{qmsla.qc\_add\_gate(\ensuremath{{\cal I}_{2q}}, \ensuremath{\calH}, \ensuremath{0:q})}}$\COMMENT{Add Hadamard gates on each qubit of the columns register}

\STATE ${\cal U}\gets\text{qmsla.qc\_add\_gate}({\cal U},\text{CNOT},\textrm{\textrm{\ensuremath{0:q}, \ensuremath{q:2q})}}$\COMMENT{Add CNOT gates on row qubit controlled by corresponding column qubits}

\RETURN $\textrm{\textrm{MatrixStatePreparation}}({\cal U},n,n)$

~

\setcounter{ALC@line}{0}

\STATE \textbf{\uline{qmsla.matrix(\mbox{${\cal U}$}):}}

\STATE $q\gets q({\cal U})$

\STATE ${\cal U}_{\matI_{2^{q}}}\gets\textrm{\textrm{qmsla.identity}(\ensuremath{2q({\cal U})})}$

\STATE ${\cal U}_{\matM({\cal U})}\gets\textrm{qmsla.qc\_compose(\ensuremath{{\cal U}_{\matI_{2^{q({\cal U})}}}},\,\ensuremath{{\cal U}_{\matI_{2^{q({\cal U})}}}}.\textrm{rreg})}$

\RETURN $\textrm{\textrm{MatrixStatePreparation}}({\cal U}_{\matM({\cal U})},2^{q},2^{q})$

~

\setcounter{ALC@line}{0}

\STATE \textbf{\uline{qmsla.conjugate(\mbox{${\cal U_{\matA}}$}):}}

\STATE ${\cal U}_{\overline{\matA}}\gets\textrm{qmsla.qc\_conjugate(\ensuremath{{\cal U}_{\matA}})}$

\RETURN $\textrm{\textrm{MatrixStatePreparation}}({\cal U}_{\overline{\matA}},{\cal U}_{\matA}.m,{\cal U}_{\matA}.n)$

~

\setcounter{ALC@line}{0}

\STATE \textbf{\uline{qmsla.transpose(\mbox{${\cal U}_{\matA}$}):}}

\STATE $\ensuremath{\sigma_{\T}}\gets\textrm{\ensuremath{{\cal U}_{\matA}}.creg+\ensuremath{{\cal U}_{\matA}}.rreg}$

\STATE ${\cal U}_{\matA^{\T}}\gets\textrm{\textrm{qmsla.qc\_permute\_bits}(\ensuremath{{\cal U}_{\matA}},\,\ensuremath{\sigma_{\T}})}$

\RETURN $\textrm{\textrm{MatrixStatePreparation}}({\cal U}_{\matA^{\T}},{\cal U}_{\matA}.n,{\cal U}_{\matA}.m)$

~

\setcounter{ALC@line}{0}

\STATE \textbf{\uline{qmsla.vec(\mbox{${\cal U}_{\matA}$}):}}

\RETURN $\textrm{\textrm{MatrixStatePreparation}}({\cal U}_{\matA},{\cal U}_{\matA}.m\cdot{\cal U}_{\matA}.n,1)$

\textbf{}

\end{algorithmic}

\caption{\label{alg:Pseudo-code-of-operations}Pseudocode for all level-1 qMSLA
operations (part 1)}
\end{algorithm}

\begin{algorithm}[tph]
\begin{algorithmic}[1]

\STATE \textbf{\uline{qmsla.pad\_zero\_columns(\mbox{${\cal U}_{\matA}$},
\mbox{$k$}):}}

\STATE ${\cal I}_{k+q({\cal U}_{\matA})}\gets\textrm{\textrm{qmsla.qc\_empty}}(k+q({\cal U}_{\matA}))$

\STATE ${\cal U}_{\matA\oplus0_{0\times(2^{k}-1)n}}\gets\textrm{\textrm{\textrm{qmsla.qc\_compose}(\ensuremath{{\cal I}_{k+q({\cal U}_{\matA})}},\,\ensuremath{{\cal U}_{\matA}},\,\ensuremath{k:(\log_{2}mn+k)})}}$

\RETURN $\textrm{\textrm{MatrixStatePreparation}}({\cal U}_{\matA\oplus0_{0\times(2^{k}-1)n}},{\cal U}_{\matA}.m,(2^{k}-1)\cdot{\cal U}_{\matA}.n)$

~

\setcounter{ALC@line}{0}

\STATE \textbf{\uline{qmsla.matrix\_vec(\mbox{${\cal U_{\matA}}$},
\mbox{${\cal U}_{\b}$}):}}

\STATE ${\cal U}_{\b}^{\T}\gets\textrm{\textrm{qmsla.qc\_transpose}}({\cal U}_{\b})$

\STATE ${\cal W}\gets\text{\textrm{qmsla.qc\_compose}(\ensuremath{{\cal U}_{\matA}},\,\ensuremath{{\cal U}_{\b}^{\T}},\,\ensuremath{{\cal U}_{\matA}}.\textrm{creg})})$

\STATE ${\cal U}_{\frac{\matA\b}{\FNorm{\matA}\TNorm{\b}}\oplus\vxi}\gets\textrm{qmsla.vec}(\text{MatrixStatePreperation}({\cal W},{\cal U}_{\matA}.m,{\cal U}_{\matA}.n))$

\RETURN ${\cal U}_{\frac{\matA\b}{\FNorm{\matA}\TNorm{\b}}\oplus\vxi}$

~

\setcounter{ALC@line}{0}

\STATE \textbf{\uline{qmsla.kronecker(\mbox{${\cal U}_{\matA}$},
\mbox{${\cal U}_{\matB}$}):}}

\STATE $\sigma_{\otimes}\gets\textrm{\ensuremath{{\cal U}_{\matA}}.rreg+\text{shift}(\ensuremath{{\cal U}_{\matB}}.rreg,\,\ensuremath{q_{\matA}})+\ensuremath{{\cal U}_{\matA}}.creg+\text{shift}(\ensuremath{{\cal U}_{\matB}}.creg},\,q_{\matA})$

\STATE ${\cal U}\gets\textrm{qmsla.qc\_tensor(\ensuremath{{\cal U}_{\matA}}, \ensuremath{{\cal U}_{\matB}})}$

\STATE ${\cal U}\gets\textrm{\textrm{qmsla.qc\_permute\_bits}}({\cal U},\sigma_{\otimes})$

\RETURN $\textrm{MatrixStatePreparation}({\cal U},{\cal U}_{\matA}.m\cdot{\cal U}_{\matB}.m,{\cal U}_{\matA}.n\cdot{\cal U}_{\matB}.n$)

~

\setcounter{ALC@line}{0}

\STATE \textbf{\uline{qmsla.kronecker(\mbox{${\cal U}_{\matA_{1}},\cdots,{\cal U}_{\matA_{k}}$}):}}\textbf{
}(Assume $k\ge2$)

\STATE ${\cal U}\gets\textrm{\textrm{qmsla.qc\_empty}}(\sum_{1}^{k}q({\cal U}_{i}))$

\FOR{$i=1$ {\bf to} $i=k$} 

\STATE ${\cal U}\gets\textrm{qmsla.qc\_compose(}{\cal U},{\cal U}_{\matA_{i}},\sum_{j=1}^{i-1}\log_{2}m_{j}n_{j}:\sum_{j=0}^{i}\log_{2}m_{j}n_{j})$

\ENDFOR 

\STATE {\scriptsize{}$\sigma_{\otimes}\gets\textrm{\ensuremath{{\cal U}_{\matA_{1}}}.rreg + \text{shift}(\ensuremath{{\cal U}_{\matA_{2}}}.rreg,\,\ensuremath{q_{\matA_{1}}}) +\ensuremath{\cdots}+ \text{shift}(\ensuremath{{\cal U}_{\matA_{k}}}.rreg,\,\ensuremath{\sum_{i=1}^{k-1}}\ensuremath{q_{\matA_{i}}}) +\ensuremath{{\cal U}_{\matA_{1}}}.creg + \text{shift}(\ensuremath{{\cal U}_{\matA_{2}}}.creg,\,\ensuremath{q_{\matA_{1}}}) +\ensuremath{\cdots}+\text{shift}(\ensuremath{{\cal U}_{\matA_{k}}}.creg,\,\ensuremath{\sum_{i=1}^{k-1}}\ensuremath{q_{\matA_{i}}})}$}{\scriptsize\par}

\STATE ${\cal U}_{\matA_{1}\otimes\matA_{2}\otimes\cdots\otimes\matA_{k}}\gets\textrm{\textrm{qmsla.qc\_permute\_bits}}({\cal U},\sigma_{\otimes})$

\RETURN $\text{MatrixStatePreperation}({\cal U}_{\matA_{1}\otimes\matA_{2}\otimes\cdots\otimes\matA_{k}},\prod_{i=1}^{k}{\cal U}_{\matA_{i}}.m,\prod_{i=1}^{k}{\cal U}_{\matA_{i}}.n)$

~

\textbf{}

\end{algorithmic}

\caption{\label{alg:Pseudo-code-of-operations-2}Pseudocode for all level-1
qMSLA operations (part 2)}
\end{algorithm}

\FloatBarrier

\section{\label{sec:optimized-mvtp}A more efficient \noun{MVTracePrep}}

In this section, we propose \textsc{\noun{MVTracePrep}}\textsc{Optimized}
- a more efficient variant of \noun{MVTracePrep} than the variant
in the main text. \noun{MVTracePrep} translates Theorem \ref{thm:main}
directly into a circuit via the series of qMSLA operations. This showcases
the appeal of qMSLA's approach. Each qMSLA operation is efficient,
with cost that is proportional to the size of the output (i.e. $O(g(\text{output}))$),
so the overall cost is is rather attractive: $O(\log k\sum_{i=2}^{2k}g_{\matA_{i}})$.
Nevertheless, careful inspection of the final output reveals that
a more efficient construction is possible, one that costs only $O(\sum_{i=2}^{2k}g_{\matA_{i}})$,
though it is no longer a straightforward translation of Theorem \ref{thm:main}.

The key observation is that if we do not apply \noun{EliminatePermutation}
at all, the output circuit ${\cal U}_{\vphi}$ has a very special
structure. First, there is a layer which consists of the tensor product
of a subset of the circuits $\{{\cal U}_{\matA_{i}}\}$. Then there
is a series of SWAPs that implements a permutation. Finally, there
is another layer which consists of the tensor product of another subset
of the circuits $\{{\cal U}_{\matA_{i}}\}$. See Figure~\ref{fig:main-alg}
for an illustration. Thus, to build the circuit, we can build the
first layer, build the last layer, compute the middle permutation,
and finally compose the first layer with the final layer with the
qubit permutation in a gate efficient manner via $\text{qmsla.qc\_compose}$.

\textsc{\noun{MVTracePrep}}\textsc{Optimized }(Algorithm\textsc{~\ref{alg:TraceProdPrepOptimized}})
does exactly this. The main challenge is in finding the qubit permutation
of the SWAP layer. \textsc{\noun{MVTracePrep}}\textsc{Optimized} finds
the permutation by tracking the execution of the various qMSLA operations
in \noun{MVTracePrep} logically, i.e. computing only the qubit permutation
they induce. See Algorithm~\ref{alg:GeneratePermutation} for pseudocode
for implementing this. \noun{MVTracePrepOptimized} uses Algorithm
~\ref{alg:GeneratePermutation} as a subroutine.

\begin{algorithm}[tph]
\begin{algorithmic}[1]

\STATE \textbf{Input: }List of reindexed qubit registers from 0 to
total number of qubits: $[\textrm{rreg}_{1},\textrm{creg}_{1},\cdots,\textrm{rreg}_{2k},\textrm{creg}_{2k}]$

\STATE $p\gets k+1$

\IF{$k$ is odd} 

\STATE $l_{\textrm{even}}\gets(k+1)/2$ and $l_{\textrm{odd}}\gets(k-1)/2$

\STATE $\tau_{\matF_{\textrm{even}}^{(1)}}\gets\textrm{\{creg=\textrm{rreg}\ensuremath{{}_{p}}, rreg=\ensuremath{\textrm{creg}_{p}}\}}$\COMMENT{Permutation for $\matE_{(k+1)/2,k}=\matA_{p}^{\T}$}

\STATE $\tau_{\matF_{\textrm{odd}}^{(1)}}\gets\textrm{\{creg=\ensuremath{\textrm{creg}_{p+1}}+\textrm{rreg}\ensuremath{{}_{p-1}}, rreg=\ensuremath{\ensuremath{\textrm{rreg}_{p+1}+\textrm{creg}_{p-1}}}\}}$\COMMENT{Permutation for $\matO_{i,k}=\matA_{p+1}\otimes\matA_{p-1}^{\T}$}

\ELSE 

\STATE $l_{\textrm{even}}\gets k/2$ and $l_{\textrm{odd}}\gets k/2$

\STATE $\tau_{\matF_{\textrm{even}}^{(1)}}\gets\textrm{\{creg=\ensuremath{\textrm{creg}_{p+1}}+\textrm{rreg}\ensuremath{{}_{p-1}}, rreg=\ensuremath{\ensuremath{\textrm{rreg}_{p+1}+\textrm{creg}_{p-1}}}\}}$\COMMENT{Permutation for $\matE_{i,k}=\matA_{p+1}\otimes\matA_{p-1}^{\T}$}

\STATE $\tau_{\matF_{\textrm{odd}}^{(1)}}\gets\textrm{\{creg=\textrm{rreg}\ensuremath{{}_{p}}, rreg=\ensuremath{\textrm{creg}_{p}}\}}$\COMMENT{Permutation for $\matO_{k/2,k}=\matA_{p}^{\T}$}

\ENDIF 

\FOR{$i=2$ {\bf to} $i=2k-p$} 

\IF{$p+r$ is odd} 

\STATE $\tau_{\matO_{\textrm{i,k}}}\gets\textrm{\{creg=\ensuremath{\textrm{creg}_{p+i}}+\textrm{rreg}\ensuremath{{}_{p-i}}, rreg=\ensuremath{\ensuremath{\textrm{rreg}_{p+i}+\textrm{creg}_{p-i}}}\}}$

\STATE ${\cal \tau}_{\matF_{\textrm{odd}}^{(i)}}\gets\textrm{\{creg=\ensuremath{\tau_{\matO_{\textrm{i,k}}}\textrm{.creg}}, rreg=\ensuremath{\tau_{\matO_{\textrm{i,k}}}\textrm{.rreg}} +\ensuremath{{\cal \tau}_{\matF_{\textrm{odd}}^{(i-1)}}}.creg +\ensuremath{{\cal \tau}_{\matF_{\textrm{odd}}^{(i-1)}}}.rreg\}}$

\ELSE 

\STATE $\tau_{\matE_{\textrm{i,k}}}\gets\textrm{\{creg=\ensuremath{\textrm{creg}_{p+i}}+\textrm{rreg}\ensuremath{{}_{p-i}}, rreg=\ensuremath{\ensuremath{\textrm{rreg}_{p+i}+\textrm{creg}_{p-i}}}\}}$

\STATE ${\cal \tau}_{\matF_{\textrm{even}}^{(i)}}\gets\textrm{\{creg=\ensuremath{\tau_{\matE_{\textrm{i,k}}}\textrm{.creg}}, rreg=\ensuremath{\tau_{\matE_{\textrm{i,k}}}\textrm{.rreg}} +\ensuremath{{\cal \tau}_{\matF_{\textrm{even}}^{(i-1)}}}.creg +\ensuremath{{\cal \tau}_{\matF_{\textrm{even}}^{(i-1)}}}.rreg\}}$

\ENDIF 

\ENDFOR 

\STATE $\sigma_{\textrm{even}}^{-1}\gets\textrm{\ensuremath{\matF_{\textrm{even}}^{(l_{\textrm{even}})}}.creg + \ensuremath{\matF_{\textrm{even}}^{(l_{\textrm{even}})}}.rreg}$

\STATE $\sigma_{\textrm{odd}}^{-1}\gets\textrm{ \ensuremath{\matF_{\textrm{odd}}^{(l_{\textrm{odd}})}}.creg + \ensuremath{\matF_{\textrm{odd}}^{(l_{\textrm{odd}})}}.rreg}$

\RETURN $\sigma\coloneqq\sigma_{\textrm{even}}\sigma_{\textrm{odd}}^{-1}$

\end{algorithmic}

\caption{\label{alg:GeneratePermutation}\noun{Generate}\textsc{\noun{MVTracePrep}}\noun{Permutation}}
\end{algorithm}

\begin{algorithm}[tph]
\begin{algorithmic}[1]

\STATE \textbf{Input: }Classical description of the circuits ${\cal U}_{\matA_{1}},\cdots{\cal U}_{\matA_{2k}}$where
$\matA_{i}\in\C^{m_{i}\times n_{i}}$

\STATE 

\STATE \COMMENT{Constructing ${\cal U}_{\psi}$} 

\STATE ${\cal U}_{\overline{\matA}_{1}}\gets\textrm{qmsla.conjugate}({\cal U}_{\matA_{1}})$

\STATE ${\cal U}_{\vpsi}\gets\textrm{qmsla.pad}(\textrm{qmsla.vec}({\cal U}_{\overline{\matA}_{1}}),\log_{2}n_{3}n_{4}\cdots n_{2k},0)$

\STATE 

\STATE \COMMENT{Constructing ${\cal U}_{\phi}$} 

\STATE  $\textrm{all\_regs}\gets[]$

\STATE  $\textrm{idx}=0$

\FOR{$i=1$ {\bf to} $i=2k$} 

\STATE  $\textrm{reg\_i}\gets\textrm{idx}+[{\cal U}_{\matA_{i}}.\textrm{rreg}+{\cal U}_{\matA_{i}}.\textrm{creg}]$
\COMMENT{concat qubits list and add idx for each element in the list}

\STATE  $\textrm{idx}\gets\textrm{idx}+\textrm{size(reg\_i)}$

\STATE  $\textrm{all\_regs}\gets\textrm{all\_regs.append(reg\_i)}$

\ENDFOR 

\STATE $\sigma\gets\textrm{GenerateMVTracePrepPermutation(all\_regs)}$

\STATE ${\cal U_{\textrm{even}}}\gets{\cal U}_{\matA_{2}}$

\STATE ${\cal U}_{\textrm{odd}}\gets{\cal U}_{\matA_{3}}^{-1}$

\FOR{$i=2$ {\bf to} $i=k$} 

\STATE ${\cal U_{\textrm{even}}}\gets\textrm{qmsla.qc\_tensor({\cal \ensuremath{{\cal U}_{\textrm{even}}}}, \ensuremath{{\cal U}_{\matA_{2i}}})}$

\IF{$2i$ < $2k-2i$} 

\STATE ${\cal U_{\textrm{even}}}\gets\textrm{qmsla.qc\_tensor({\cal \ensuremath{{\cal U}_{\textrm{even}}}}, \ensuremath{{\cal U}_{\matA_{2k-2i}}})}$

\ENDIF 

~

\STATE ${\cal U}_{\textrm{odd}}\gets\textrm{qmsla.qc\_tensor(\ensuremath{{\cal U}_{\textrm{odd}}}, \ensuremath{{\cal U}_{\matA_{2i+1}}^{-1}})}$

\IF{$2i$ < $2k-2i$} 

\STATE ${\cal U}_{\textrm{odd}}\gets\textrm{qmsla.qc\_tensor(\ensuremath{{\cal U}_{\textrm{odd}}}, \ensuremath{{\cal U}_{\matA_{2k-2i-1}}^{-1}})}$

\ENDIF 

\ENDFOR 

\STATE ${\cal U}_{\vphi}\gets\textrm{qmsla.compose(\ensuremath{{\cal U}_{\textrm{even}}}, \ensuremath{{\cal U}_{\textrm{odd}}}, \ensuremath{\sigma})}$

\RETURN ${\cal U}_{\vpsi},{\cal U}_{\vphi}$

\end{algorithmic}

\caption{\textsc{\label{alg:TraceProdPrepOptimized}}\textsc{\noun{MVTracePrep}}\textsc{Optimized}}
\end{algorithm}

\FloatBarrier

\end{document}